\theoremstyle{plain}
\newtheorem{proposition}{Proposition}
\newtheorem{lemma}{Lemma}
\newtheorem{theorem}{Theorem}
\newtheorem{assumption}{Assumption}
\newtheorem{definition}{Definition}
\newtheorem{remark}{Remark}
\def\bmg{{\bm g}}
\def\bml{{\bm l}}
\def\bmn{{\bm n}}
\def\bmm{{\bm m}}
\def\bmA{{\bm A}}
\def\bmB{{\bm B}}
\def\bmC{{\bm C}}
\def\bmxi{{\bm \xi}}
\def\bmpartial{{\bm \partial}}
\newcounter{mnotecount}
\newcommand{\mnotex}[1]
{\protect{\stepcounter{mnotecount}}$^{\mbox{\footnotesize $\bullet$\themnotecount}}$ 
\marginpar{
\raggedright\tiny\em
$\!\!\!\!\!\!\,\bullet$\themnotecount: #1} }
\def\a{{\alpha}}
\def\be{{\beta}}
\def\ga{\gamma}
\def\ep{\epsilon}
\def\ka{\kappa}
\def\la{\lambda}
\def\si{\sigma}
\def\ze{\zeta}
\def\ka{\kappa}
\def\al{\alpha}
\def\rh{{\rho}}
\def\io{{\iota}}
\def\MM{{\mathcal{M}}}
\def\NN{{\mathcal{N}}}
\def\mcS{{\mathcal{S}}}
\def\UU{{\mathcal{U}}}
\def\VV{{\mathcal{V}}}
\def\scri{{\mathscr{I}^+}}
\def\edt{{\eth}}
\newcommand{\nr}{\nonumber}
\def\ni{\noindent}
\begin{document}

\title{\textbf{An asymptotic characterisation of the Kerr spacetime}}

\author[1]{R. Sansom\footnote{E-mail address:{\tt
      r.sansom@qmul.ac.uk}}} \author[,1]{J. A. Valiente
  Kroon \footnote{E-mail address:{\tt j.a.valiente-kroon@qmul.ac.uk}}}

\affil[1]{School of Mathematical Sciences,
  Queen Mary, University of London, Mile End Road, London E1 4NS,
  United Kingdom.}

\maketitle

\begin{abstract}
\ni We provide a characterisation of the Kerr spacetime close to future null infinity using the asymptotic characteristic initial value problem in a conformally compactified spacetime. Stewart's gauge is used to set up the past-oriented characteristic initial value problem. By a theorem of M. Mars characterising the Kerr spacetime, we provide conditions for the existence of an asymptotically timelike Killing vector on the development of the initial data by demanding that the spacetime is endowed with a Killing spinor. The conditions on the characteristic initial data ensuring the existence of a Killing spinor are, in turn, analysed. Finally, we write the conditions on the initial data in terms of the free data in the characteristic initial value problem. As a result, we characterise the Kerr spacetime using only a section of future null infinity and its intersection with an outgoing null hypersurface.
\end{abstract}

\tableofcontents

\section{Introduction}

The Kerr family of solutions to the Einstein vacuum equations 
\begin{equation}\label{eq:EVE}
\mathbf{Ric}[\tilde\bmg] =0,
\end{equation}
where $\tilde\bmg$ is a 4-dimensional Lorentzian metric, describes a stationary, axisymmetric rotating black hole. In the standard Boyer-Lindquist coordiantes $(t,r,\theta,\varphi)$ the metric takes the form
\begin{align}
\begin{aligned}\label{eq:KerrBL}
\tilde\bmg_{Kerr} =&  -(1-\frac{2mr}{\rho^2})dt^2+\frac{\rho^2}{\Sigma}dr^2+\rho^2d\theta^2+\left(r^2+a^2+\frac{2m ra^2\sin^2\theta}{\rho^2}\right)\sin^2\theta d\varphi^2\\
&-\frac{4m ra\sin^2\theta}{\rho^2}dtd\varphi
\end{aligned}
\end{align}
where 
\[
\rho^2\equiv r^2+a^2\cos^2\theta, \qquad \Sigma\equiv r^2-2m r+a^2,
\]
$m$ and $a$ are Kerr parameters relating to the mass and angular momentum, respectively. Its stationarity and axisymmetry is captured by the notion of \emph{Killing vectors} which represent the symmetries of the spacetime. By using suitable coordinates, such as the Boyer-Lindquist coordinates, the Killing vectors of the Kerr spacetime can be read from the metric. Inspecting \eqref{eq:KerrBL}, it is seen that the components of the metric are independent of the coordinates $t$ and $\varphi$. The Killing vector field corresponding to stationarity is $\bmpartial_t$ and to axisymmetry is $\bmpartial_\phi$. Moreover, the Kerr spacetime admits so-called ``hidden symmetries'' that cannot be captured by Killing vectors. These ``hidden symmetries'' can be conveniently described using \emph{Killing spinors} which can be thought of as being more fundamental than the Killing vectors. In a series of papers by Bäckdahl and Valiente Kroon \cite{BaeVal10b,BaeVal11b,BaeVal12} the authors showed that a vacuum spactime admitting a Killing spinor, along with suitable conditions on the Weyl curvature and asymptotics, must be isometric to the Kerr spacetime. This characterisation of the Kerr spacetime is based, in turn, on a characterisation given by Mars \cite{Mar99,Mar00} which exploits the structure of the Weyl tensor and its relation to the Killing vectors of Kerr via the \emph{Mars-Simon tensor} ---see \cite{MarPaeSenSim16,Sim84a}. These results have then been generalised to the Kerr-Newman class of spacetimes in e.g. \cite{Won09,ColVal16a,ColVal16b}. The various characterisations of Kerr are used to study open questions about the Kerr spacetime and more general black hole spacetimes. For example, in black hole uniqueness theorems, see \cite{ChrCosHeu12} for a review of this topic, and in the study of the stability of black holes as solutions to the Einstein equations, specifically through the use of symmetries ---see e.g. \cite{AndBaeBlu15}.

\medskip
\ni In this paper, we characterise the Kerr spacetime in the vicinity of null infinity by constructing conditions for the existence of a Killing spinor on characteristic hypersurfaces where one of these hypersurfaces is taken to be future null infinity --- see Figure \ref{Fig:isomorphictokerr}. That is, for a spacetime $(\tilde{\mathcal{M}},\tilde\bmg)$ solving the Einstein vacuum equations \eqref{eq:EVE}, we investigate the conditions to ensure that, close to future null infinity, $(\tilde{\mathcal{M}},\tilde\bmg)$ is isometric to the Kerr spacetime. To do this we assume \emph{asymptotic simplicity} (see Assumption \ref{Assumption:AsymptoticFlatness}) in order to study the geometry of the asmyptotic region. This allows us to perform a conformal compactification of the physical spacetime $(\tilde{\mathcal{M}},\tilde\bmg)$. An \emph{unphysical spacetime} $(\mathcal{M},\bmg)$ is one in which the physical spacetime $(\tilde{\mathcal{M}},\tilde\bmg)$ is conformally embedded through a diffeomorphism $\varphi:\tilde{\mathcal{M}}\rightarrow\mathcal{M}$ such that 
\begin{equation}
\varphi^*\bmg=\Xi^2\tilde\bmg.
\end{equation}
The scalar function $\Xi$ is the \emph{conformal factor} and can be chosen such that the points where $\Xi=0$ represent the points at infinity in $\tilde{\mathcal{M}}$ and $\bmg$ is well-defined. The set of points where $\Xi=0$ in $\mathcal{M}$ is the conformal boundary. In particular, it contains null infinity
\begin{equation}
\mathscr{I}\equiv \big\{p\in\mathcal{M}:\Xi(p)=0, \mathbf{d}\Xi(p)\neq0\big \}.
\end{equation}
 However, in this regime the vacuum Einstein equations \eqref{eq:EVE} are singular at the conformal boundary. One approach to address this difficulty was given by Friedrich \cite{Fri81a} who obtained a regular set of equations for $\bmg$ that are regular even when $\Xi=0$. These equations are the \emph{conformal Einstein field equations}. Crucially, a solution of the conformal Einstein field equations implies a solution to the vacuum Einstien field equations wherever $\Xi\neq0$ ---see \cite{CFEBook} for a comprehensive discussion of the conformal Einstein field equations.

\begin{figure}[h]
\centering
 \def\svgwidth{14pc}
\begingroup%
  \makeatletter%
  \providecommand\color[2][]{%
    \errmessage{(Inkscape) Color is used for the text in Inkscape, but the package 'color.sty' is not loaded}%
    \renewcommand\color[2][]{}%
  }%
  \providecommand\transparent[1]{%
    \errmessage{(Inkscape) Transparency is used (non-zero) for the text in Inkscape, but the package 'transparent.sty' is not loaded}%
    \renewcommand\transparent[1]{}%
  }%
  \providecommand\rotatebox[2]{#2}%
  \newcommand*\fsize{\dimexpr\f@size pt\relax}%
  \newcommand*\lineheight[1]{\fontsize{\fsize}{#1\fsize}\selectfont}%
  \ifx\svgwidth\undefined%
    \setlength{\unitlength}{111.0920036bp}%
    \ifx\svgscale\undefined%
      \relax%
    \else%
      \setlength{\unitlength}{\unitlength * \real{\svgscale}}%
    \fi%
  \else%
    \setlength{\unitlength}{\svgwidth}%
  \fi%
  \global\let\svgwidth\undefined%
  \global\let\svgscale\undefined%
  \makeatother%
  \begin{picture}(1,0.80397513)%
    \lineheight{1}%
    \setlength\tabcolsep{0pt}%
    \put(0,0){\includegraphics[width=\unitlength,page=1]{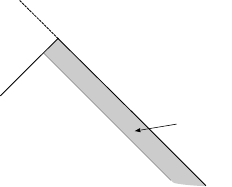}}%
    \put(0.78255923,0.29942052){\color[rgb]{0,0,0}\makebox(0,0)[lt]{\lineheight{1.25}\smash{\begin{tabular}[t]{l}isomorphic to \\Kerr\end{tabular}}}}%
    \put(0.42410273,0.4999247){\color[rgb]{0,0,0}\rotatebox{-0.1340039}{\makebox(0,0)[lt]{\lineheight{1.25}\smash{\begin{tabular}[t]{l}$\mathscr{I}^+$\end{tabular}}}}}%
  \end{picture}%
\endgroup%

\label{Fig:isomorphictokerr}
\caption{The setup of the characteristic initial data with one of the characteristic hypersurfaces being future null infinity $\scri$. The ``nasrrow rectangle'' region shaded in grey is the existence domain of the characteristic initial value problem and the region in which we will characterise the Kerr spacetime.}
\end{figure}

\subsection{The asymptotic characteristic initial value problem for the conformal Einstein field equations}\label{sec:introCIVP}

In order to obtain our characterisation of the Kerr spacetime, we study conditions on initial data for an asymptotic characteristic intial value problem and then employ standard results on the propogation of wave equations to obtain the spacetime result. Therefore, it is crucial to understand the existence and uniqueness domains of such solutions. In this paper, we utilise the analysis of Hilditch, Valiente Kroon and Zhao \cite{HilValZha20b}, who obtained existence of the Conformal Einstein field equations on a ``narrow rectangle'' where one of its long sides is a portion of future null infinity, see Figure \ref{Fig:isomorphictokerr} ---this work is based, in turn, on the analysis of Luk \cite{Luk12}. Following from \cite{HilValZha20b}, we use \emph{Stewart's gauge}, see \cite{SteFri82,Fri91}, which is based on the Newman-Penrose (NP) formalism, see \cite{Ste91,ODo03}. 

In our analysis we will not only need to ensure that there exists a solution to the Conformal Einstein field equations but more general wave equations such as those encoding existence of a Killing spinor in the relevant domain ---see Section \ref{sec:introks} below.  Therefore, we also employ the analysis of \cite{CabChrWaf16} to obtain the existence in the ``narrow rectangle'' of solutions to various wave equations.

\subsection{Killing spinor initial data sets}\label{sec:introks}

As already discussed, the presence of some of the symmetries in the Kerr family of solutions \eqref{eq:KerrBL} is captured by Killing vectors. More generally, imposing that a solution to the vacuum Einstein equations \eqref{eq:EVE} possesses Killing vectors allows symmetry reductions on the equations ---see, for example, \cite{Wei90}. From the viewpoint of the inital value problem, the imposition of symmmetry assumptions should happen only at the level of initial data. This is accomplished by the \emph{Killing initial data equations} \cite{BeiChr97b} which encode the existence of a Killing vector in the developmemnt in the initial data itself. The deeper ``hidden symmetries'' appearing in the Kerr spacetime can be characterised by more fundamental objects known as Killing spinors which are symmetric valence-$2$ spinors $\tilde{\kappa}_{AB}$. The \emph{Killing spinor initial data equations} encode on initial data the existence of a Killing spinor in the initial data's development. For the Einstein vacuum equations \eqref{eq:EVE}, the Killing spinor initial data equations have been derived in \cite{GarVal08a} and for suitable classes of matter in \cite{ColVal16a,ColValRac18}. For the conformal Einstein equations, conditions on spacelike initial data in the unphysical spacetime that give rise to a Killing spinor in the conformally related physical spacetime have been derived in \cite{GasWil22}.

As part of our analysis in this article, we derive the Killing spinor initial data conditions for the conformal Einstein field equations on characteristic initial data in the unphysical spacetime that give rise to a Killing spinor in the conformally related physical spacetime. Moreover, we take one of the null hypersurfaces to be the future null infinity, $\scri$.

The central idea to constructing Killing spinor initial data is to obtain a closed system of homogeneous wave equations that encode the existence of a Killing spinor. For the conformal Einstein field equations this closed homogeneous system was derived in \cite{GasWil22}. Hence, by applying the results outlined in Section \ref{sec:introCIVP}, by ensuring that the equations governing the existence of a Killing spinor hold on the initial data, we can obtain a solution to the Killing spinor initial data equations within some existence domain. In this article, the existence domain will be the ``narrow rectangle'' along $\scri$ ---see Figure \ref{Fig:isomorphictokerr}.

\subsection{The main theorem}

In this section, we present a rough version of the main result of this article, based on the work of Mars \cite{Mar99} and \cite{ColVal16a}.

\begin{theorem}[Main Theorem, Rough version]\label{thm:mainthmrough}
Let $(\tilde{\mathcal{M}},\tilde\bmg)$ be a smooth, asymptotically simple, solution to \eqref{eq:EVE}. In the conformally related spacetime $(\mathcal{M},\bmg)$ satisfying the conformal field equations, let $\mathcal{N}_\star$ denote an outgoing null hypersurface intersecting with $\scri$ at the comapct $2$-dimensional surface $\mathcal{S}_\star$. Then $(\tilde{\mathcal{M}},\tilde\bmg)$ is locally isomorphic to a member of the Kerr family of spacetimes with non-vanishing mass if and only if:
\begin{itemize}
  \item[i.] There exists a Killing spinor $\kappa_{AB}$ in $(\mathcal{M},\bmg)$. That is, $\kappa_{AB}$ satisfies \emph{characteristic conformal Killing spinor initial data} equations on $\NN_\star\cup\scri$.
  \item[ii.] There exists a real positive constant $\mathfrak{l}$ such that  
  \[
  \mathcal{H}^2=\mathfrak{l}\chi^4
  \]
  where $\mathcal{H}$ is a quantity related to the Killing form defined in equation \eqref{eq:defHH2} and $\chi$ is the Ernst potential defined in equation \eqref{eq:defErnstPotential}.
\end{itemize}
 \end{theorem}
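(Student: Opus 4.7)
\medskip\noindent\textbf{Proof proposal.} My plan is to split the theorem into its two implications, with the substantive work lying in the ``if'' direction. For the ``only if'' direction, I would take the physical Kerr spacetime in Boyer--Lindquist form, conformally compactify near $\scri$ as in \cite{Fri81a}, and rescale the classical Kerr Killing spinor $\tilde\kappa_{AB}$ to an unphysical Killing spinor $\kappa_{AB}$; its restriction to $\NN_\star\cup\scri$ then automatically solves the characteristic conformal Killing spinor initial data equations, giving (i). The identical vanishing of the Mars--Simon tensor on Kerr, read through the Mars--Simon identity, yields the algebraic relation $\HH^2=\mathfrak{l}\chi^4$ with $\mathfrak{l}$ a positive real constant fixed by the Kerr mass, giving (ii).

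For the ``if'' direction I would proceed in three steps. Step one: starting from hypothesis (i), invoke the closed homogeneous wave system for the unphysical Killing spinor derived in \cite{GasWil22}, which couples $\kappa_{AB}$ to a finite set of auxiliary spinors measuring the failure of the Killing spinor equation and closes modulo the conformal Einstein field equations. By the characteristic wave theory of \cite{HilValZha20b,CabChrWaf16} this system is well posed in the ``narrow rectangle'' of Figure \ref{Fig:isomorphictokerr}; since the characteristic Killing spinor initial data conditions of (i) force the auxiliary spinors to vanish on $\NN_\star\cup\scri$, a uniqueness argument extends this vanishing to the full existence domain, producing a genuine Killing spinor there. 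Step two: from $\kappa_{AB}$ construct the associated Killing vector $\xi_{AA'}=\nabla_{B'(A}\kappa_{B)}{}^{B'}$, the Killing form, the Ernst potential $\chi$ of \eqref{eq:defErnstPotential}, and the quantity $\HH$ of \eqref{eq:defHH2}, and interpret hypothesis (ii), via the spinorial Mars--Simon identity of \cite{BaeVal10b,ColVal16a}, as the vanishing of the Mars--Simon tensor on $\NN_\star\cup\scri$; a homogeneous wave equation for this tensor with trivial characteristic data extends the vanishing to the full existence domain. Step three: apply the Mars characterisation \cite{Mar99,Mar00} to identify $(\tilde\MM,\tilde\bmg)$ locally with a member of the Kerr family with nonzero mass.

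The principal obstacle I anticipate is step one: deriving and solving the characteristic Killing spinor initial data conditions on $\scri$, where $\Xi=0$ and the physical spinor $\tilde\kappa_{AB}=\Xi^{-1}\kappa_{AB}$ is formally singular. One must decompose the Killing spinor equation in Stewart's gauge into intrinsic constraints versus transport equations along the null generators, propagate compatibility data from the corner $\mcS_\star$, and verify that the reduced constraints are solvable in terms of the free data of the characteristic initial value problem. A subsidiary difficulty is the propagation of the algebraic relation $\HH^2=\mathfrak{l}\chi^4$ along the two characteristic hypersurfaces, which must be checked to be compatible with (i) so that the Mars--Simon machinery genuinely applies throughout the rectangle rather than on a lower-dimensional subset.
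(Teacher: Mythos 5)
Your outline for condition (i) matches the paper's architecture almost exactly: the homogeneous propagation system for the zero-quantities $H_{A'ABC}$, $B_{ABCD}$, $F_{A'BCD}$ from \cite{GasWil22} (Propositions \ref{Proposition:KillingSpinorPropagationSystem} and \ref{Proposition:BasicKillingSpinorData}), uniqueness for the characteristic problem in the narrow rectangle via \cite{HilValZha20b,CabChrWaf16}, and then the reduction of the conditions on $\NN_\star\cup\scri$ to constraints, transport equations and compatibility conditions on the free data in Stewart's gauge --- which is indeed where the bulk of the work lies (Section \ref{sec:CKSID}). Where you genuinely diverge is in the treatment of condition (ii). You propose to read (ii) as the vanishing of the Mars--Simon tensor on the initial hypersurfaces and to propagate that vanishing by a homogeneous wave equation, flagging the propagation of $\mathcal{H}^2=\mathfrak{l}\chi^4$ along the two null hypersurfaces as a subsidiary difficulty. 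The paper avoids this entirely: it invokes the Cole--Valiente Kroon spacetime characterisation (Theorem \ref{Theorem:SpinorialCharacterisationKerr}), which says that \emph{once} a Killing spinor with non-vanishing norm and Hermitian $\tilde\xi_{AA'}$ exists, the relations $\tilde{H}=6/(\mathfrak{c}-\tilde\chi)$ and $\tilde{\mathcal{H}}^2=-\mathfrak{l}(\mathfrak{c}-\tilde\chi)^4$ hold throughout the domain with $\mathfrak{l},\mathfrak{c}$ complex \emph{constants}; asymptotic simplicity and $\kappa_{AB}=O(1)$ force $\mathfrak{c}=0$ (Lemma \ref{Lemma:AsymptoticSimplicity}), so condition (ii) collapses to the purely algebraic statement that $\mathfrak{l}$ is real and positive, which need only be checked at the corner $\mathcal{S}_\star$. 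There the paper computes $\mathcal{H}^2=-4\phi_2^2$ and $\chi=2\phi_2$ and converts reality and positivity of $\mathfrak{l}$ into the free-data condition $\mathrm{Im}(\edt\edt\bar{\sigma})=\mathrm{Im}(\Phi_{02}\bar{\sigma})$ on $\mathcal{S}_\star$ (Lemma \ref{lem:Kerrstewart}). Your route would work in principle but requires you to derive and close a wave equation for the Mars--Simon tensor, which is redundant given the citation; the paper's route buys a cheaper argument and, more importantly, an explicit reformulation of (ii) as a single scalar condition on the freely specifiable data, which is the point of the precise version, Theorem \ref{thm:CharacterisationOfKerr}.
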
 
The precise version of Theorem \ref{thm:mainthmrough} is presented in Section \ref{sec:mainthm}. We will rewrite the conditions $\emph{i.}$ and $\emph{ii.}$ in terms of \emph{free data} in the characteristic initial value problem for the conformal Einstein field equations on $\NN\cup\scri$. 

\subsection*{Outline of the paper}

In Section \ref{sec:CIVP}, we briefly introduce the spinorial formalism and conformal Einstein field equations. We introduce the characteristic initial value problem for the conformal Einstein field equations in Stewart's gauge and discuss the various gauge choices and simplifications. Moreover, we prove that a reduced initial data set can be used to generate the full initial data set and discuss asymptotics. 
In Section \ref{Section:KillingSpinorsGeneral}, we introduce the notion of Killing spinors and discuss Mars's and Cole \& Valiente-Kroon's characterisation of the  Kerr spacetime. Additionally, we derive the conformally related characterisation in the unphysical spacetime. That is, we derive the spacetime condtions to characterise the physical Kerr spacetime $(\tilde{\mathcal{M}}_{Kerr},\tilde{\bmg}_{Kerr})$ in the unphysical spacetime $(\mathcal{M},\bmg)$.
In Section \ref{sec:CKSID}, we derive the characteristic Killing spinor initial data in Stewart's gauge. Subsequently, we present some intuitive examples utilising the free data from Section \ref{sec:CIVP}.
Finally, in Section \ref{sec:AsympKerr}, we use the results of Sections \ref{Section:KillingSpinorsGeneral} and \ref{sec:CKSID} to derive asymptotically a characterisation of Kerr near $\scri$. Moreover, we derive conditions on the free data placed on the initial hypersurfaces that will generate the Kerr spacetime in its past development. 



\subsection*{Notation and conventions}
\addcontentsline{toc}{subsection}{Notation and conventions}
In what follows, $(\tilde\MM,\tilde\bmg)$ will denote a physical vacuum spacetime. The metric $\tilde{\bmg}$ is assumed to have signature $(+,-,-,-)$ and admits a spinor structure. Lower case Latin letters ~$a,\,b,\, \ldots$ denote abstract tensorial spacetime indices. Uppercase Latin indices $A,\,B,...$ are used as abstract spinorial indices with indices on complex conjugates of spinors with primes. The conventions for the curvature tensors are fixed by the relation
\begin{align}\label{eq:curvatureconvention}
(\nabla_a \nabla_b -\nabla_b \nabla_a) v^c = R^c{}_{dab} v^d.
\end{align}
We make systematic use of the NP formalism as described, for example,
in~\cite{Ste91,PenRin84}. Definitions and relevant formulas are presented in Appendix \ref{app:NPgauge}. Equality on $\scri$ will be denoted by $\simeq$. For example, the vanishing of the conformal factor $\Xi$ on $\scri$ is given by $\Xi\simeq 0$. Equality on $\NN_\star$ is denoted by $\bumpeq$.\\

Throughout this article, we will project spinors onto the dyad basis $\{o^A,\iota^A\}$. Details are given in Appendix \ref{app:NPgauge}. We will make use of the operators 
\begin{align*}
\eth\eta \equiv \delta\eta + (q\bar\a -p\be)\eta,\\
\bar\eth\eta \equiv \bar\delta \eta - (p\a -q\bar\be)\eta
\end{align*}
for $\eta$ a smooth scalar field of weight $\{p,q\}$. The spin weight of a $\{p,q\}$-scalar is $s=\frac{1}{2}(p-q)$.

\section{The asymptotic characteristic initial value problem for the conformal Einstein field equations}\label{sec:CIVP}

In this section, we provide a brief overview of the spinorial formalism to be used in this article and the characteristic initial value problem in the Newman-Penrose formalism.

\subsection{Spinorial formalism}

Throughout this article, we will make systematic use of spacetime spinors, also called $SL(2,\mathbb{C})$ spinors. In this section, we recapitulate the calculus of spacetime spinors. For more details see \cite{CFEBook, Ste91, PenRin84}. For notational simplicity, we define the following objects in an arbitrary vacuum spacetime $(\mathcal{M},\bmg)$. Analogous to the conventions for tensorial curvature being set through the commutator \eqref{eq:curvatureconvention}, the spinorial curvature conventions are fixed in accordance with \eqref{eq:curvatureconvention} using the spinorial commutator $[\nabla_{AA'},\nabla_{BB'}]$. The spinorial commutator can be expressed in terms of the symmetric operator $\Box_{AB}$ where 
\begin{align}\label{eq:boxdef}
    \square_{AB}\equiv \nabla_{Q'(A}\nabla_{B)}{}^{Q'}
\end{align}
with 
\begin{equation}
    [\nabla_{AA'},\nabla_{BB'}] =\epsilon_{AB}\square_{A'B'}+\epsilon_{A'B'}\square_{AB}.
\end{equation}
where $\epsilon_{AB}$ is the \emph{$\ep$-spinor}, an antisymmetric $2$-spinor satisfying
\[
g_{ABA'B'}=\ep_{AB}\ep_{A'B'},
\]
where $g_{ABA'B'}$ is the spinorial counterpart of the metric $\bmg$. The curvature is then encoded in the action of $\square_{AB}$ on valence-$1$ spinors
\begin{align}
    \begin{aligned}\label{eq:SpinorRicci}
        \square_{AB}\xi_C =& -\Psi_{ABCD}\xi^D+2\Lambda\xi_{(A}\epsilon_{B)C},\\
        \square_{A'B'}=&-\Phi_{CAA'B'}\xi^A.
    \end{aligned}
\end{align}
The formulas \eqref{eq:SpinorRicci} are know as the spinorial Ricci identities where $\Psi_{ABCD}$, $\Phi_{CAA'B'}$ and $\Lambda$ are the Weyl spinor, tracefree Ricci spinor and scalar curvature, respectively.

\medskip
Throughout this article, we will implicitly use the irreducible decomposition of spinors. That is the decomposition of a spinor into its fully symmetric part and sum of products of $\epsilon$ with the trace of the spinor. A particularly useful identity is
\begin{equation}
    \nabla_{Q'A}\nabla_{B}{}^{Q'} = \square_{AB}+\frac{1}{2}\epsilon_{AB}\square
\end{equation}
where $\square\equiv \nabla_{AA'}\nabla^{AA'}$.

\subsection{Conformal geometry and asymptotics}

In this section we summarise the key results in conformal geometry that we will make use of throughout this article. Further details can be found in, for example, Chapters 5 and 10 of \cite{CFEBook}. 

\medskip
The \emph{physical spacetime}, $(\tilde{\mathcal{M}},\tilde{\bmg})$ satisfying the vacuum Einstein equations, \eqref{eq:EVE}, is said to be related to the \emph{unphysical spacetime}, $(\mathcal{M},\bmg)$, through a \emph{conformal transformation}
\begin{equation}
\bmg =\Xi^2\tilde{\bmg},
\label{ConformalRescaling}
\end{equation}
where the scalar $\Xi$ is called the \emph{conformal factor}. The \emph{unphysical spacetime} $(\mathcal{M},\bmg)$ together with the conformal factor $\Xi$ generate a solution to the conformal Einstein field equations as defined in Section \ref{sec:CFE}. Crucially for our analysis, we make the following assumption:

\begin{assumption}
\label{Assumption:AsymptoticFlatness}
    The vacuum spacetime $(\tilde{\mathcal{M}},\tilde{\bmg})$ is \emph{asymptotically simple}. In particular, we assume that the conformal boundary associated to the extension $(\mathcal{ M},\bmg,\Xi)$ admits at least a portion of \emph{future null infinity}, $\mathscr{I}^+$.
\end{assumption}

For a detailed definition and discussion of the notion of asymptotic flatness see, e.g.,  \cite{PenRin86,CFEBook}. The conformal boundary $\mathscr{I}^+$ is an ingoing null hypersurface where
\begin{equation}
\Xi = 0, \qquad \mathbf{d}\Xi\neq0.
\end{equation}
 In the following we ignore any issues regarding the behaviour of the spacetime as one approaches spatial or timelike infinity. 
 
 \medskip
 Finally, since we have assumed that $(\mathcal{\tilde M},\tilde\bmg)$ admits a spinor structure, it follows then that $(\mathcal{M},\bmg)$ inherits a spinorial structure. Following standard notation, in the following let $\Psi_{ABCD}$ and $\Phi_{AA'BB'}$ denote, respectively, the \emph{Weyl spinor} and the \emph{tracefree Ricci spinor} of the \emph{unphysical metric} $\bmg$.  We will also make use of the rescaled Weyl spinor defined by
\begin{equation}
    \phi_{ABCD}\equiv \Xi^{-1}\Psi_{ABCD}.
\end{equation}
As a consequence of the asymptotic flatness of $(\mathcal{\tilde M},\tilde\bmg)$ it follows that the Weyl spinor vanishes at $\mathscr{I}^+$ ---that is,
\[
\Psi_{ABCD} \simeq 0,
\]
where $\simeq$ denotes equality at $\mathscr{I}^+$ ---see \cite{PenRin86,Ste91,CFEBook}, so that
\[
\phi_{ABCD}\simeq O(1). 
\]
We also note the following conformal transformation rules on the $\ep$-spinor:
\begin{align}
\begin{aligned}
\ep_{AB}&=\Xi\tilde{\ep}_{AB}, &\qquad \ep^{AB}&=\Xi^{-1}\tilde{\ep}^{AB}\\
\ep_{A'B'}&=\Xi\tilde{\ep}_{A'B'}, &\qquad \ep^{AB}&=\Xi^{-1}\tilde{\ep}^{A'B'}\\
\end{aligned}\label{eq:epsilonconformaltransformation}
\end{align}

\subsection{The spinorial conformal field equations}\label{sec:CFE}

The conformal Einstein field equations are a reformulation of the Einstein vacuum equations using the conformal decomposition \eqref{ConformalRescaling}. The key property of the conformal Einstein field equatons is that they are regular at null infinity $\scri$ where the conformal factor vanishes, $\Xi=0$. \\

The \emph{standard metric vacuum conformal Einstein field
equations} are given by ---see
\cite{Fri81a,Fri81b,Fri82,Fri83}:
\begin{align*}
& \nabla_{a}\nabla_{b}\Xi +\Xi L_{ab} - s g_{ab}=0 \\
& \nabla_{a}s +L_{ac} \nabla ^{c}\Xi=0, \\
& \nabla_{b}L_{ac}-\nabla_{a}L_{bc} - d_{abcd}\nabla^d{}\Xi =0, \\
 &\nabla_{e}d_{abc}{}^{e}=0 , \\ 
&\lambda - 6 \Xi s + 3 (\nabla_{a}\Xi) \nabla^{a}\Xi=0,
\end{align*}
where $\Xi$ is the conformal factor, $s$ is the \emph{Friedrich scalar} defined by
\begin{equation}\label{eq:sdef}
s\equiv \tfrac{1}{4}\nabla_{a}\nabla^{a}\Xi + \tfrac{1}{24}R\Xi,
\end{equation}
$L_{ab}$ is the \emph{Schouten tensor} defined by
\begin{equation}\label{eq:schouten}
L_{ab}\equiv\tfrac{1}{2}R_{ab}-\tfrac{1}{12}Rg_{ab},
\end{equation}
and $d^{a}{}_{bcd}$ denotes the \emph{rescaled Weyl tensor}, related to the Weyl tensor $C^{a}{}_{bcd}$ through
\[d^{a}{}_{bcd}\equiv \Xi^{-1}C^{a}{}_{bcd}.\]
 The spinorial version of these equations will be useful in what follows. They can be written as
\begin{align}
\begin{aligned}
   & \nabla_{AA'}\nabla_{BB'}\Xi - \Xi \Phi _{ABA'B'} -
  s \epsilon _{AB} \epsilon _{A'B'} + \Xi \Lambda \epsilon _{AB}
  \epsilon _{A'B'}=0 ,
  \\
  & \nabla_{AA'}s + \Lambda \nabla_{AA'}\Xi - \Phi
  _{ABA'B'} \nabla^{BB'}\Xi =0,\\ 
  &\nabla_{A'(A}\Phi _{B)CC'}{}^{A'} - \epsilon _{C(A}
  \nabla_{B)C'}\Lambda + \phi _{ABCD} \nabla^{D}{}_{C'}\Xi=0, \\
  & \Lambda _{C'ABC} =
  \nabla_{DC'}\phi _{ABC}{}^{D}=0, \\ 
  &\lambda -6 \Xi s + 3 (\nabla_{AA'}\Xi)
  \nabla^{AA'}\Xi=0. 
\end{aligned}
\end{align}

\subsection{The past-oriented asymptotic characteristic initial value problem for the conformal Einstein field equations}
\label{Section:StewartGauge}

In this section, we provide a discussion of the past-oriented characteristic initial value problem for the conformal Einstein field equations working in \emph{Stewart's gauge} \cite{SteFri82}. The latter will provide the basic set-up for our analysis.

\subsubsection{Null geometric setup}
In the following let $(\MM,\bmg)$ denote a 4-dimensional spacetime. As we will be working within the setting of a characteristic initial value problem, we further assume the boundary of $\MM$ to have a further component consisting of a transversal outgoing hypersurface $\NN_\star$ intersecting at a 2-dimensional surface $\mcS_\star$ which we assume to be diffeomorphic to $\mathbb{S}^2$. We define a coordinate system $(x^\mathcal{A})$ on $\mcS_\star$ and in a neighbourhood $\UU$ of $\mcS_\star$ in $\MM$ we introduce null coordinates $u$ and $v$ such that
\begin{equation*}
    \scri = \{p\in\UU : v(p)=0\},\qquad \NN_\star =\{p\in\UU : u(p)=0\}.
\end{equation*}
The coordinates $x^{\mathcal{A}}$ are propagated into $\scri$ and $\NN_\star$ by setting them to be constant along the generators of $\scri$ and $\NN_\star$.
We denote the subset of $D^-(\NN_\star\cup\scri)$, the causal past of $\NN_\star\cup\scri$,  where the coordinates $u$ and $v$ are valid by $\VV$. In the following we further assume that the null hypersurfaces of constant $u$ and $v$ intersect at 2-dimensional surfaces, $\mathcal{S}_{u,v}$, each with the topology of $\mathbb{S}^2$. 

\medskip
In the following we will further specialise our gauge choice in $\mathcal{V}$ to a prescription given in \cite{SteFri82} which in the following we refer to as \emph{Stewart's gauge}. This gauge involves, not only, a prescription of coordinates but also of a Newman-Penrose (NP) frame and a specific choice of the scaling in equation \eqref{ConformalRescaling}. 

\medskip
An NP-frame is constructed by choosing vector fields $\bml$ and $\bmn$ to be tangent to the generators of $\scri$ and $\NN_\star$ normalised by $\bmg(\bml,\bmn)=1$. The preservation of this normalisation under Lorentz transformations can be used to set $l_a=\nabla_a u$, where some freedom remains in setting the origin of $u$. We choose vector fields $\bmm$ and $\bar{\bmm}$ tangent to $\mathcal{S}_\star$ and satisfy $\bmg(\bmm,\bar{\bmm})=-1$ and $\bmg(\bmm,\bmm)=0$ where there remains the freedom to perform spin transformations at each point. 
Thus, we consider a NP frame $\{\bml,\, \bmn,\, \bmm,\, \bar{\bmm}\}$ of the form
\begin{subequations}
\begin{align}
& \bml=Q\bmpartial_v , \label{framel}\\
& \bmn=\bmpartial_u+C^{\mathcal{A}}\bmpartial_{\mathcal{A}}, \label{framen}\\ 
& \bmm=P^{\mathcal{A}} \bmpartial_{\mathcal{A}}. \label{framem}
\end{align}
\end{subequations}
where $C^\mathcal{A}=0$ on $\scri$ and $n^a\nabla_a x^A=0$. A schematic depiction of the above geometric set-up is given in Figure \ref{Figure:PastOrientedCIVP}. 

\begin{figure}[t]
\begin{center}
\def\svgwidth{18pc}
\begingroup%
  \makeatletter%
  \providecommand\color[2][]{%
    \errmessage{(Inkscape) Color is used for the text in Inkscape, but the package 'color.sty' is not loaded}%
    \renewcommand\color[2][]{}%
  }%
  \providecommand\transparent[1]{%
    \errmessage{(Inkscape) Transparency is used (non-zero) for the text in Inkscape, but the package 'transparent.sty' is not loaded}%
    \renewcommand\transparent[1]{}%
  }%
  \providecommand\rotatebox[2]{#2}%
  \newcommand*\fsize{\dimexpr\f@size pt\relax}%
  \newcommand*\lineheight[1]{\fontsize{\fsize}{#1\fsize}\selectfont}%
  \ifx\svgwidth\undefined%
    \setlength{\unitlength}{284.92778085bp}%
    \ifx\svgscale\undefined%
      \relax%
    \else%
      \setlength{\unitlength}{\unitlength * \real{\svgscale}}%
    \fi%
  \else%
    \setlength{\unitlength}{\svgwidth}%
  \fi%
  \global\let\svgwidth\undefined%
  \global\let\svgscale\undefined%
  \makeatother%
  \begin{picture}(1,1)%
    \lineheight{1}%
    \setlength\tabcolsep{0pt}%
    \put(0,0){\includegraphics[width=\unitlength,page=1]{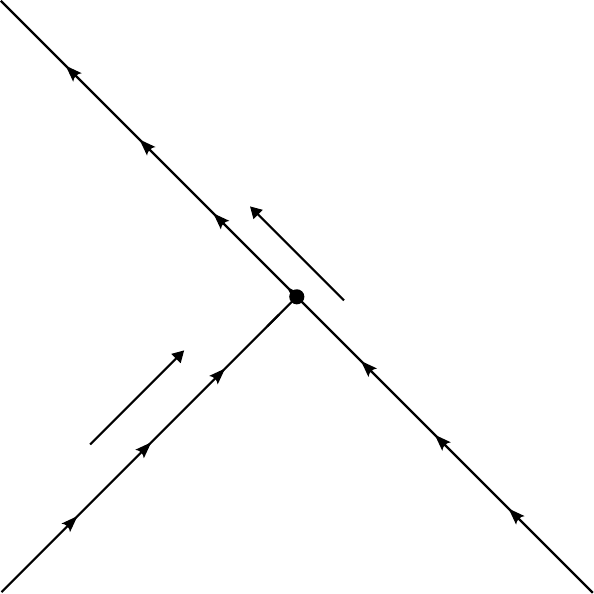}}%
    \put(0.77763108,0.0918816){\color[rgb]{0,0,0}\makebox(0,0)[lt]{\lineheight{1.25}\smash{\begin{tabular}[t]{l}$\scri$\end{tabular}}}}%
    \put(0.14091774,0.0918816){\color[rgb]{0,0,0}\makebox(0,0)[lt]{\lineheight{1.25}\smash{\begin{tabular}[t]{l}$\NN_\star$\end{tabular}}}}%
    \put(0.47418291,0.42756396){\color[rgb]{0,0,0}\makebox(0,0)[lt]{\lineheight{1.25}\smash{\begin{tabular}[t]{l}$\mcS_\star$\end{tabular}}}}%
    \put(0.05824449,0.34404872){\color[rgb]{0,0,0}\makebox(0,0)[lt]{\lineheight{1.25}\smash{\begin{tabular}[t]{l}$v$, $\bml$, $D$\end{tabular}}}}%
    \put(0.50202127,0.5978694){\color[rgb]{0,0,0}\makebox(0,0)[lt]{\lineheight{1.25}\smash{\begin{tabular}[t]{l}$u$, $\bmn$, $\Delta$\end{tabular}}}}%
  \end{picture}%
\endgroup%

\end{center}
\caption{The setup of the coordinates, generators and directional derivatives of the null hypersurfaces $\NN_\star$ and $\scri$ in the past oriented characteristic initial value problem.}
\label{Figure:PastOrientedCIVP}
\end{figure}

\medskip
In the following subsections we review further aspects of Stewart's gauge.

\subsubsection{Spin connection coefficients and equations for frame coefficients}\label{sec:gaugeconditions}

In this section, we analyse the NP commutators in the NP formalism to arrive at the gauge conditions on the NP spin coefficients implied by Stewart's gauge.

\begin{lemma}[Gauge conditions on the NP spin coefficients]
\label{lem:SpinCoeffCondition}
The NP frame can be chosen such that
\begin{align}
& \rho=\bar{\rho},\ \ \mu=\bar{\mu},\ \ \ep=0,\nonumber \\
& \kappa=\nu=0,\label{spinconnection}\\
& \tau=\bar{\alpha}+\beta, \nonumber
\end{align}
on $D^-(\mathcal{N}\cup \mathscr{I}^+)$.
\end{lemma}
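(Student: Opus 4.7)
My plan is to derive each spin coefficient condition by matching, for each pair of frame vectors, the coordinate-basis expression of the commutator against the standard NP commutator identities. Because the frame \eqref{framel}--\eqref{framem} has been chosen with many vanishing coordinate components, the corresponding commutators automatically inherit those vanishings, and comparison with the NP expansion forces the spin coefficients to vanish or to satisfy the stated algebraic relations.

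I would start from $l_a = \nabla_a u$. Since $\bml$ is null, $l^a\nabla_a l_b = l^a \nabla_b l_a = \tfrac{1}{2}\nabla_b(l^a l_a) = 0$, so $\bml$ is geodesic and affinely parametrised; in NP terms this reads $\kappa = 0$ and $\epsilon + \bar\epsilon = 0$. The gradient form also forces the Frobenius condition $l_{[a}\nabla_b l_{c]} = 0$, i.e.\ the twist of $\bml$ vanishes, which in NP terms is $\rho - \bar\rho = 0$, giving $\rho = \bar\rho$. Only the imaginary part of $\epsilon$, which represents a rotation of the dyad $\{\bmm,\bar{\bmm}\}$ about $\bml$, is left to control. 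Under the residual spin rotation $\bmm \mapsto e^{i\theta}\bmm$, which preserves all frame normalisations, $\mathrm{Im}(\epsilon)$ shifts by a derivative of $\theta$ along $\bml$. Prescribing $\theta$ on $\mcS_\star$ and integrating a first-order linear transport equation along each generator of $\bml$ then sets $\mathrm{Im}(\epsilon) = 0$ throughout $\VV$, giving $\epsilon = 0$.

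The remaining conditions $\nu = 0$, $\mu = \bar\mu$ and $\tau = \bar\alpha + \beta$ follow from the frame form together with the NP commutator identities. Since $\bmn = \bmpartial_u + C^{\mathcal{A}}\bmpartial_{\mathcal{A}}$ and $\bmm = P^{\mathcal{A}}\bmpartial_{\mathcal{A}}$, neither $\bmn$ nor $\bmm$ has a $\bmpartial_v$ component, and $\bmm$ has no $\bmpartial_u$ component either; hence the coordinate commutators $[\bmn,\bmm]$ and $[\bmm,\bar{\bmm}]$ have neither $\bmpartial_u$ nor $\bmpartial_v$ components, and therefore no $\bml$ or $\bmn$ components. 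Expanding these in the NP basis,
\begin{align*}
[\bmn, \bmm] &= \bar\nu\, \bml - (\tau - \bar\alpha - \beta)\, \bmn + (\text{tangential to } \mcS_{u,v}), \\
[\bmm, \bar{\bmm}] &= (\bar\mu - \mu)\, \bml + (\bar\rho - \rho)\, \bmn + (\text{tangential to } \mcS_{u,v}),
\end{align*}
the vanishing of the $\bml$ and $\bmn$ coefficients forces $\nu = 0$, $\tau = \bar\alpha + \beta$, $\mu = \bar\mu$, and reconfirms $\rho = \bar\rho$.

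The main obstacle I expect is bookkeeping rather than anything deep: one must verify that the successive gauge fixings (the null coordinates, the frame of the form \eqref{framel}--\eqref{framem}, and the spin rotation used to kill $\mathrm{Im}(\epsilon)$) are compatible and can be simultaneously imposed throughout $D^-(\NN_\star \cup \scri)$ rather than only on the initial hypersurfaces. The spin rotation is governed by a smooth linear transport equation along each $\bml$-generator and propagates without obstruction by standard ODE theory, while the frame-form assumptions hold in $\VV$ by construction, so the commutator identities produce the remaining conditions pointwise. A minor subtlety worth mentioning is how to read the condition ``$n^a \nabla_a x^A = 0$'' quoted just after \eqref{framem} consistently with the presence of the $C^{\mathcal{A}}$ terms in $\bmn$; it should be understood as a prescription for how $\bmn$ and the transverse coordinates are propagated off $\scri$, complementing $C^{\mathcal{A}}|_{\scri} = 0$ and underlying the absence of $\bmpartial_v$ contributions used above.
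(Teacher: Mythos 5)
Your proof is correct and follows essentially the same route as the paper: both arguments extract the stated conditions by matching the commutators of the frame vectors (equivalently, the NP commutators applied to the null coordinates $u$, $v$) against the NP commutator identities, and both remove $\mathrm{Im}(\epsilon)$ with the same residual spin rotation $\bmm\mapsto \mathrm{e}^{\mathrm{i}\vartheta}\bmm$ integrated along the generators of $\bml$. Your derivation of $\kappa=0$, $\epsilon+\bar\epsilon=0$ and $\rho=\bar\rho$ directly from the gradient character of $l_a=\nabla_a u$ is merely a more geometric phrasing of the same commutator information.
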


\begin{proof}
Acting on a scalar function $\psi$ the commutators in the NP frame are given by 
\begin{align*}
\Delta D \psi ={}&(\gamma + \bar{\gamma}) D \psi
 + D \Delta \psi
 + (\epsilon + \bar{\epsilon}) \Delta \psi
 -  (\pi + \bar{\tau}) \delta \psi
 -  (\bar{\pi} + \tau) \bar\delta \psi ,\\
\delta D \psi ={}&(\bar{\alpha} + \beta -  \bar{\pi}) D \psi
 + D \delta \psi
 + \kappa \Delta \psi
 -  (\epsilon -  \bar{\epsilon} + \bar{\rho}) \delta \psi
 -  \sigma \bar\delta \psi ,\\
\delta \Delta \psi ={}&- \bar{\nu} D \psi
 -  (\bar{\alpha} + \beta -  \tau) \Delta \psi
 + \Delta \delta \psi
 -  \gamma \delta \psi
 + \bar{\gamma} \delta \psi
 + \mu \delta \psi
 + \bar{\lambda} \bar\delta \psi ,\\
\bar\delta \delta \psi ={}&(- \mu + \bar{\mu}) D \psi
 + (- \rho + \bar{\rho}) \Delta \psi
 + \alpha \delta \psi
 -  \bar{\beta} \delta \psi
 + \delta \bar\delta \psi
 -  \bar{\alpha} \bar\delta \psi
 + \beta \bar\delta \psi .
\end{align*}
The NP commutators applied to the double null coordinates then lead to the following conditions on the NP spin coefficients
\begin{align*}
& \rho=\bar{\rho},\ \ \mu=\bar{\mu},\ \ \ep=-\bar\ep,\nonumber \\
& \kappa=\nu=0,\\
& \tau=\bar{\alpha}+\beta. \nonumber
\end{align*}
Consider now the definition of the combination $\ep-\bar\ep$, namely 
\begin{equation*}
    \bmg(\bmm,\nabla_\bml \bmm)=\ep-\bar\ep.
\end{equation*}
Performing a spin rotation $\bmm\rightarrow\mathrm{e}^{\mathrm{i}\vartheta}\bmm$ on $\bmm$ yields
\begin{equation*}
\bmg(\mathrm{e}^{\mathrm{i}\vartheta}\bmm,\nabla_\bml(\mathrm{e}^{\mathrm{i}\vartheta}\bmm))=\ep-\bar\ep+\mathrm{i}\nabla_\bml \vartheta.
\end{equation*}
Hence, we can use the phase $\vartheta$ to set $\ep-\bar\ep=0$. Together with the above condition, $\ep+\bar\ep=0$, it follows then that $\ep=0$ on the whole existence region.
\end{proof}

\begin{remark} {\em We make the following remarks on Lemma \ref{lem:SpinCoeffCondition}:
    \begin{itemize}
        \item[i.] The conditions given in the lemma differ to those found in \cite{HilValZha20b} because we have `swapped' the roles of $\bml$ and $\bmn$.

        \item[ii.] In this article, we make use of both the NP formalism to single out two null directions and the GHP formalism on the 2-dimensional surfaces. Thus, the final gauge condition in Lemma \ref{lem:SpinCoeffCondition} is enforced to rewrite $\a$ and $\be$ in terms of $\tau$.
    \end{itemize}}
\end{remark}

With the information provided by Lemma \ref{lem:SpinCoeffCondition}, the remaining commutators yield conditions on the coefficients $(Q,C^\mathcal{A},P^\mathcal{A})$ of the frame. More precisely, we have that:
\begin{subequations}
\begin{align}
  &D C^{\mathcal{A}}=(\pi+\bar\tau)P^{\mathcal{A}}
  +(\bar{\pi}+\tau)\bar{P}^{\mathcal{A}} , \label{framecoefficient1} \\
  &D P^{\mathcal{A}}=\rh P^{\mathcal{A}}+\si\bar{P}^{\mathcal{A}}, \label{framecoefficient2} \\
  &\Delta P^{\mathcal{A}}-\delta C^{\mathcal{A}}=(\ga-\bar{\ga}
  -\mu)P^{\mathcal{A}}-\bar{\lambda}\bar{P}^{\mathcal{A}},
  \label{framecoefficient3} \\
  &\Delta Q=(\ga+\bar{\ga})Q,  \label{framecoefficient4}\\
  &\bar{\delta}P^{\mathcal{A}}-\delta\bar{P}^{\mathcal{A}}
  =(\alpha-\bar{\beta})P^{\mathcal{A}}-(\bar{\alpha}-\beta)
  \bar{P}^{\mathcal{A}}, \label{framecoefficient5} \\
  &\delta Q=(\tau-\bar{\pi})Q. \label{framecoefficient6}
\end{align}
\end{subequations}

The Ricci equations and conformal field equations in this gauge are presented in Appendix \ref{app:NPgauge}.
\subsubsection{Conformal gauge condition}

Due to the conformal freedom built into the conformal field equations, we have the following conformal gauge conditions: 

\begin{lemma}
\label{Lemma:ConformalGauge}
Let~$(\tilde{\mathcal{M}},\tilde{\bmg})$ denote an asymptotically
simple spacetime 
satisfying~$\tilde{R}_{ab}=0$ and let~$(\mathcal{M}, \bmg,
\Xi)$ with $\bmg=\Xi^2\tilde{\bmg}$ be a conformal extension for which
the condition $\Xi=0$ describes futue null
infinity~$\mathscr{I}^+$. Given the NP
frame~\eqref{framel}-\eqref{framem}, the conformal factor $\Xi$ can be
chosen so that given a null
hypersurface~$\NN_\star$
intersecting~$\mathscr{I}^+$
on~$\mathcal{S}_\star\approx\mathbb{S}^2$ one has
\begin{align*}
\Lambda =0, \qquad \mbox{in a neighourhood} \quad
\mathcal{V}\quad  \mbox{of} \quad \mathcal{S}_{\star} \quad \mbox{on}
\quad J^-(\mathcal{S}_{\star})
\end{align*}
Moreover,
one has the additional gauge conditions
\begin{subequations}
\begin{align*}
& \Sigma_1=-1, \ \ \mu=\rho=0 \ \ \ \ on\ \ \mathcal{S}_{\star}, \\
& \Phi_{22'}=0 \ \ \ \ on\ \ \mathscr{I}^+, \\
& \Phi_{00'}=0 \ \ \ \ on\ \ \mathcal{N}_\star.
\end{align*}
\end{subequations}
Additionally, the components of the derivative of $\Xi$ satisfy
\[
\Sigma_2=\Sigma_3=\Sigma_4=0 \quad on\ \ \mathscr{I}^+.
\]
\end{lemma}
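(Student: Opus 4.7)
The plan is to exhaust, in a hierarchy, the conformal rescaling freedom $(\bmg,\Xi)\mapsto(\theta^2\bmg,\theta\Xi)$ together with the residual Lorentz freedom of the Stewart-adapted NP frame \eqref{framel}--\eqref{framem}: first a bulk wave equation to secure $\Lambda=0$ throughout $\mathcal{V}$, then characteristic transport equations along $\scri$ and $\NN_\star$ to fix $\Phi_{22'}$ and $\Phi_{00'}$, and finally boost and initial-value freedom at $\mathcal{S}_\star$ to normalise $\Sigma_1$, $\mu$, and $\rho$.

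First I would establish that $\Lambda=0$ can be enforced in $\mathcal{V}$. Under $\bmg\to\theta^2\bmg$ the conformal transformation of the scalar curvature recasts $\widehat{\Lambda}=0$ as a linear wave equation for $\theta$ of the form $\Box\theta+f\theta=0$, with $f$ proportional to the original $\Lambda$. By the well-posedness results of \cite{CabChrWaf16} quoted in Section \ref{sec:introCIVP}, this equation admits a unique smooth solution on the narrow rectangle $\mathcal{V}$ once $\theta|_{\scri}$ and $\theta|_{\NN_\star}$ are prescribed compatibly on $\mathcal{S}_\star$. Replacing $\bmg$ by $\theta^2\bmg$ gives $\Lambda=0$ in $\mathcal{V}$.

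Next I would use the still-available freedom in $\theta|_{\scri}$ and $\theta|_{\NN_\star}$ to enforce the curvature conditions on the two hypersurfaces. Under a conformal rescaling, $\Phi_{22'}|_{\scri}$ shifts by an inhomogeneous expression whose principal part is a second $\Delta$-derivative of $\ln\theta$ along the generators of $\scri$; integrating this linear ODE along each null generator, with initial data at $\mathcal{S}_\star$, delivers a $\theta|_\scri$ for which $\Phi_{22'}\simeq 0$. An analogous $D$-transport argument on $\NN_\star$ produces $\Phi_{00'}\bumpeq 0$. The conditions on $\mathcal{S}_\star$ itself are then handled as follows: a boost $(\bml,\bmn)\mapsto(\lambda\bml,\lambda^{-1}\bmn)$ sets $\Sigma_1=D\Xi=-1$ pointwise on $\mathcal{S}_\star$, which is possible because $\mathbf{d}\Xi\neq 0$ on $\scri$ and $\bml$ is transverse; the residual freedom in the value of $\theta$ and its transverse derivative at $\mathcal{S}_\star$ then shifts $\mu$ and $\rho$ by directional derivatives of $\ln\theta$, allowing $\mu=\rho=0$ there.

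The conditions $\Sigma_2\simeq\Sigma_3\simeq\Sigma_4\simeq 0$ on $\scri$ are automatic: since $\Xi\simeq 0$ and, with $C^{\mathcal{A}}\simeq 0$, the frame vectors $\bmn$, $\bmm$, $\bar{\bmm}$ are all tangent to $\scri$, any tangential derivative of $\Xi$ vanishes there. The main technical obstacle I expect is consistency: one must verify that the conformal rescaling chosen in the bulk step to enforce $\Lambda=0$ does not spoil the boundary prescriptions used in the next step, that the transport ODEs on generators integrate globally over $\mathcal{S}_\star\cong\mathbb{S}^2$, and that the boost and 2-surface gauge choices at $\mathcal{S}_\star$ remain compatible with the NP-frame conditions already imposed in Lemma \ref{lem:SpinCoeffCondition}. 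These compatibility checks---essentially bookkeeping of conformal and Lorentz transformation rules---are the routine but delicate heart of the Stewart-gauge construction originally carried out in \cite{SteFri82,Fri91}.
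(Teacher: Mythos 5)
Your proposal is correct and follows essentially the same route as the paper, whose proof simply defers to the standard exhaustion of the rescaling freedom $(\bmg,\Xi)\mapsto(\theta^2\bmg,\theta\Xi)$ discussed before Lemma 18.2 of \cite{CFEBook}: the free data $\theta$, $D\theta$, $\Delta\theta$ at $\mathcal{S}_\star$ fix $\Sigma_1$, $\rho$, $\mu$; second-order transport equations along the generators of $\NN_\star$ and $\scri$ fix $\Phi_{00'}$ and $\Phi_{22'}$; the linear wave equation for $\theta$ fixes $\Lambda$ in the bulk; and $\Sigma_2\simeq\Sigma_3\simeq\Sigma_4\simeq0$ follows from tangency of $\bmn,\bmm,\bar{\bmm}$ to $\scri$. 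The only adjustment is the logical order --- the boundary values of $\theta$ are determined first and then fed as characteristic data into the bulk wave equation, so the bulk step cannot spoil the hypersurface prescriptions and the consistency worry you raise at the end dissolves.
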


\begin{proof}
    The proof is analogous to the discussion preceding Lemma 18.2 in \cite{CFEBook}. In the above, $\Sigma_0\equiv D\Xi$ is the derivative in the outgoing direction to $\mathscr{I}^+$ and is chosen to equal $-1$ in this case.
\end{proof}

\subsubsection{Free data in the characteristic problem}

In the present setting, a  solution to the conformal Einstein field equations arises from a prescription of initial data on $\scri\cup\NN_\star$. This initial data consists of the conformal factor, derivatives of the conformal factor, spin coefficients, components of the rescaled Weyl spinor and components of the Ricci spinor. Due to the hierarchical nature of these equations, there exists sets of freely prescribable data that is completely unconstrained and from which all other parts of the data can be calculated. A particular choice of these sets is described by the following Proposition.

\begin{proposition}[Freely specifiable data for the characteristic problem]\label{lem:freedata}
In the gauge given by Lemmas ~\ref{lem:SpinCoeffCondition}
 and ~\ref{Lemma:ConformalGauge} in a neighbourhood of $\mcS_\star$, initial data for the conformal Einstein field equations on $\NN_\star\cup\scri$ can be computed from the reduced initial data set consisting of
\begin{align*}
&P^{A}, \,  \si, \ \Phi_{02},\ \phi_{1},\ \phi_{2}+\bar\phi_{2} \qquad \text{on} \ \mcS_\star \\
&\ga,  \ \phi_{4} \qquad \text{on} \ \scri \\
&  \phi_{0} \qquad \text{on} \ \NN_\star.
\end{align*}
\end{proposition}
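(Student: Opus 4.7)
\emph{Proof plan.} The strategy is to exploit the hierarchical structure of the NP Ricci identities, Bianchi identities and frame equations in Stewart's gauge when they are restricted to a characteristic hypersurface. The key observation is that, because $\bml$ is tangent to $\NN_\star$ and $\bmn$ is tangent to $\scri$, on each of these hypersurfaces the conformal field equations split into intrinsic equations (built out of $D$ and $\delta,\bar\delta$ on $\NN_\star$, and out of $\Delta$ and $\delta,\bar\delta$ on $\scri$) and transverse equations, the latter being off-hypersurface evolution equations that are not needed at the level of initial data. So the task is to show that the intrinsic system can be integrated from the reduced data.

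First I would use Lemmas \ref{lem:SpinCoeffCondition} and \ref{Lemma:ConformalGauge} to eliminate all gauge-fixed quantities: this sets $\ka=\nu=\ep=0$ and $\tau=\bar{\a}+\be$ throughout $\mathcal{V}$, $\Lambda=0$ in a neighbourhood of $\mcS_\star$, fixes $\Sigma_0,\dots,\Sigma_4$ on $\scri$, forces $\mu=\rh=0$ and $\Sigma_1=-1$ on $\mcS_\star$, and annihilates $\Phi_{00'}$ on $\NN_\star$ and $\Phi_{22'}$ on $\scri$. These, together with the reduced data prescribed on $\mcS_\star$, supply the initial conditions for the transport equations along the generators.

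Next I would integrate the two intrinsic hierarchies. On $\NN_\star$ the $D$-form of \eqref{framecoefficient2} and of the NP Ricci identities form a triangular ODE system that, starting from $P^{A}$, $\si$ and the other initial values at $\mcS_\star$, successively determines $P^{A}$, $Q$, $C^{\mathcal{A}}$ and the remaining spin coefficients $\rh,\tau,\pi,\a,\be,\mu,\la,\ga$ along $\NN_\star$; the $D$-Bianchi identities then propagate the rescaled Weyl components $\phi_1,\phi_2,\phi_3,\phi_4$ and the Ricci spinor components from the free datum $\phi_0$ and the boundary values at $\mcS_\star$. On $\scri$ the roles of $D$ and $\Delta$ are interchanged: the $\Delta$-equations propagate the frame and connection data from $\mcS_\star$, with $\ga$ left free because on $\scri$ the $\Delta$-equation for $\ga$ reduces to a residual gauge choice along the generators, and the $\Delta$-Bianchi equations determine $\phi_0,\phi_1,\phi_2,\phi_3$ from the free datum $\phi_4$. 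Consistency at the corner $\mcS_\star$ is automatic because the reduced data $\phi_1$, $\phi_2+\bar\phi_2$, $\Phi_{02}$ and $\si$ are precisely the quantities common to both integrations, and the imaginary part of $\phi_2$ at $\mcS_\star$ is fixed by an angular Bianchi identity (an integrability condition on the compact $\mcS_\star$) involving $\phi_1$ and $\si$, which is why $\phi_2+\bar\phi_2$ rather than $\phi_2$ itself suffices.

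The main obstacle is not analytic but combinatorial: one has to verify that every NP Ricci and Bianchi equation intrinsic to a hypersurface can be inserted into the hierarchy in an order that is genuinely triangular, so that no equation couples back to a quantity that has not yet been determined, and to confirm that the intrinsic equations actually exhaust the content of the conformal field equations on the hypersurface modulo the transverse (evolution) ones. Once this bookkeeping is in place, the proposition follows by reading off the unconstrained entries of the hierarchy as the free data listed in the statement.
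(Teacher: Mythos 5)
Your plan follows essentially the same route as the paper's proof: a hierarchical, triangular integration of the intrinsic conformal field equations, Ricci identities and frame equations in Stewart's gauge, first on $\mcS_\star$ and then along the generators of $\scri$ (via $\Delta$) and $\NN_\star$ (via $D$), with the listed quantities emerging as the unconstrained entries of the hierarchy. The one slip is in the corner argument: $\mathrm{Im}(\phi_2)$ on $\mcS_\star$ is fixed by the intrinsic component \eqref{eq:CFE3k} of the third conformal field equation in terms of $\sigma$ and $\Phi_{02}$ (not $\phi_1$), and the combinatorial bookkeeping you defer is precisely the content of the paper's proof, which carries it out equation by equation.
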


\begin{proof}






The strategy of the proof is to analyse the intrinsic part of the conformal filed equations associated to the various parts of the initial hypersurface $\mathscr{I}^+\cup\NN_\star$, starting by the intersection $\mcS_\star\equiv\mathscr{I}^+\cap\NN_\star$. 

\medskip
\noindent
 \textbf{Analysis on $\mcS_\star$.} Given $P^\mathcal{A}$, the intrinsic metric and the intrinsic differential operators $\delta$ and $\bar\delta$ are defined and thus we have the induced connection $\a-\bar\be$. Then $C^\mathcal{A}$ and $Q$ are given on $\mcS_\star$. Moreover,  from the intrinsic equation for $Q$, it follows that  $\tau=\bar\pi$.

\smallskip
Now, the intrinsic components of the first conformal field equation are given by 
\begin{align*}
- \Sigma_1 (\bar{\alpha} +  \beta )+ \Sigma_3 \rho + \Sigma_4 \sigma + \delta \Sigma_1={}&\Xi \Phi_{01},\\
 \Sigma_2( \bar{\alpha} + \beta )-  \Sigma_4 \bar{\lambda} -  \Sigma_3 \mu + \delta \Sigma_2={}&\Xi \Phi_{12},\\
\Sigma_3 (\bar{\alpha} -  \beta) -  \Sigma_1 \bar{\lambda} + \Sigma_2 \sigma + \delta \Sigma_3={}&\Xi \Phi_{02},\\
- \Sigma_4 (\bar{\alpha} - \beta) -  \Sigma_1 \mu + \Sigma_2 \rho + \delta \Sigma_4={}&- s
 + \Xi \Lambda
 + \Xi \Phi_{11}.
\end{align*}
Using the conformal gauge condition, these equations reduce to 
\begin{equation*}
\tau=\la=s=0 \qquad \text{on}\ \mcS_\star.
\end{equation*}
Since $\tau=\bar\al+\be$, and using the above, we can find $\al,\be$ and $\pi$. The intrinsic components of the structure equations are
\begin{align*}
\bar\delta \beta - \delta \alpha= &
 + \Phi_{11}
 -  \alpha \bar{\alpha}
 + 2 \alpha \beta
 -  \beta \bar{\beta},\\
0 =& \Phi_{21},\\
\bar\delta \sigma =& \Phi_{01}
 + 3 \alpha \sigma
 -  \bar{\beta} \sigma ,\\
\end{align*}
From the first and second of these equations, we can obtain $\Phi_{11}$ and $\Phi_{21}$ respectively. The third equation is an equation for  $\Phi_{01}$ if we prescribe $\si$.

\smallskip
The intrinsic components of the third conformal field equation are
\begin{align*}
\delta \Phi_{10}- \bar\delta \Phi_{01}={}& \phi_{2}
 - \bar{\phi}_{2}- 2 \Phi_{01} \alpha + 2 \Phi_{10} \bar{\alpha}-  \Phi_{20} \sigma
+ \Phi_{02} \bar{\sigma},\\
\delta \Phi_{11}
 - \bar\delta \Phi_{02}={}&\bar{\phi}_{3} - 2 \Phi_{02} \alpha
 + 2 \Phi_{02} \bar{\beta},\\
\delta \Phi_{21}- \bar\delta \Phi_{12}={}&- 2 \Phi_{21} \beta+ 2 \Phi_{12} \bar{\beta}.
\end{align*}
Thus, if we prescribe $\si$ and $\Phi_{02}$ then the first of these equations is an equation for $\phi_{2}-\bar\phi_{2}$. The second equation is an equation for $\phi_{3}$.

\smallskip
The remaining quantities not covered by the above considerations are part of the reduced data set in the statement of the theorem.




\medskip
\noindent
\textbf{Analysis on $\scri$.} The intrinsic equations for $P^\mathcal{A}$ and $Q$ are
\begin{align*}
&\Delta P^{\mathcal{A}}-\delta C^{\mathcal{A}}=(\ga-\bar{\ga}
  -\mu)P^{\mathcal{A}}-\bar{\lambda}\bar{P}^{\mathcal{A}},
   \\
  &\Delta Q=(\ga+\bar{\ga})Q.
\end{align*}
Thus, $P^\mathcal{A}$ and $Q$ on $\mathscr{I}^+$ can be obtained by solving these equation with data on $\mathcal{S}_\star$. From the structure equations, we have the following coupled system of intrinsic equations for $\mu$ and $\la$:
\begin{align*}
\Delta \mu=&
 - \lambda \bar{\lambda}
 - \gamma \mu
 - \bar{\gamma} \mu
 - \mu^2 ,\\
 \Delta \lambda  =&
 - 3 \gamma \lambda
 + \bar{\gamma} \lambda
 - 2 \lambda \mu.
 \end{align*}
 Given $\ga$ on $\scri$ we can integrate this system to obtain $\mu$ and $\la$ on $\scri$. Subsequently, the above equations for $P^{A}$ and $Q$ can be integrated on $\scri$. Then the structure equations 
 \begin{align*}
 \delta \gamma -\Delta \beta=&
 - \Phi_{12}
 -  \bar{\alpha} \gamma
 - 2 \beta \gamma
 + \beta \bar{\gamma}
 + \alpha \bar{\lambda}
 + \beta \mu
 + \gamma \tau
 + \mu \tau ,\\
 \bar\delta \gamma- \Delta \alpha  =&
 -  \bar{\beta} \gamma
 -  \alpha \bar{\gamma}
 + \beta \lambda
 + \alpha \mu
 + \lambda \tau
 + \gamma \bar{\tau},
 \end{align*}
 along with the component of the third CFE
 \begin{align*}
  \Delta \Phi_{12}={}& - 2 \Phi_{12} \bar{\gamma}
 - 2 \Phi_{21} \bar{\lambda}
 - 2 \Phi_{12} \mu,\\
 \end{align*}
 give $\a$, $\be$ (and thus $\tau$) and $\Phi_{12}$ on $\scri$. The structure equation
 \begin{align*}
 \Delta \pi =& \Phi_{21}
 -  \gamma \pi
 + \bar{\gamma} \pi
 -  \mu \pi
 -  \lambda \bar{\pi}
 -  \lambda \tau
 -  \mu \bar{\tau}
 \end{align*}
 then yields $\pi$. Accordingly, we see that we will have to specify $\ga$ on $\mathscr{I}^+$. Now, the structure equations
 \begin{align*}
 \delta \tau-\Delta \sigma =&
 - \Phi_{02}
 + \bar{\lambda} \rho
 - 3 \gamma \sigma
 + \bar{\gamma} \sigma
 + \mu \sigma
 -  \bar{\alpha} \tau
 + \beta \tau
 + \tau^2 ,\\
\bar\delta \tau- \Delta \rho =& -2 \Lambda
 -  \gamma \rho
 -  \bar{\gamma} \rho
 + \mu \rho
 + \lambda \sigma
 + \alpha \tau
 -  \bar{\beta} \tau
 + \tau \bar{\tau}
 \end{align*}
 and the component of the third conformal field equation
 \begin{align*}
 \Delta \Phi_{02}- \delta \Phi_{12}={}&- \bar{\phi}_{4'} \Sigma_1
  - 2 \Phi_{12} \beta
 + 2 \Phi_{02} \gamma
 - 2 \Phi_{02} \bar{\gamma}
 - 2 \Phi_{11} \bar{\lambda}
 -  \Phi_{02} \mu\nonumber
 \end{align*}
 can be integrated given $\phi_{4}$, $\Phi_{11}$ and $\Sigma_{1}$ on $\scri$. For $\Sigma_{1}$, we have the component of the first conformal field equation
 \begin{align*}
 - \Sigma_1( \gamma +\bar{\gamma} ) + \Delta \Sigma_1={}&s
 \end{align*}
 and the component of the second conformal field equation that gives $s$ on $\scri$, namely
 \begin{align*}
 \Delta s={}&0.
 \end{align*}
 For $\Phi_{11}$, we have the component of the third conformal field equation 
 \begin{align*}
 \Delta \Phi_{11}+ \Delta \Lambda
 - \delta \Phi_{21}={}&
 -  \Phi_{12} \alpha
 -  \Phi_{21} \bar{\alpha}
 + \Phi_{21} \beta
 -  \Phi_{12} \bar{\beta}
 -  \Phi_{20} \bar{\lambda} - 2 \Phi_{11} \mu.
 \end{align*}
 Thus, given $\phi_{4}$ on $\scri$, we have $\rho$ and $\si$. The final two Ricci components have intrinsic equations derived from the third conformal field equation
 \begin{align*}
 \Delta \Phi_{00} + 2 D \Lambda
 - \delta \Phi_{10}={}&- \bar{\phi}_{2} \Sigma_1
  - 2 \Phi_{01} \alpha
 - 4 \Phi_{10} \bar{\alpha}
 - 2 \Phi_{10} \beta
 - 2 \Phi_{01} \bar{\beta}
 + 2 \Phi_{00} \gamma\nonumber\\
& + 2 \Phi_{00} \bar{\gamma}
 -  \Phi_{00} \bar{\mu}
 + 2 \Phi_{11} \bar{\rho}
 + \Phi_{20} \sigma,\\
 \Delta \Phi_{01}+ \delta \Lambda
 - \delta \Phi_{11}={}&- \bar{\phi}_{3} \Sigma_1
  -  \Phi_{02} \alpha
 - 2 \Phi_{11} \bar{\alpha}
 - 2 \Phi_{11} \beta
 -  \Phi_{02} \bar{\beta}
 + 2 \Phi_{01} \gamma\nonumber\\
& -  \Phi_{10} \bar{\lambda}
 -  \Phi_{01} \mu
 + \Phi_{12} \rho
 + \Phi_{21} \sigma.
 \end{align*}
 The only unknowns in the above expressions are the components of rescaled Weyl for which we have the components of the fourth conformal field equations
 \begin{align}
 \begin{aligned}\label{eq:CFE4onScri}
 \Delta \phi_{0} - \delta \phi_{1}={}&-2 \phi_{1} \beta
 + 4 \phi_{0} \gamma
 -  \phi_{0} \mu
 + 3 \phi_{2} \sigma
 - 4 \phi_{1} \tau,\\
\Delta \phi_{1}- \delta \phi_{2}={}&2 \phi_{1} \gamma
 - 2 \phi_{1} \mu
 + 2 \phi_{3} \sigma
 - 3 \phi_{2} \tau,\\
\Delta \phi_{2}-\delta \phi_{3}={}&2 \phi_{3} \beta
 - 3 \phi_{2} \mu
 + \phi_{4} \sigma
 - 2 \phi_{3} \tau,\\
\Delta \phi_{3}- \delta \phi_{4}={}&4 \phi_{4} \beta
 - 2 \phi_{3} \gamma
 - 4 \phi_{3} \mu
 -  \phi_{4} \tau.
 \end{aligned}
 \end{align}
 Given $\phi_{4}$ on $\scri$, we can integrate this system for $\phi_{0}$,...,$\phi_{3}$ and thus obtain the remaining Ricci components by the above.
 



\medskip
\noindent
\textbf{Analysis on $\NN_\star$.} We have the following intrinsic equations for the components of the frame on $\NN_\star$:
\begin{align*}
&D C^{\mathcal{A}}=(\pi+\bar\tau)P^{\mathcal{A}}
  +(\bar{\pi}+\tau)\bar{P}^{\mathcal{A}} ,  \\
  &D P^{\mathcal{A}}=\rh P^{\mathcal{A}}+\si\bar{P}^{\mathcal{A}}.
\end{align*}
Since $Q=1$ on $\NN_\star$, the equation
\begin{align*}
\delta Q =(\tau-\bar\pi)Q
\end{align*}
means that $\tau=\bar\pi$ on $\NN_\star$.

\smallskip
Next, we have the following structure equations for $\rh$ and $\si$:
\begin{align*}
D \sigma  =& -\Xi\phi_{0}
 +2 \rho \sigma,\\
 D \rho  =& \rho^2
 + \sigma \bar{\sigma}.
\end{align*}
Given $\phi_{0}$, this constitutes a system that can be integrated to obtain $\rho$ and $\si$ which, in turn, which can be used to integrate the equation for $P^\mathcal{A}$.
Then we have the intrinsic equations 
\begin{align*}
 -  D \beta=& \Xi\phi_{1}
 -  \beta \rho
 -  \alpha \sigma
 -  \pi \sigma ,\\
- D \alpha =& \Phi_{10}
 -  \alpha \rho
 -  \pi \rho
 -  \beta \bar{\sigma}
\end{align*}
Along  with the component of the third conformal field equation 
\begin{align*}
D \Phi_{10} ={}&\bar{\phi}_{1} \Sigma_1
 -  \bar{\phi}_{0} \Sigma_3
 + 2 \Phi_{10} \rho+ 2 \Phi_{01} \bar{\sigma}.
\end{align*}
For $\Sigma_{1}$ and $\Sigma_{3}$, we have the intrinsic equations from the first conformal field equations 
\begin{align*}
D \Sigma_1={}&0,\\
 D \Sigma_3={}&\Xi \Phi_{01}+ \Sigma_1 \bar{\pi}.\\
\end{align*}
Thus, integrating the system
\begin{align*}
 -  D \beta=& \Xi\phi_{1}
 -  \beta \rho
 -  \alpha \sigma
 -  \pi \sigma ,\\
 - D \alpha =& \Phi_{10}
 -  \alpha \rho
 -  \pi \rho
 -  \beta \bar{\sigma},\\
D \Phi_{10} ={}&\bar{\phi}_{1} \Sigma_1
 -  \bar{\phi}_{0} \Sigma_3
 + 2 \Phi_{10} \rho+ 2 \Phi_{01} \bar{\sigma},\\
 D \Sigma_3={}&\Xi \Phi_{01}+ \Sigma_1 \bar{\pi} ,\\
 D \phi_{1}- \bar\delta \phi_{0}={}&-4 \phi_{0} \alpha
 + \phi_{0} \pi
 + 4 \phi_{1} \rho,\\
\end{align*}
one obtains $\a$, $\be$ (thus $\bar\pi$ and $\tau$), $\Sigma_{3}$, $\Phi_{10}$ and $\phi_{1}$ on $\NN_\star$. Using the latter, the equation for $C^\mathcal{A}$ can be integrated along $\NN_\star$. The remaining undetermined spin coefficients are $\mu$ and $\la$. We have the intrinsic structure equations
\begin{align*}
\delta \pi -D \mu= &2 \Lambda
 + \Xi\phi_{2}
 + \bar{\alpha} \pi
 -  \beta \pi
 -  \pi \bar{\pi}
 -  \mu \rho
 -  \lambda \sigma ,\\
\bar\delta \pi-  D \lambda =& \Phi_{20}
 -  \alpha \pi
 + \bar{\beta} \pi
 -  \pi^2
 -  \lambda \rho
 -  \mu \bar{\sigma}.
\end{align*}
We also have the coupled components of the third conformal field equation 
\begin{align*}
D \Phi_{02}-\delta \Phi_{01}={}&\phi_{1} \Sigma_3
 -  \phi_{0} \Sigma_2
 - 2 \Phi_{01} \beta
 + 2 \Phi_{01} \bar{\pi}
 + \Phi_{02} \rho+ 2 \Phi_{11} \sigma,\\
D \Phi_{11}+ D \Lambda
 - \bar\delta \Phi_{01}={}&\bar{\phi}_{2} \Sigma_1
 -  \bar{\phi}_{1} \Sigma_3
 - 2 \Phi_{01} \alpha
 + \Phi_{01} \pi
 + \Phi_{10} \bar{\pi}
 + 2 \Phi_{11} \rho
 + \Phi_{02} \bar{\sigma},
\end{align*}
the equation for $\Sigma_{2}$, 
 \begin{align*}
 D \Sigma_2={}&s
 -  \Xi \Lambda
 + \Xi \Phi_{11} + \Sigma_3 \pi +  \Sigma_4 \bar{\pi}
 \end{align*}
 and $s$
 \begin{align*}
 D s={}&- \Lambda \Sigma_1
 + \Phi_{11} \Sigma_1
 -  \Phi_{10} \Sigma_3
 -  \Phi_{01} \Sigma_4,
 \end{align*}
 where we recall that $\Sigma_{3}=\bar\Sigma_{4}$. Finally, we have the equation for $\phi_{2}$
 \begin{align*}
 D \phi_{2}- \bar\delta \phi_{1}={}&-2 \phi_{1} \alpha
 -  \phi_{0} \lambda
 + 2 \phi_{1} \pi
 + 3 \phi_{2} \rho.
 \end{align*}
 Putting this all together, we have the following coupled system of intrinsic equations on $\NN_\star$
 \begin{align*}
 \delta \pi -D \mu= &2 \Lambda
 + \Xi\phi_{2}
 + \bar{\alpha} \pi
 -  \beta \pi
 -  \pi \bar{\pi}
 -  \mu \rho
 -  \lambda \sigma ,\\
\bar\delta \pi-  D \lambda =& \Phi_{20}
 -  \alpha \pi
 + \bar{\beta} \pi
 -  \pi^2
 -  \lambda \rho
 -  \mu \bar{\sigma},\\
 D \Phi_{02}-\delta \Phi_{01}={}&\phi_{1} \Sigma_3
 -  \phi_{0} \Sigma_2
 - 2 \Phi_{01} \beta
 + 4 \Phi_{02} \epsilon
 + 2 \Phi_{01} \bar{\pi}
 + \Phi_{02} \rho+ 2 \Phi_{11} \sigma,\\
D \Phi_{11}+ D \Lambda
 - \bar\delta \Phi_{01}={}&\bar{\phi}_{2} \Sigma_1
 -  \bar{\phi}_{1} \Sigma_3
 - 2 \Phi_{01} \alpha
 + \Phi_{01} \pi
 + \Phi_{10} \bar{\pi}
 + 2 \Phi_{11} \rho
 + \Phi_{02} \bar{\sigma},\\
 D \Sigma_2={}&s
 -  \Xi \Lambda
 + \Xi \Phi_{11} + \Sigma_3 \pi +  \Sigma_4 \bar{\pi},\\
 D s={}&- \Lambda \Sigma_1
 + \Phi_{11} \Sigma_1
 -  \Phi_{10} \Sigma_3
 -  \Phi_{01} \Sigma_4,\\
D \phi_{2}- \bar\delta \phi_{1}={}&-2 \phi_{1} \alpha
 -  \phi_{0} \lambda
 + 2 \phi_{1} \pi
 + 3 \phi_{2} \rho,
 \end{align*}
 which we can integrate to obtain $\mu$, $\la$, $\Sigma_{2}$, $s$, $\Phi_{02}$, $\Phi_{11}$ and $\phi_{2}$ on $\NN_\star$. The remaining equations for the rescaled Weyl components, namely, 
 \begin{align*}
 D \phi_{3}- \bar\delta \phi_{2}={}&
 - 2 \phi_{1} \lambda
 + 3 \phi_{2} \pi
 + 2 \phi_{3} \rho,\\
D \phi_{4}- \bar\delta \phi_{3}={}&2 \phi_{3} \alpha
 - 3 \phi_{2} \lambda
 + 4 \phi_{3} \pi
 + \phi_{4} \rho,
 \end{align*}
 allow to determine $\phi_{3}$ and $\phi_{4}$ on $\mathcal{N}_\star$. The remaining Ricci components are $\Phi_{12}$ and $\Phi_{22}$ for which we have the intrinsic equations 
 \begin{align*}
 D \Phi_{12} +\delta \Lambda
 - \delta \Phi_{11}={}&\phi_{2} \Sigma_3
 -  \phi_{1} \Sigma_2
 -  \Phi_{10} \bar{\lambda}
 -  \Phi_{01} \mu
 + \Phi_{02} \pi
 + 2 \Phi_{11} \bar{\pi}
 + \Phi_{12} \rho\nonumber\\
& + \Phi_{21} \sigma,\\
D \Phi_{22}+ 2 \Delta \Lambda
 - \delta \Phi_{21}={}&\phi_{3} \Sigma_3
 -  \phi_{2} \Sigma_2
 + 2 \Phi_{21} \beta
 -  \Phi_{20} \bar{\lambda}
 - 2 \Phi_{11} \mu
 + 2 \Phi_{12} \pi
 + 2 \Phi_{21} \bar{\pi}\nonumber\\
& + \Phi_{22} \rho.
 \end{align*}
Integrating the first gives $\Phi_{12}$ and the second gives $\Phi_{22}$. Finally, the intrinsic equation for $\ga$ on $\NN_\star$ gives $\ga$ on $\NN$. It then follows from the previous discussion that we must specify $\phi_{0}$ on $\NN_\star$.
\end{proof}

\section{Spacetimes with Killing spinors}
\label{Section:KillingSpinorsGeneral}

In this section we discuss general properties of spacetimes endowed with a Killing spinor from the point of view of the asymptotic characteristic initial value problem.

\subsection{Properties of Killing spinors}

The key tool to be used in this article to characterise the Kerr spacetime is the notion of a Killing spinor. In this section we recall its definition and discuss conformal invariance. 
A \emph{Killing spinor} on a spacetime $(\tilde{\MM},\tilde{\bmg})$ is a symmetric rank 2 spinor
$\tilde{\kappa}_{AB}=\tilde{\kappa}_{(AB)}$ satisfying the \emph{Killing spinor
  equation}
\begin{equation}
\tilde\nabla_{A'(A} \tilde\kappa_{BC)}=0.
\label{eq:PhysicalKillingSpinor}
\end{equation}
Given a Killing spinor $\tilde\kappa_{AB}$, the spinor 
\[
\tilde\xi_{AA'} \equiv
\tilde\nabla^P{}_{A'} \tilde\kappa_{AP}
\]
 is the spinorial counterpart of a Killing vector ---i.e. it satisfies the equation
\[
\tilde\nabla_{AA'} \tilde\xi_{BB'} + \tilde\nabla_{BB'} \tilde\xi_{AA'}=0.
\]
A further consequence of the Killing spinor equation \eqref{eq:PhysicalKillingSpinor} is that $\tilde\kappa_{AB}$ satisfies the wave equation
\[
\tilde\square \tilde\kappa _{BC} -   \Psi_{BCAD}\tilde\kappa ^{AD} =0,
\label{PhysicalKSCandidate}
\]
where $\tilde\square \equiv \tilde\nabla_{PP'}\tilde\nabla^{PP'}$. It is important to observe that while any solution to \eqref{eq:PhysicalKillingSpinor} is a solution to \eqref{PhysicalKSCandidate}, the converse is not true. A solution to equation \eqref{PhysicalKSCandidate} will be said to be a \emph{Killing spinor candidate}. 

\medskip
The notion of Killing spinor is conformally invariant. More precisely, if $\tilde{\kappa}_{AB}$ is a solution to the \emph{physical Killing spinor equation} \eqref{eq:PhysicalKillingSpinor} then 
\[
\kappa_{AB}=\Xi^2 \tilde{\kappa}_{AB}
\]
is a solution to the \emph{unphysical Killing spinor equation}
\begin{equation}
\nabla_{A'(A} \kappa_{BC)}=0.
\label{eq:UnphysicalKillingSpinor}
\end{equation}

\begin{remark}
{\em Observe that the above transformation rules imply that 
\[
\kappa^{AB}=\tilde\kappa^{AB}.
\]}  
\end{remark}

The unphysical Killing spinor $\kappa_{AB}$ satisfies, in turn, the equation
\begin{equation}
\square \kappa _{BC}+4 \Lambda  \kappa _{BC} - \Xi   \phi _{BCAD}\kappa ^{AD} =0.
\label{eq:KSWaveEq}
\end{equation}
Solutions to the wave equation \eqref{eq:KSWaveEq} will also be known as \emph{Killing spinor candidates}. 

\medskip
Since all Killing vectors are also conformal Killing vectors, we have that the Killing vector $\tilde\xi^{AA'}$ on $(\tilde{\mathcal{M}},\tilde\bmg)$ extends, on $(\mathcal{M},\bmg)$ to a conformal Killing vector $\xi^{AA'}$ so that
\[
\tilde{\xi}^{AA'}=\xi^{AA'} \qquad \mbox{on} \qquad \mathcal{M}\setminus\mathscr{I}^+.
\]
Lowering indices in the definition of $\tilde{\xi}^{AA'}$ using \eqref{eq:epsilonconformaltransformation} yields
\begin{align*}
    \begin{aligned}
        \Xi^{-2}\xi_{AA'}=\tilde\xi_{AA'} &= \tilde\nabla_{A'}{}^Q\tilde\ka_{AQ}\\
        &=\tilde\ep^{QP}\tilde\nabla_{A'P}\tilde\ka_{AQ}\\
        &=\Xi\ep^{QP}(\nabla_{A'P}(\Xi^{-2}\ka_{AQ})+\Upsilon_{A'A}\Xi^{-2}\ka_{PQ}+\Upsilon_{A'Q}\Xi^{-2}\ka_{AP})\\
        &=\Xi^{-1}\nabla_{A'}{}^{Q}\ka_{AQ}-3\Xi^{-2}(\nabla_{A}{}^Q\Xi )\ka_{AQ}.
    \end{aligned}
\end{align*}
where we have used that $\tilde\ep^{AB}=\Xi\ep^{AB}$ and have set  $\Upsilon_{AA'}\equiv \Xi^{-1}\nabla_{AA'}\Xi$.   Thus
\begin{equation}
     \xi_{AA'} = \Xi\nabla_{A'}{}^{Q}\ka_{AQ}-3(\nabla_{A'}{}^Q\Xi )\ka_{AQ}
     \label{KSToCKV}
\end{equation}
is a conformal Killing vector. The expression for $\xi_{AA'}$ agrees with the expression for the conformal Killing vector in \cite{GasWil22}. It can be verified by an explicit computation that $\xi_{AA'}$ satisfies the conformal Killing equation
\[
\nabla_{AA'}\xi_{BB'} +\nabla_{BB'}\xi_{AA'} = \frac{1}{2}\nabla_{CC'}\xi^{CC'}\epsilon_{AB}\epsilon_{A'B'}. 
\]
Moreover, one also has that 
\[
\xi^{AA'}\nabla_{AA'}\Xi =\frac{1}{4}\Xi\nabla_{AA'}\xi^{AA'}.
\]

\begin{remark}
{\em As most of our analysis will be concerned with the unphysical spacetime $(\mathcal{M},\bmg)$, in the following, unless otherwise specified, for a Killing spinor candidate we will understand a solution to equation \eqref{eq:KSWaveEq}.}   
\end{remark}

\subsection{Propagation of the Killing spinor equation}\label{sec:KillingSpinorEqPropagation}

Initial data for the Killing spinor candidate, equation \eqref{eq:KSWaveEq} as above,  we want to identify conditions on the data such that if the Killing spinor equation vanishes on the initial hypersurface then the Killing spinor equation also vanishes in some neighbourhood of it. The standard strategy to address this problem is to construct a \emph{propagation system} for the Killing spinor equation ---see e.g. \cite{GarVal08b,BaeVal10a,BaeVal10b,GasWil22}.

\medskip
Following \cite{GasWil22}, the discussion of the propagation of the Killing spinor equation in the \emph{unphysical setting} requires the consideration of a system of wave equations for three spinorial zero-quantities constructed from a Killing spinor candidate $\kappa_{AB}$. Namely, one defines
\[
    H_{A'ABC}\equiv 3\nabla_{A'(A}\ka_{BC)},\qquad B_{ABCD}\equiv \ka_{(A}{}^{Q}\phi_{BCD)Q},\qquad
    F_{A'BCD}\equiv \nabla^Q{}_{A'}B_{QBCD}.
\]
In particular, the condition $H_{A'ABC}=0$ implies that the candidate $\kappa_{AB}$ is, in fact, a Killing spinor in the unphysical spacetime. The condition $B_{ABCD}=0$ is an alignment condition between the Killing spinor and the Weyl tensor ---the so-called \emph{Buchdahl constraint}.

\begin{remark}
{\em The Buchdahl constraint $\ka_{(A}{}^{Q}\phi_{BCD)Q}=0$ implies, in the interior of $(\mathcal{M},\bmg)$, that the spacetime is of Petrov type D or more specialised.}
\end{remark}

A lengthy computation then shows that we have the following system of wave equations for the zero quantities ---see \cite{GasWil22} for details:
\begin{proposition}
\label{Proposition:KillingSpinorPropagationSystem}
Let $\kappa_{AB}$ be a solution to the unphysical Killing spinor candidate equation
\eqref{eq:KSWaveEq}. Then the spinorial zero-quantities fields $H_{A'ABC}$, $B_{ABCD}$
and $F_{A'BCD}$ satisfy the system of wave equations
\begin{subequations}
\begin{align}
    \square H_{B'ABC} =& 6 \Lambda H_{B'ABC} - 12 \Xi F_{B'ABC} -
    12(\nabla^{D}{}_{B'}\Xi) B_{ABCD} \nonumber \\
    &\hspace{4cm}+ 3\nabla_{(A|B'|}Q_{BC)} - 6\Phi_{(A}{}^{D}{}_{|B'}{}^{A'}H_{A'|BC)D}, \label{Eq:WaveEqForH}\\
     \square B_{ABCD} =& 12\Lambda B_{ABCD} - 6\Xi
     \phi_{(AB}{}^{FG}B_{CD)FG} +
     2\nabla_{(A|A'}F^{A'}{}_{|BCD)}, \label{Eq:WaveEqForB}\\
    \square F_{A'BCD} =&   \frac{2}{3}\nabla^A{}_{A'}\nabla_{\bm\xi}\phi_{ABCD} -6 \Lambda F_{A'BCD} - 6 \Phi_{(B}{}^{A}{}_{\vert A'}{}^{B'} F_{B'\vert CD)A} - 9 (\nabla^{A}{}_{A'}\Lambda)B_{BCDA} \nonumber \\
    &\hspace{0cm} + 3B_{(BC}{}^{FG} \phi_{D)AFG} \nabla^{A}{}_{A'}\Xi  + 3B_{AF(BC} \nabla^{FB'}\Phi_{D)}{}^{A}{}_{A'B'}  \nonumber \\
    &\hspace{0cm}- 6 \Xi \phi_{(B}{}^{AFG} \nabla_{\vert GA'\vert}B_{CD)AF} \nonumber\\
    & + \nabla^A{}_{A'}\left(8\Lambda B_{ABCD} - 14\Xi \phi_{(AB}{}^{FG}B_{CD)FG} + \frac{2}{3}(\nabla_{FA'}\phi_{G(ABC})H^{A'}{}_{D)}{}^{FG}\right) \label{Eq:WaveEqForF},
\end{align}
\end{subequations}
where, for convenience, we have set
\[
 Q_{BC}\equiv \frac{1}{2}\nabla^{AA'}H_{AA'BC}.
\]
\end{proposition}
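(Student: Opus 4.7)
\medskip
\noindent\textbf{Proof proposal.} The plan is to derive each of the three wave equations by direct calculation, exploiting the hierarchy $H\to B\to F$ and using, at every step, (a) the Killing spinor candidate equation \eqref{eq:KSWaveEq}, (b) the spinorial Ricci identities \eqref{eq:SpinorRicci} to commute covariant derivatives, (c) the fourth conformal field equation (the Bianchi identity for $\phi_{ABCD}$), and (d) the irreducible decomposition of spinors to separate totally symmetric pieces from traces. In what follows I will always work in the unphysical spacetime and repeatedly use $\nabla_{Q'A}\nabla_B{}^{Q'}=\square_{AB}+\tfrac{1}{2}\epsilon_{AB}\square$ together with the companion identity $\nabla^{Q}{}_{A'}\nabla_{QB'}=-\square_{A'B'}-\tfrac{1}{2}\epsilon_{A'B'}\square$.

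\medskip
\noindent\emph{Wave equation for $H_{B'ABC}$.} Starting from $H_{B'ABC}=3\nabla_{B'(A}\kappa_{BC)}$, I would apply $\square=\nabla_{PP'}\nabla^{PP'}$ and commute one of the derivatives through $\nabla_{B'(A}$ using \eqref{eq:SpinorRicci}. This produces a term $\nabla_{B'(A}\square\kappa_{BC)}$, which can be eliminated with the candidate equation \eqref{eq:KSWaveEq}, together with curvature contractions from the commutators. The latter produce the $\Lambda H$ and the $\Phi\,H$ terms. The piece involving $\xi_{AA'}\phi_{ABCD}$ needs to be reorganised via the Buchdahl expression so that one reads off the term $-12\Xi F_{B'ABC}-12(\nabla^{D}{}_{B'}\Xi)B_{ABCD}$ together with the trace $\nabla_{(A|B'|}Q_{BC)}$; here the identity $Q_{BC}=\tfrac{1}{2}\nabla^{AA'}H_{AA'BC}$ lets me absorb the non-symmetric remainder. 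The clean form \eqref{Eq:WaveEqForH} then follows after symmetrising over $ABC$.

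\medskip
\noindent\emph{Wave equation for $B_{ABCD}$.} I would apply the Leibniz rule for $\square$ to $B_{ABCD}=\kappa_{(A}{}^{Q}\phi_{BCD)Q}$, giving a $\kappa\,\square\phi$ piece, a $\phi\,\square\kappa$ piece and a cross term $2(\nabla_{PP'}\kappa_{(A}{}^{Q})(\nabla^{PP'}\phi_{BCD)Q})$. The $\square\kappa$ piece is reduced via \eqref{eq:KSWaveEq}, producing the $\Lambda B$ term and the cubic curvature term $\Xi\phi_{(AB}{}^{FG}B_{CD)FG}$ after appropriate Fierz-type rearrangement. The $\square\phi$ piece is treated via the contracted second Bianchi identity: $\nabla_{QQ'}\phi_{ABC}{}^{Q}=0$ (the fourth CFE) gives, after commuting one more derivative and using \eqref{eq:SpinorRicci}, a curvature expression for $\square\phi_{ABCD}$ whose totally symmetric part combines with the cross term to yield the divergence $2\nabla_{(A|A'|}F^{A'}{}_{|BCD)}$. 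Careful bookkeeping of numerical coefficients, enforced by total symmetry in $ABCD$, closes the computation as \eqref{Eq:WaveEqForB}.

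\medskip
\noindent\emph{Wave equation for $F_{A'BCD}$.} Since $F_{A'BCD}=\nabla^{Q}{}_{A'}B_{QBCD}$, I would commute $\square$ with $\nabla^{Q}{}_{A'}$ using the split of $\nabla_{QA'}\square$ into $\square\nabla_{QA'}$ plus curvature commutators, then substitute the wave equation \eqref{Eq:WaveEqForB} already obtained for $B_{QBCD}$. This substitution is the source of the terms $-6\Lambda F$, the symmetrised $\Phi\,F$ term and, crucially, of the transport $\nabla^{A}{}_{A'}\nabla_{\bm\xi}\phi_{ABCD}$, which arises because the divergence of $B$ feeds into $\xi^{AA'}\nabla_{AA'}\phi_{BCDE}$ via \eqref{KSToCKV}. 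The remaining terms (the $\nabla\Lambda\cdot B$, the $B\,\phi\,\nabla\Xi$, the $B\,\nabla\Phi$, and the $\Xi\,\phi\,\nabla B$ contributions) are generated by commuting $\nabla^{Q}{}_{A'}$ through $\Xi\phi B$-expressions and by differentiating the $\Lambda$ and $\Xi\phi$ coefficients, with the last large bracket in \eqref{Eq:WaveEqForF} recording the uncontracted pieces to be differentiated.

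\medskip
\noindent\emph{Main obstacle.} None of the three derivations is conceptually subtle; the genuine difficulty is purely algebraic-combinatorial. One must keep track of very many terms, control their symmetry types by irreducible decomposition over each symmetric/antisymmetric index pair, and ensure that all non-symmetric remainders collapse into divergences of the other zero-quantities (so that the system is closed and homogeneous in $H$, $B$, $F$ modulo factors of $\Xi$, $\nabla\Xi$, $\Lambda$ and curvature). The most delicate step is the $F$ equation, because the term $\nabla_{\bm\xi}\phi_{ABCD}$ must be extracted cleanly from $(\nabla\kappa)(\nabla\phi)$ cross terms via \eqref{KSToCKV} without leaving residual terms that fail to vanish when $H=B=F=0$. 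This is exactly the book-keeping carried out in \cite{GasWil22}, whose derivation I would follow.
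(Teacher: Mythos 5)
Your outline matches the paper's approach: the paper offers no proof of Proposition \ref{Proposition:KillingSpinorPropagationSystem} beyond the remark that ``a lengthy computation'' establishes it, deferring all details to \cite{GasWil22}, which carries out precisely the commute-and-substitute computation you describe. Your strategy of applying $\square$ to each zero-quantity in the hierarchy $H\to B\to F$, eliminating $\square\kappa_{AB}$ via \eqref{eq:KSWaveEq} and $\square\phi_{ABCD}$ via the Bianchi identity, and absorbing the non-symmetric remainders into the other zero-quantities is the standard derivation and is consistent with the stated result.
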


\begin{remark}
    {\em The key structural property of the propagation system \eqref{Eq:WaveEqForH}-\eqref{Eq:WaveEqForF} is that it is \emph{homogeneous} in the sense that 
\[
H_{A'ABC}=0, \qquad B_{ABCD}=0
\qquad F_{A'BCD}=0,
\]    
is a solution to the propagation system. It follows from uniqueness of the solutions to the characteristic initial value problem that this solution is generated by the \emph{trivial data} 
\[
H_{A'ABC}=0, \qquad B_{ABCD}=0
\qquad F_{A'BCD}=0, \qquad \mbox{on} \qquad \mathcal{N}_\star \cup \mathscr{I}^+.
\]
As in the case the Killing spinor candidate equation \eqref{eq:KSWaveEq} it is not necessary to prescribe the initial value of the transversal derivatives.
    }
\end{remark}

The following proposition follows from the preceding discussion identifying the conditions on $\mathcal{N}_\star \cup \mathscr{I}^+$ ensuring the existence of a Killing spinor in $D^-(\mathcal{N}_\star\cup\mathscr{I}^+)$. 

\begin{proposition}
\label{Proposition:BasicKillingSpinorData}
    Assume that initial data for the wave equation \eqref{eq:KSWaveEq} can be found so that 
\begin{equation}
        H_{AA'BC}=0,\qquad  B_{ABCD}=0, \qquad F_{A'BCD}=0 \quad \mbox{on} \quad \mathcal{N}_\star\cup \mathscr{I}^+.
        \label{KillingSpinorConditions}
\end{equation}
Then the resulting $\ka_{AB}$ solving the wave equation \eqref{eq:KSWaveEq} is a Killing spinor. 
\end{proposition}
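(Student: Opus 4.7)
The plan is to reduce the statement to a uniqueness argument for a closed characteristic initial value problem for wave equations. The starting point is the homogeneous structural property of the propagation system \eqref{Eq:WaveEqForH}--\eqref{Eq:WaveEqForF} emphasised in the remark after Proposition \ref{Proposition:KillingSpinorPropagationSystem}: the identically zero triple $(H_{A'ABC},B_{ABCD},F_{A'BCD})=(0,0,0)$ is a solution whenever $\kappa_{AB}$ solves the candidate equation \eqref{eq:KSWaveEq}. Combined with the existence and uniqueness theory for the characteristic IVP of symmetric-hyperbolic wave systems in the ``narrow rectangle'' adjacent to $\mathscr{I}^+$ developed in \cite{CabChrWaf16}, this homogeneity is what does the work.

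I would organise the argument in three steps. First, using \cite{CabChrWaf16}, solve the candidate wave equation \eqref{eq:KSWaveEq} with characteristic data on $\mathcal{N}_\star\cup\mathscr{I}^+$ chosen consistently with the vanishing of $H_{A'ABC}$, $B_{ABCD}$ and $F_{A'BCD}$ on those hypersurfaces (note that, being a wave equation, no transversal derivatives of $\kappa_{AB}$ need be specified). This produces the candidate $\kappa_{AB}$ in $D^-(\mathcal{N}_\star\cup\mathscr{I}^+)$. Second, from this $\kappa_{AB}$ construct the zero-quantities $H_{A'ABC}$, $B_{ABCD}$, $F_{A'BCD}$; Proposition \ref{Proposition:KillingSpinorPropagationSystem} then guarantees that they satisfy the coupled system \eqref{Eq:WaveEqForH}--\eqref{Eq:WaveEqForF}. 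Third, apply uniqueness: by hypothesis the triple vanishes on $\mathcal{N}_\star\cup\mathscr{I}^+$, the zero triple solves the homogeneous propagation system in the whole development, and uniqueness of the characteristic IVP from \cite{CabChrWaf16} forces $H_{A'ABC}=B_{ABCD}=F_{A'BCD}=0$ throughout $D^-(\mathcal{N}_\star\cup\mathscr{I}^+)$. In particular $3\nabla_{A'(A}\kappa_{BC)}=H_{A'ABC}=0$, so $\kappa_{AB}$ solves the unphysical Killing spinor equation \eqref{eq:UnphysicalKillingSpinor}.

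The step that requires the most care is the third one, i.e.\ verifying that Proposition \ref{Proposition:KillingSpinorPropagationSystem} really delivers a system to which the results of \cite{CabChrWaf16} apply. Concretely, the principal parts of \eqref{Eq:WaveEqForH}--\eqref{Eq:WaveEqForF} are the scalar wave operators $\square$ acting diagonally on each zero-quantity, so the characteristic problem is of the same type as the one solved in \cite{CabChrWaf16}. The only subtle point is the right-hand side of \eqref{Eq:WaveEqForF}, which contains the term $\tfrac{2}{3}\nabla^{A}{}_{A'}\nabla_{\bm\xi}\phi_{ABCD}$ and other expressions involving derivatives of $\phi_{ABCD}$; to invoke homogeneity these must be rewritten, using the contracted Bianchi identity $\nabla_{DC'}\phi_{ABC}{}^{D}=0$ (the fourth CFE) together with the defining relations for $H$, $B$, $F$, as expressions that vanish identically when $(H,B,F)=(0,0,0)$. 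This is precisely the computation carried out in \cite{GasWil22} that allows the proposition to be stated, so it can be invoked here without repetition.

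With this in place the uniqueness argument closes and the proposition follows. The free part of the argument, and the object of attention of the next section, is the translation of the hypothesis \eqref{KillingSpinorConditions} into explicit conditions on the reduced initial data set of Proposition \ref{lem:freedata}; this is a separate matter from the wave-propagation step sketched above.
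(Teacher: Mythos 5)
Your proposal is correct and follows essentially the same route as the paper: the paper establishes this proposition via the remark preceding it, namely that the propagation system \eqref{Eq:WaveEqForH}--\eqref{Eq:WaveEqForF} is homogeneous in $(H_{A'ABC},B_{ABCD},F_{A'BCD})$, so that by uniqueness for the characteristic initial value problem the trivial data on $\mathcal{N}_\star\cup\mathscr{I}^+$ generates the trivial solution, whence $H_{A'ABC}=0$ in the development and $\kappa_{AB}$ is a Killing spinor. Your additional care about recasting the $\nabla^{A}{}_{A'}\nabla_{\bm\xi}\phi_{ABCD}$ term as homogeneous in the zero-quantities is the same point the paper delegates to \cite{GasWil22}, so the two arguments coincide.
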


In Sections \ref{sec:CKSID} we will analyse the interdependence of the Killing spinor conditions \eqref{KillingSpinorConditions} with the aim of obtaining a reduced set of conditions on the characteristic data ensuring the existence of a Killing spinor.

\subsection{The conformal Killing vector} \label{sec:CKV}
The conformal Killing vector associated to the (unphysical) Killing spinor via equation \eqref{KSToCKV} provides important information about the behaviour of $\kappa_{AB}$ in a neighbourhood of $\scri\cup \NN_\star$. As we are interested in a characterisation of the Kerr spacetime, it will be required that the spinor $\xi^{AA'}$ at $\scri$ is parallel to the generators of the conformal boundary ---consistent with the presence, in the Kerr spacetime, of a Killing vector which is asymptotically a time translation. 

\medskip
Evaluating equation \eqref{KSToCKV} on $\scri$ where, by definition, one has that $\Xi\simeq0$ yields
\[
\xi_{AA'}\simeq - 3(\nabla^Q{}_{A'} \Xi) \kappa_{AQ}. 
\]
From Lemma \ref{Lemma:ConformalGauge} it follows then that $\nabla_{AA'}\Xi \simeq -\iota_A \bar{\iota}_{A'} $, so that, in turn, one has
\begin{equation}
\xi_{AA'}\simeq 3( \kappa_2 o_A\bar{\iota}_{A'} -\kappa_1 \iota_A \bar{\iota}_{A'}).
\label{KVonScri}
\end{equation}

\medskip
In our gauge, the NP vector $\bmn$ is tangent to the generators of $\scri$. It follows then from  the discussion in the first paragraph in this section that
\[
\bmxi \simeq \bmn \simeq \bmpartial_u. 
\]
Let $n^{AA'}$ denote the spinorial counterpart of $\bmn$. As $n^{AA'}=\iota^A\bar{\iota}^{A'}$ one can choose the normalisation of $\bmxi$ so that 
\[
\xi^{AA'}\simeq \iota^A\iota^{A'}.
\]
Comparing the latter with equation \eqref{KVonScri} one concludes that
\begin{equation}
\kappa_1 \simeq -\frac{1}{3}, \qquad \kappa_2\simeq 0.
\label{ComponentsKappaScri}
\end{equation}
The component $\kappa_0$ on $\scri$ remains unconstrained by the above considerations. 

\medskip
Now, let
\[
\xi_{AA'} = \xi_{00'} \iota_A\bar{\iota}_{A'} + \xi_{01'} \iota_A \bar{o}_{A'} + \xi_{10'} o_A \bar{\iota}_{A'} + \xi_{11'}o_{A}\bar{o}_{A'}. 
\]
For future use, we provide the detailed expressions of the components of the spinor $\xi_{AA'}$. One has that
\begin{subequations}
\begin{align}
\xi_{00'}={}&\kappa_{1} (2 \Xi \rho + 3 D \Xi)
 + \kappa_{0} (\Xi \pi - 3 \bar\delta \Xi)
 + \Xi (- D \kappa_{1} + \edt' \kappa_{0}),\\
- \xi_{01'}={}&\Xi \kappa_{2} \sigma
 - 2 \Xi \kappa_{1} \tau
 + \kappa_{0} (2 \Xi \gamma -  \Xi \mu + 3 \Delta \Xi)
 -  \Xi \Delta \kappa_{0}
 - 3 \kappa_{1} \delta \Xi
 + \Xi \edt \kappa_{1},\\
- \xi_{10'}={}&\Xi \kappa_{0} \lambda
 -  \Xi \kappa_{2} \rho
 - 3 \kappa_{2} D \Xi
 + \Xi D \kappa_{2}
 + \kappa_{1} (-2 \Xi \pi + 3 \bar\delta \Xi)
 -  \Xi \edt' \kappa_{1},\\
\xi_{11'}={}&\kappa_{1} (2 \Xi \mu - 3 \Delta \Xi)
 + \kappa_{2} (\Xi \tau + 3 \delta \Xi)
 + \Xi (\Delta \kappa_{1} -  \edt \kappa_{2}).
\end{align}
\end{subequations}
In particular, at $\scri$ one has that 
\begin{align*}
\xi_{00'}\simeq{}&3 \kappa_{1} D \Xi ,\\
\xi_{01'}\simeq{}&0,\\
\xi_{10'}\simeq{}&3 \kappa_{2} D \Xi ,\\
\xi_{11'}\simeq{}&0.
\end{align*}

\medskip
Applying Lemma \ref{Lemma:ConformalGauge} with equation \eqref{ComponentsKappaScri} one has that
\begin{align}
\xi_{00'}\simeq{}&-D\Xi ,\\
\xi_{01'}\simeq{}&0,\\
\xi_{10'}\simeq{}&0 ,\\
\xi_{11'}\simeq{}&0.
\end{align}

\subsection{An algebraic Killing spinor candidate}
\label{Section:AlgebraicKillingSpinorCandidate}
In \cite{BaeVal11b} a formula has been given that allows to compute an \emph{algebraic Killing spinor candidate} in terms of the various components of the Weyl curvature ---see Proposition 3.1 in that reference. This expression is computable in a generic spacetime and has the property of providing an expression for a Killing spinor if the spacetime in of Petrov type D. More precisely, one has the following:

\begin{definition}
\label{Definition:KSC}
Let $\zeta_{AB}$ be a symmetric spinor satisfying
\[
\zeta_{AB}\neq 0, \qquad \psi^{-1}\Psi_{PQRS}\zeta^{PQ}\zeta^{RS} -\frac{1}{6}\zeta_{PQ}\zeta^{PQ}\neq 0,
\]
and $\Xi$ denote the conformal factor connection the unphysical spacetime to a vacuum physical one. In terms of the above, the symmetric spinor given by 
\begin{equation}
\breve{\kappa}_{AB}\equiv \psi^{-1/3}\Theta^{-1/2}\Xi^2\big(\psi^{-1} \Psi_{ABPQ}\zeta^{PQ}-\frac{1}{6}\zeta_{AB} \big),
\label{Formula:KSCandidate}
\end{equation}
with 
\begin{eqnarray*}
&& \Theta \equiv \psi^{-1}\Psi_{PQRS}\zeta^{PQ}\zeta^{RS}-\frac{1}{6}\zeta_{PQ}\zeta^{PQ},\\
&& \psi\equiv  -18\mathcal{J}\mathcal{I}^{-1},
\end{eqnarray*}
where
\[
\mathcal{I}\equiv \frac{1}{2}\Psi_{ABCD}\Psi^{ABCD}, \qquad \mathcal{J}\equiv \frac{1}{6}\Psi_{ABCD}\Psi^{CDEF}\Psi_{EF}{}^{AB},
\]
and $\Xi$ is the conformal factor connecting to the physical spacetime
is called an algebraic Killing spinor candidate.
\end{definition}

Given an algebraic Killing spinor candidate one has the following:

\begin{proposition}
If an unphysical spacetime $(\mathcal{M},\bmg)$ is of  Petrov type D and conformal to a vacuum spacetime and $\zeta_{AB}$ is a symmetric spinor satisfying the properties of Definition \ref{Definition:KSC}, then the spinor $\breve{\kappa}_{AB}$ is a Killing spinor. The resulting Killing spinor is independent of the choice of $\zeta_{AB}$. 
\end{proposition}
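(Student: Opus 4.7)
The plan is to reduce the claim to its physical-spacetime counterpart established in~\cite{BaeVal11b} via the conformal invariance of the Killing spinor equation recorded in Section~\ref{Section:KillingSpinorsGeneral}, and to secure the $\zeta_{AB}$-independence by a direct calculation in a principal dyad. First I would track the conformal weights through the formula~\eqref{Formula:KSCandidate}. Standard conformal-rescaling rules for the Weyl spinor, combined with those for $\ep_{AB}$ in~\eqref{eq:epsilonconformaltransformation}, imply that the invariants $\mathcal{I}$, $\mathcal{J}$, hence $\psi$, the combination $\psi^{-1}\Psi_{ABPQ}\zeta^{PQ}-\tfrac{1}{6}\zeta_{AB}$, and the scalar $\Theta$ all transform with definite conformal weights whose net effect on the product $\psi^{-1/3}\Theta^{-1/2}(\psi^{-1}\Psi_{ABPQ}\zeta^{PQ}-\tfrac{1}{6}\zeta_{AB})$ is a single factor of $\Xi^{-2}$. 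The explicit $\Xi^{2}$ prefactor in~\eqref{Formula:KSCandidate} absorbs this exactly, yielding the identity $\breve\kappa_{AB}=\Xi^{2}\tilde{\breve\kappa}_{AB}$, where $\tilde{\breve\kappa}_{AB}$ is the physical algebraic candidate of~\cite{BaeVal11b}. Since $(\tilde{\MM},\tilde\bmg)$ is Petrov type D vacuum, that reference gives $\tilde\nabla_{A'(A}\tilde{\breve\kappa}_{BC)}=0$; the conformal rescaling rule then yields $\nabla_{A'(A}\breve\kappa_{BC)}=0$ on $\MM\setminus\{\Xi=0\}$, and the equation extends smoothly to $\scri$ by continuity of all ingredients.

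Next, to prove independence from $\zeta_{AB}$, I would specialise to a principal dyad $\{o^{A},\iota^{A}\}$ adapted to the Petrov type D Weyl spinor, in which $\Psi_{ABCD}=6\Psi_{2}\,o_{(A}o_{B}\iota_{C}\iota_{D)}$ and $\psi$ reduces to a universal multiple of $\Psi_{2}$. Expanding $\zeta_{AB}=\zeta_{0}o_{A}o_{B}+2\zeta_{1}o_{(A}\iota_{B)}+\zeta_{2}\iota_{A}\iota_{B}$ and substituting into~\eqref{Formula:KSCandidate}, the contraction $\Psi_{ABPQ}\zeta^{PQ}$ decomposes into a combination whose $o_{A}o_{B}$ and $\iota_{A}\iota_{B}$ components precisely cancel against $-\tfrac{1}{6}\zeta_{AB}$, leaving a multiple of $o_{(A}\iota_{B)}$ whose prefactor is a quadratic form in $(\zeta_{0},\zeta_{1},\zeta_{2})$ proportional to $\Theta$. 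The normalisation $\Theta^{-1/2}$ then cancels every trace of $\zeta_{AB}$, producing a multiple of $\Xi^{2}\Psi_{2}^{-1/3}o_{(A}\iota_{B)}$, which is the Walker--Penrose Killing spinor in its conformally rescaled form. The non-degeneracy hypotheses $\zeta_{AB}\neq 0$ and $\Theta\neq 0$ of Definition~\ref{Definition:KSC} are exactly what make both the construction and this cancellation well-defined.

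The main obstacle I anticipate lies in this last algebraic bookkeeping: because $\Theta$ is a quadratic form in $\zeta_{AB}$ that appears to a non-integer power, the cancellation against the bracket must be verified branch-by-branch and identically in $(\zeta_{0},\zeta_{1},\zeta_{2})$, rather than merely holding on a dense set. A cleaner alternative, which I would pursue in parallel, uses uniqueness: in a vacuum Petrov type D spacetime the Killing spinors form (up to a multiplicative constant) a one-parameter family aligned with $o_{(A}\iota_{B)}$, so once $\breve\kappa_{AB}$ has been shown to satisfy the Killing spinor equation it must be a multiple of the canonical Walker--Penrose Killing spinor, and matching a single dyad component pins the constant down in a manner manifestly independent of $\zeta_{AB}$.
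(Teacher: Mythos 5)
The paper gives no proof of this proposition at all --- it is imported from Proposition 3.1 of \cite{BaeVal11b} together with the conformal invariance of the Killing spinor equation --- so your reconstruction is along the intended lines, and your second paragraph is essentially right. It can be made fully explicit: in a principal dyad with $\Psi_{ABCD}=6\Psi_2\,o_{(A}o_B\iota_C\iota_{D)}$ one has $\mathcal{I}=3\Psi_2^2$, $\mathcal{J}=-\Psi_2^3$, hence $\psi=6\Psi_2$; using the contraction identities recorded in Section \ref{sec:confrepKerr}, the bracket collapses exactly to $\psi^{-1}\Psi_{ABPQ}\zeta^{PQ}-\tfrac{1}{6}\zeta_{AB}=\zeta_1\,o_{(A}\iota_{B)}$ with $\zeta_1\equiv\zeta_{AB}o^A\iota^B$, and $\Theta=\zeta_1^2$. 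The non-degeneracy hypotheses of Definition \ref{Definition:KSC} are precisely $\zeta_1\neq 0$, the $\zeta$-dependence cancels up to the unavoidable square-root (and cube-root) branch, and the uniqueness claim is correctly read modulo that overall constant, since the Killing spinors of a type D vacuum form a one-complex-dimensional family.

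The genuine gap is the conformal-weight count in your first paragraph. With the paper's conventions ($\Psi_{ABCD}=\tilde\Psi_{ABCD}$ for the index-down Weyl spinor and $\epsilon^{AB}=\Xi^{-1}\tilde\epsilon^{AB}$ from \eqref{eq:epsilonconformaltransformation}) one gets $\mathcal{I}=\Xi^{-4}\tilde{\mathcal{I}}$, $\mathcal{J}=\Xi^{-6}\tilde{\mathcal{J}}$, so $\psi^{-1/3}=\Xi^{2/3}\tilde\psi^{-1/3}$; the bracket is conformally invariant for fixed $\zeta_{AB}$ and $\Theta^{-1/2}=\Xi\,\tilde\Theta^{-1/2}$. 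The net factor in front of the physical candidate is then $\Xi^{2/3}\cdot\Xi\cdot\Xi^{2}=\Xi^{11/3}$, not $\Xi^{2}$, and since the construction is $\zeta$-independent this cannot be repaired by rescaling $\zeta_{AB}$. A function multiple $\Xi^{5/3}$ of an unphysical Killing spinor is not a Killing spinor, so the proposition read with every ingredient of \eqref{Formula:KSCandidate} unphysical would actually fail; one can check this concretely against the Kerr data of Section \ref{sec:confrepKerr}, where the literal unphysical evaluation gives $\Omega^{5/3}\varkappa\,o_{(A}\iota_{B)}$ rather than $\varkappa\,o_{(A}\iota_{B)}$. The reading under which the statement (and your first paragraph's conclusion $\breve\kappa_{AB}=\Xi^2\tilde{\breve\kappa}_{AB}$) is correct is that $\psi$, $\Theta$ and the bracket are computed entirely with physical quantities, the explicit $\Xi^2$ being exactly the conversion $\kappa_{AB}=\Xi^2\tilde\kappa_{AB}$. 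You should state that reading explicitly --- which makes the reduction to \cite{BaeVal11b} immediate --- rather than assert a weight cancellation to $\Xi^{-2}$ that does not hold.
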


\begin{remark}
\label{Remark:Data}
{\em  For a suitable choice of $\zeta_{AB}$, Formula \eqref{Formula:KSCandidate} can be used to compute characteristic initial data for the Killing spinor equation on $\mathcal{N}_\star\cup\scri$. This data is \em{correct} in the sense that it provides the right values for the components of the Killing spinor in the case that the spacetime is endowed with a Killing spinor. }
\end{remark}

\subsection{A Killing spinor characterisation of Kerr}\label{sec:Kerrcharac}
The interest on the notion of Killing spinors in this article stems from the fact that they provide a characterisation of the Kerr family of spacetimes ---see e.g. \cite{ColVal16a}. In this section, we outline how the Kerr family can be characterised by Killing spinors. 

\medskip
 Let $\tilde\xi^a$ be a Killing vector on the spacetime $(\tilde\MM,\tilde\bmg)$ with spinorial counterpart $\tilde\xi^{AA'}$. The \emph{Ernst form} of the Killing vector $\tilde\xi^a$ is the covector  
\begin{equation*}
    \tilde{\chi}_a\equiv 2\tilde{\xi}^b \tilde{\mathcal{H}}_{ba},
\end{equation*}
where $\tilde{\mathcal{H}}_{ab}$ is the \emph{Killing form} of $\tilde\xi^a$ ---that is
\[
\tilde{\mathcal{H}}_{ab} \equiv  \tilde\nabla_{[a}\tilde\xi_{b]} = \tilde\nabla_a \tilde\xi_b.
\]
The Ernst form is closed and, thus, on simply connected domains also exact. The \emph{Ernst potential} $\tilde\chi$  is the scalar field such that
\begin{equation*}
    \tilde\chi_a=\tilde\nabla_a \tilde\chi.
\end{equation*}
Let $\tilde{H}_{AA'BB'}$ denote the spinorial counterpart of the Killing form $\tilde{\mathcal{H}}_{ab}$. As a consequence of the antisymmetry of the latter one has that 
\[
\tilde{H}_{AA'BB} = \tilde\eta_{AB}\epsilon_{A'B'}+ \bar{\tilde\eta}_{A'B'}\epsilon_{AB},
\]
where $\tilde\eta_{AB}=\tilde\eta_{(AB)}$. It can be readily verified that
\begin{equation}
\tilde\eta_{AB}=\frac{1}{2}\tilde\nabla_{AA'}\tilde\xi_B{}^{A'}.
\label{Definition:TildeEta}
\end{equation}
Finally, define 
\[
\tilde{\mathcal{H}}^2\equiv 8\tilde\eta_{AB}\tilde\eta^{AB}.
\]

\medskip
Of particular importance in the characterisation of the Kerr spacetime is an \emph{alignment condition} between the Weyl tensor and the Killing form. More precisely, we say that the Weyl tensor and the Killing spinor are aligned if there exists a scalar function $\tilde{H}$ such that the spinor $\Psi_{ABCD}$ and $\tilde{\eta}_{AB}$ are related via
\begin{equation}
\Psi_{ABCD} = 2\tilde{H} \tilde\eta_{(AB}\tilde\eta_{CD)}.
\label{PhysicalAlignmentCondition}
\end{equation}

\medskip
In \cite{ColVal16a} the following theorem was proven characterising the Kerr spacetime.

\begin{theorem}
\label{Theorem:SpinorialCharacterisationKerr}
Let $(\tilde{\mathcal{M}},\tilde\bmg)$ denote a smooth vacuum spacetime endowed with
a Killing spinor $\tilde\kappa_{AB}$ satisfying $\tilde\kappa_{AB}\tilde\kappa^{AB}\neq 0$,
such that the spinor $\tilde\xi_{AA'}\equiv \tilde\nabla^B{}_{A'}\tilde \kappa_{AB}$ is
Hermitian. Then there exist two complex constants $\mathfrak{l}$ and
$\mathfrak{c}$ such that 
\begin{equation}
\tilde{H} = \frac{6}{\mathfrak{c}-\tilde{\chi}}, \qquad \tilde{\mathcal{H}}^2 = - \mathfrak{l}( \mathfrak{c}-\tilde\chi)^4.
\label{Condition:Mars}
\end{equation}
\end{theorem}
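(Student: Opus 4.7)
My plan is to prove the theorem in two stages: first establish an algebraic relation between $\tilde\eta_{AB}$, $\tilde\kappa_{AB}$ and $\Psi_{ABCD}$ using the integrability of the Killing spinor equation, and then turn the vacuum Bianchi identity into an exact differential equation linking $\tilde{H}$ and $\tilde{\mathcal{H}}^2$ to the Ernst potential $\tilde\chi$, from which \eqref{Condition:Mars} follows by integration.

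First, I would recall that in vacuum the Killing spinor equation $\tilde\nabla_{A'(A}\tilde\kappa_{BC)}=0$ is equivalent to $\tilde\nabla_{A'A}\tilde\kappa_{BC}=\tfrac{2}{3}\epsilon_{A(B}\tilde\xi_{C)A'}$, and differentiating once more yields, on the one hand, the Buchdahl constraint $\Psi_{(ABC}{}^{Q}\tilde\kappa_{D)Q}=0$ and, on the other, an explicit expression for $\tilde\eta_{AB}=\tfrac{1}{2}\tilde\nabla_{AA'}\tilde\xi_B{}^{A'}$ of the form $\tilde\eta_{AB}=\alpha\,\Psi_{ABCD}\tilde\kappa^{CD}$ for a fixed numerical constant $\alpha$. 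Since $\tilde\kappa_{AB}\tilde\kappa^{AB}\neq 0$, the Buchdahl constraint forces the spacetime to be of Petrov type $D$ with $\tilde\kappa_{AB}$ factorising through the principal spinors; consequently $\Psi_{ABCD}\tilde\kappa^{CD}$ is itself proportional to $\tilde\kappa_{AB}$, which yields at once the alignment condition \eqref{PhysicalAlignmentCondition} together with an explicit formula for $\tilde{H}$ as a function of $\tilde\kappa_{AB}\tilde\kappa^{AB}$ and of the Petrov-$D$ scalar $\Psi_2$.

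With that algebraic setup, the second step is to compute $\tilde\nabla_{AA'}\tilde{H}$. Using $\tilde\eta_{AB}\propto \tilde\kappa_{AB}$ together with the Killing spinor equation to evaluate $\tilde\nabla_{AA'}\tilde\kappa_{BC}$, and invoking the vacuum Bianchi identity $\tilde\nabla^A{}_{A'}\Psi_{ABCD}=0$ to control the derivative of $\Psi_2$, I expect to obtain the identity $\tilde\nabla_{AA'}\tilde{H}^{-1}=-\tfrac{1}{6}\tilde\nabla_{AA'}\tilde\chi$, i.e.
\[
\tilde\nabla_a\!\left(\tilde{H}^{-1}+\tfrac{1}{6}\tilde\chi\right)=0.
\]
The Hermiticity assumption on $\tilde\xi_{AA'}$ is used here to ensure that the Ernst 1-form $\tilde\chi_a=2\tilde\xi^b\tilde{\mathcal{H}}_{ba}$ really does arise as the differential of a well-defined scalar $\tilde\chi$. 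Integrating on a connected domain produces a complex constant $\mathfrak{c}$ with $\tilde{H}=6/(\mathfrak{c}-\tilde\chi)$, which is the first of the two claimed identities.

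For the second identity, I would compute $\tilde\nabla_a\tilde{\mathcal{H}}^2$ with $\tilde{\mathcal{H}}^2=8\tilde\eta_{AB}\tilde\eta^{AB}$ using the same two ingredients as above: expressing $\tilde\eta_{AB}$ through $\tilde\kappa_{AB}$ and using the Killing spinor equation and the vacuum Bianchi identity to eliminate all derivatives. The structural outcome I expect is that the combination $\tilde{\mathcal{H}}^2\,\tilde{H}^{-4}$ is covariantly constant, equivalently
\[
\tilde\nabla_a\!\left(\tilde{\mathcal{H}}^2(\mathfrak{c}-\tilde\chi)^{-4}\right)=0,
\]
after substituting $\tilde{H}=6/(\mathfrak{c}-\tilde\chi)$. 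Integration then gives a second complex constant $\mathfrak{l}$ with $\tilde{\mathcal{H}}^2=-\mathfrak{l}(\mathfrak{c}-\tilde\chi)^4$. The main obstacle I anticipate is the second step: the Petrov-$D$ algebraic manipulations give $\tilde{H}$ and $\tilde{\mathcal{H}}^2$ cleanly in terms of $\Psi_2$ and $\tilde\kappa_{AB}\tilde\kappa^{AB}$, but matching the numerical factor $\tfrac{1}{6}$ and showing that the curl of the one-form $d(\tilde{H}^{-1})+\tfrac{1}{6}d\tilde\chi$ indeed vanishes (without residual Ricci or Weyl terms) requires a careful bookkeeping of the commutator and Bianchi contributions; the Hermiticity of $\tilde\xi_{AA'}$ is what ensures that the otherwise complex derivatives assemble into a genuine gradient.
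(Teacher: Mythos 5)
The first thing to note is that the paper does not prove Theorem \ref{Theorem:SpinorialCharacterisationKerr} at all: it is imported from Cole and Valiente Kroon \cite{ColVal16a} (building on Mars \cite{Mar99,Mar00}), and the only related material actually proved here is Lemma \ref{Lemma:Concommitants}, which records the algebraic concomitants $\tilde\eta_{AB}=-\tfrac{3}{4}\Psi_{ABCD}\tilde\kappa^{CD}$ and $\tilde\chi=\tfrac{9}{2}\tilde\kappa^{AB}\tilde\kappa^{CD}\Psi_{ABCD}$. Measured against the cited proof, your outline is structurally the right one: Killing spinor equation $\Rightarrow$ Buchdahl constraint and the formula for $\tilde\eta_{AB}$; non-nullity of $\tilde\kappa_{AB}$ $\Rightarrow$ Petrov type D with $\tilde\kappa_{AB}$ built from the repeated principal spinors $\Rightarrow$ the alignment condition \eqref{PhysicalAlignmentCondition}; then integration of two first-order identities to produce $\mathfrak{c}$ and $\mathfrak{l}$. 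One small caveat: the Buchdahl constraint with non-null $\tilde\kappa_{AB}$ forces type D \emph{or} $\Psi_{ABCD}=0$; in the conformally flat branch $\tilde\eta_{AB}=0$, the alignment condition is vacuous and $\tilde H$ is undefined, so that case must be excluded or treated separately.

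The substantive criticism is that the two identities carrying all the analytic content, $\tilde\nabla_a\bigl(\tilde H^{-1}+\tfrac16\tilde\chi\bigr)=0$ and $\tilde\nabla_a\bigl(\tilde{\mathcal{H}}^2(\mathfrak{c}-\tilde\chi)^{-4}\bigr)=0$, are only asserted ("I expect to obtain"), so as written this is a plan rather than a proof. To see where the work actually sits, write $\tilde\kappa_{AB}=\tilde\varkappa\, o_{(A}\iota_{B)}$ and $\Psi_{ABCD}=\psi\, o_{(A}o_B\iota_C\iota_{D)}$; the contraction identities listed in Section \ref{sec:confrepKerr} give $\tilde\eta_{AB}=\tfrac14\psi\tilde\varkappa\, o_{(A}\iota_{B)}$, $\tilde H=8/(\psi\tilde\varkappa^2)$, $\tilde\chi=\tfrac34\psi\tilde\varkappa^2$ (with the canonical potential of \eqref{ErnstPotentialConcomitants}) and $\tilde{\mathcal{H}}^2=-\tfrac14\psi^2\tilde\varkappa^2$. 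The first relation in \eqref{Condition:Mars} is then an algebraic identity, $6/\tilde H=\tilde\chi$, \emph{provided} one has already shown that $\tfrac{9}{2}\tilde\kappa^{AB}\tilde\kappa^{CD}\Psi_{ABCD}$ really is a potential for the Ernst form --- which is exactly the Bianchi-plus-Buchdahl computation in the proof of Lemma \ref{Lemma:Concommitants}, not a free lunch. The second relation reduces to the constancy of $\psi\tilde\varkappa^3$ (since $\tilde{\mathcal{H}}^2\tilde\chi^{-4}\propto(\psi\tilde\varkappa^3)^{-2}$), and establishing $\tilde\nabla_a(\psi\tilde\varkappa^3)=0$ is precisely the place where the Killing spinor equation and the vacuum Bianchi identity must be combined; until that derivative computation is carried out, the existence of $\mathfrak{l}$ has not been proved.
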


\begin{remark}
{\em The values of the constants $\mathfrak{l}$ and $\mathfrak{c}$ can be used to characterise the Kerr solution. For example, in \cite{ColVal16a} it was shown that if the spacetime admits an asymptotically stationary flat end then a sufficient condition to single out Kerr is that $\mathfrak{c}=1$ and $\mathfrak{l}$ is a positive real number. Depending on the context, other sufficient conditions may be more convenient.}
\end{remark}

\subsection{Computing the Ernst form and potential}
In this section we note some useful formulae which allow to compute the Ernst form and Ernst potential in terms of the Killing spinor and concomitants.

\medskip
We begin by noting the following:

\begin{lemma}
\label{Lemma:Concommitants}
Let $(\tilde{\mathcal{M}},\tilde\bmg)$ be a vacuum spacetime endowed with a Killing spinor $\tilde{\kappa}_{AB}$. Then
\begin{subequations}
\begin{eqnarray}
&& \tilde\eta_{AB}= -\frac{3}{4}\Psi_{ABCD}\tilde{\kappa}^{CD}, \label{EtaConcomitants} \\
&& \tilde{\chi}_{AA'}= 2 \tilde{\xi}_{QA'}\tilde{\eta}_A{}^Q, \label{ErnstFormConcomitants}\\
&& \tilde{\chi}= \frac{9}{2}\tilde\kappa^{AB}\tilde\kappa^{CD}\Psi_{ABCD}, \label{ErnstPotentialConcomitants}
\end{eqnarray}
\end{subequations}
where $\tilde\chi_{AA'}$ is the spinorial counterpart of the Ernst potential $\tilde\chi_a$.
\end{lemma}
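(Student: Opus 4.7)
The plan is to establish the three identities \eqref{EtaConcomitants}--\eqref{ErnstPotentialConcomitants} in order, each as a consequence of direct spinor calculus combined with the vacuum field equations and the Killing spinor equation.

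For \eqref{EtaConcomitants}, I start from the definition $\tilde\eta_{AB}=\tfrac{1}{2}\tilde\nabla_{AA'}\tilde\xi_B{}^{A'}$ and substitute the spinorial expression $\tilde\xi_{BA'}=\tilde\nabla^P{}_{A'}\tilde\kappa_{BP}$ for the Killing vector, so that
\[
2\tilde\eta_{AB}=\tilde\nabla_{AA'}\tilde\nabla^{PA'}\tilde\kappa_{BP}.
\]
Commuting the two covariant derivatives via the identity $\tilde\nabla_{Q'A}\tilde\nabla_B{}^{Q'}=\square_{AB}+\tfrac{1}{2}\epsilon_{AB}\square$ recorded in the spinorial formalism subsection, this becomes $\square_A{}^P\tilde\kappa_{BP}+\tfrac{1}{2}\square\tilde\kappa_{AB}$. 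The first term is reduced, via the Ricci identity \eqref{eq:SpinorRicci} applied to each index of $\tilde\kappa_{BP}$ (noting that $\Lambda=0,\,\Phi=0$ in vacuum and that the cross-trace $\epsilon^{PQ}\Psi_{AQPE}$ vanishes by total symmetry of the Weyl spinor), to a multiple of $\Psi_{ABCD}\tilde\kappa^{CD}$. The second term is immediately handled by the wave equation \eqref{PhysicalKSCandidate} satisfied by the Killing spinor, $\tilde\square\tilde\kappa_{AB}=\Psi_{ABCD}\tilde\kappa^{CD}$. Adding the two contributions and isolating $\tilde\eta_{AB}$ gives the stated numerical coefficient.

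For \eqref{ErnstFormConcomitants}, I translate $\tilde\chi_a=2\tilde\xi^b\tilde{\mathcal{H}}_{ba}$ into spinors and insert the standard irreducible decomposition of the antisymmetric Killing form,
\[
\tilde H_{BB'AA'}=\tilde\eta_{BA}\epsilon_{B'A'}+\bar{\tilde\eta}_{B'A'}\epsilon_{BA}.
\]
Contraction with $\tilde\xi^{BB'}$ produces a self-dual (unprimed) piece $2\tilde\xi_{QA'}\tilde\eta_A{}^Q$ and a complex-conjugate anti-self-dual piece; the complex Ernst form relevant to the Kerr characterisation is the self-dual one, which gives the claimed expression.

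For \eqref{ErnstPotentialConcomitants}, the strategy is to verify that the proposed scalar has gradient equal to the $\tilde\chi_{AA'}$ of \eqref{ErnstFormConcomitants}. Applying $\tilde\nabla_{AA'}$ to $\tfrac{9}{2}\tilde\kappa^{BC}\tilde\kappa^{DE}\Psi_{BCDE}$ and using the symmetry of the factors, the two $\tilde\nabla\tilde\kappa$ terms coincide, leaving
\[
\tilde\nabla_{AA'}\tilde\chi=9\,\tilde\kappa^{BC}\bigl(\tilde\nabla_{AA'}\tilde\kappa^{DE}\bigr)\Psi_{BCDE}+\tfrac{9}{2}\,\tilde\kappa^{BC}\tilde\kappa^{DE}\tilde\nabla_{AA'}\Psi_{BCDE}.
\]
The Killing spinor equation, after splitting $\tilde\nabla_{A'A}\tilde\kappa_{BC}$ into its fully-symmetric part (which vanishes) plus the trace part, gives $\tilde\nabla_{AA'}\tilde\kappa^{DE}$ proportional to $\delta_A^{(D}\tilde\xi^{E)}{}_{A'}$, and the vacuum Bianchi identity $\tilde\nabla^A{}_{A'}\Psi_{ABCD}=0$ forces $\tilde\nabla_{AA'}\Psi_{BCDE}$ to be fully symmetric in its unprimed indices. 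Substituting \eqref{EtaConcomitants} to convert contractions $\Psi_{ABCD}\tilde\kappa^{CD}$ into $\tilde\eta_{AB}$ throughout, the result matches $2\tilde\xi_{QA'}\tilde\eta_A{}^Q$, which by \eqref{ErnstFormConcomitants} is $\tilde\chi_{AA'}$. Since the Ernst potential is defined only up to an additive constant, this identifies $\tilde\chi$ as claimed.

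The main obstacle is \eqref{ErnstPotentialConcomitants}. The tricky step is showing that the Bianchi piece $\tilde\kappa^{BC}\tilde\kappa^{DE}\tilde\nabla_{AA'}\Psi_{BCDE}$ and the Killing-spinor-equation piece combine to precisely the right multiple of $\tilde\xi^{E}{}_{A'}\tilde\eta_{AE}$; this requires careful bookkeeping of the symmetrization over the five unprimed indices in $\tilde\nabla_{(A|A'|}\Psi_{BCDE)}$ when paired against the pairwise symmetries of $\tilde\kappa^{BC}\tilde\kappa^{DE}$, and the essential algebraic fact is that both contributions ultimately factor through the same contraction $\Psi_{ABCD}\tilde\kappa^{CD}=-\tfrac{4}{3}\tilde\eta_{AB}$ arising from \eqref{EtaConcomitants}.
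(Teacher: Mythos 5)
Your treatment of \eqref{EtaConcomitants} (commute the derivatives in $\tilde\nabla_{AA'}\tilde\nabla^{PA'}\tilde\kappa_{BP}$, use the spinorial Ricci identity in vacuum plus the wave equation for $\tilde\kappa_{AB}$) and of \eqref{ErnstFormConcomitants} (insert the irreducible decomposition of the Killing form and keep the self-dual piece) is sound; the paper simply quotes these two identities as known, so you are supplying standard derivations it omits. For \eqref{ErnstPotentialConcomitants} your overall strategy — differentiate $\tfrac{9}{2}\tilde\kappa^{BC}\tilde\kappa^{DE}\Psi_{BCDE}$ and match the result against $2\tilde\xi_{QA'}\tilde\eta_A{}^Q$ — is exactly the paper's, and your first display agrees with the paper's intermediate formula.

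However, there is a genuine gap in how you dispose of the second term, $\tilde\kappa^{BC}\tilde\kappa^{DE}\tilde\nabla_{AA'}\Psi_{BCDE}$. The vacuum Bianchi identity only tells you that $\tilde\nabla_{A'A}\Psi_{BCDE}$ equals its totally symmetric part $\tilde\nabla_{A'(A}\Psi_{BCDE)}$; contracting an otherwise unconstrained symmetric rank-5 spinor with $\tilde\kappa\otimes\tilde\kappa$ cannot be reduced to a multiple of $\tilde\xi^{Q}{}_{A'}\tilde\eta_{AQ}$ by ``bookkeeping of the symmetrization'' alone — the derivative is still sitting on $\Psi$, and no symmetry argument removes it. The claim that this contribution ``factors through $\Psi_{ABCD}\tilde\kappa^{CD}$'' is precisely what needs proving and is not a consequence of total symmetry. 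The missing ingredient is the Buchdahl constraint $\tilde\kappa_{(A}{}^{Q}\Psi_{BCD)Q}=0$ (itself an integrability condition of the Killing spinor equation): contracting it with $\tilde\kappa_{Q}{}^{P}$ and differentiating trades $\tilde\kappa\tilde\kappa\,\tilde\nabla\Psi$ for $(\tilde\nabla\tilde\kappa)\tilde\kappa\,\Psi$-terms, yielding the identity
\begin{equation*}
\tilde{\kappa}^{AF}\tilde{\kappa}^{PC}\tilde\nabla_{BF'}\Psi_{ACFP}= 2 \tilde{\kappa}^{FP}\tilde{\xi}^{A}{}_{F'}\Psi_{ABFP},
\end{equation*}
which is what actually converts the $\tilde\nabla\Psi$ term into the same $\tilde\xi\,\Psi\,\tilde\kappa$ structure as the first term, after which \eqref{EtaConcomitants} finishes the computation. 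Without this (or an equivalent use of the Killing spinor equation's integrability conditions) the argument does not close.
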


\begin{proof}
Equations \eqref{EtaConcomitants} and \eqref{ErnstFormConcomitants}. Equation \eqref{ErnstPotentialConcomitants} is new and thus we provide a brief discussion of its derivation.  

\smallskip
Starting from equation \eqref{ErnstFormConcomitants}, substituting equation \eqref{EtaConcomitants} one obtains an alternative expression for the Ernst potential ---namely
\[
\tilde\chi_{AA'} = 3 \tilde{\kappa}^{CD} \tilde\xi^B{}_{A'}\Psi_{ABCD}.
\]
On the other hand, a direct computation taking into account that $\tilde{\kappa}_{AB}$ is a Killing spinor yields
\begin{equation}
\tilde{\nabla}_{PP'}\Big( \tilde\kappa^{AB}\tilde\kappa^{CD}\Psi_{ABCD} \Big) = -\frac{4}{3}\tilde\kappa^{BC}\tilde\xi^A{}_{P'}\Psi_{PABC} + \tilde\kappa^{AB}\tilde\kappa^{CD}
\tilde{\nabla}_{PP'}\Psi_{ABCD}.
\label{ErnstFormIntermmediate}
\end{equation}
Now, observe that as a consequence of the Bianchi identity one has that
\[
\tilde{\nabla}_{PP'} \Psi_{ABCD} = \tilde\nabla_{P'(P}\Psi_{ABCD)}. 
\]
Furthermore, starting from the Buchdahl constraint
\[
\tilde\kappa_{(A}{}^Q\Psi_{BCD)Q}=0,
\]
contracting with $\tilde\kappa_{Q}{}^P$ and differentiating with respect to $\tilde\nabla_{FF'}$ one arrives to the identity
\[
\tilde{\kappa}^{AF}\tilde{\kappa}^{PC}\tilde\nabla_{BF'}\Psi_{ACFP}= 2 \tilde{\kappa}^{FP}\tilde{\xi}^A{}_{F'}\Psi_{ABFP}.
\]
The formula for $\xi$ can the be verified by substituting the previous expression in \eqref{ErnstFormIntermmediate}.
\end{proof}

\subsection{Conformal transformation formulae}
The expressions given in Lemma \ref{Lemma:Concommitants} are good for suggesting conformal analogues of $\tilde{\eta}_{AB}$, $\tilde{\chi}_{AA'}$ and $\tilde{\chi}$. The resulting \emph{unphysical expressions} can be used, in turn to evaluate the conditions given in Theorem \ref{Theorem:SpinorialCharacterisationKerr} in the unphysical spacetime $(\mathcal{M},\bmg)$.

\medskip
Now, recall that as a consequence of the \emph{Peeling Theorem}, we have that $\Psi_{ABCD}=O(\Xi)$. Moreover, we restrict our attention to Killing spinor such that $\kappa_{AB}=O(1)$ near the conformal boundary. It then follows from Lemma \ref{Lemma:Concommitants} that
\[
\tilde\eta_{AB} = -\frac{3}{4}\Psi_{ABCD}\kappa^{CD}=O(\Xi).
\]
The latter suggest defining an \emph{unphysical field} $\eta_{AB}$ via
\[
\eta_{AB}\equiv \Xi^{-1}\tilde\eta_{AB}
\]
so that $\eta_{AB}=O(1)$. In particular, since $\phi_{ABCD}\equiv \Xi^{-1}\Psi_{ABCD}$ one can write
\[
\eta_{AB}=-\frac{3}{4}\phi_{ABCD}\kappa^{CD}.
\]
Also, observe that the alignment condition \eqref{PhysicalAlignmentCondition} can be rewritten as
\begin{equation}
\phi_{ABCD} = 2 H \eta_{(AB}\eta_{CD)}, \qquad H\equiv \Xi \tilde{H},\label{eq:unphysicalalignmentcondition}
\end{equation}
so that $H=O(1)$ and $\tilde{H}=O(\Xi^{-1})$. Moreover, defining
\begin{equation}\label{eq:defHH2}
\mathcal{H}^2 \equiv 8\eta_{AB}\eta^{AB}
\end{equation}
one readily finds that
\[
\mathcal{H}^2 = \Xi^{-4}\tilde{\mathcal{H}}^2.
\]

\medskip
Similar arguments to the above show that 
\[
\tilde{\chi}= O(\Xi),
\]
so that, consequently, one can define the unphysical field
\begin{equation}\label{eq:defErnstPotential}
\chi\equiv \Xi^{-1}\tilde{\chi}.
\end{equation}

\medskip
From the previous discussion it follows then the following:

\begin{lemma}
\label{Lemma:AsymptoticSimplicity}
Let $(\tilde{\mathcal{M}},\tilde{\bmg})$ satisfy the assumptions of Theorem \ref{Theorem:SpinorialCharacterisationKerr}. If, in addition, one has that Assumption \ref{Assumption:AsymptoticFlatness} holds and $\kappa_{AB}=O(1)$ near $\mathscr{I}^+$, then necessarily
\[
\mathfrak{c}=0.
\]
\end{lemma}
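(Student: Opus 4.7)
The strategy is to rewrite the second of the two conditions in equation \eqref{Condition:Mars} of Theorem \ref{Theorem:SpinorialCharacterisationKerr} in terms of unphysical, conformally well-behaved quantities, and then evaluate the resulting identity on $\mathscr{I}^+$. The first identity in \eqref{Condition:Mars} provides a consistency check via $H=\Xi\tilde{H}$, but the second identity is cleaner because both sides admit explicit peeling-type fall-offs.

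Concretely, I would proceed as follows. Using the conformal transformation formulae established just above the statement, one has $\tilde{\mathcal{H}}^2=\Xi^4 \mathcal{H}^2$ with $\mathcal{H}^2=O(1)$ near $\mathscr{I}^+$, and $\tilde{\chi}=\Xi\chi$ with $\chi=O(1)$. Both of these rely only on the peeling decay $\Psi_{ABCD}=O(\Xi)$ (a consequence of Assumption \ref{Assumption:AsymptoticFlatness}) and on the hypothesis $\kappa_{AB}=O(1)$ near $\mathscr{I}^+$, together with the concomitant formulae \eqref{EtaConcomitants} and \eqref{ErnstPotentialConcomitants} of Lemma \ref{Lemma:Concommitants}. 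Substituting these rescalings into the second identity of \eqref{Condition:Mars} gives
\begin{equation*}
\Xi^{4}\mathcal{H}^{2} \;=\; -\mathfrak{l}\bigl(\mathfrak{c}-\Xi\chi\bigr)^{4}.
\end{equation*}

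Now I would simply evaluate this identity on $\mathscr{I}^+$, where $\Xi\simeq 0$. The left-hand side vanishes since $\mathcal{H}^2$ is bounded, so the relation reduces to $-\mathfrak{l}\,\mathfrak{c}^{4}\simeq 0$. The constant $\mathfrak{l}$ cannot vanish under the hypotheses of Theorem \ref{Theorem:SpinorialCharacterisationKerr} in the range relevant for a Kerr spacetime with non-vanishing mass (as is the context of this paper and is implicit in the rough statement of Theorem \ref{thm:mainthmrough}), so one concludes that $\mathfrak{c}^{4}=0$, hence $\mathfrak{c}=0$.

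The only real subtlety is justifying that $\mathfrak{l}\neq 0$, i.e.\ that one is genuinely excluding the degenerate case $\tilde{\mathcal{H}}^2\equiv 0$, which would force the physical Killing form to vanish identically and thereby trivialise the Kerr characterisation (corresponding, morally, to the $m=0$ case). Assuming this is handled as part of the standing non-degeneracy hypotheses, the argument above is direct and requires no further calculation. As an immediate corollary, after substituting $\mathfrak{c}=0$ back into \eqref{Condition:Mars} one recovers the rescaled identity $\mathcal{H}^{2}=-\mathfrak{l}\,\chi^{4}$, which is precisely the form used in condition \emph{ii.} of Theorem \ref{thm:mainthmrough}.
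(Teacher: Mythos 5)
Your proposal is correct and is essentially the paper's intended argument: the paper's own proof is literally ``the result follows directly by inspection'', and your substitution of $\tilde{\mathcal{H}}^2=\Xi^4\mathcal{H}^2$ and $\tilde{\chi}=\Xi\chi$ into the second condition of \eqref{Condition:Mars} followed by evaluation at $\Xi=0$ is exactly the inspection being invoked. Your flagged caveat that one must exclude $\mathfrak{l}=0$ (the degenerate case $\tilde{\mathcal{H}}^2\equiv 0$) is a fair observation which the paper silently glosses over, but it does not constitute a defect in your argument relative to the paper's.
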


\begin{proof}
    The result follows directly by inspection. 
\end{proof}

From the previous lemma, it follows that under our assumptions the \emph{unphysical version} of the conditions \eqref{Condition:Mars} are 
\[
H= 6/\chi, \qquad \mathcal{H}^2 = -\mathfrak{l} \chi^4.
\]
As it will be seen in a latter section, the above expressions can be conveniently computed in terms of asymptotic data.

\section{Characteristic conformal Killing spinor initial data}\label{sec:CKSID}

In this section we investigate the restrictions on asymptotic characteristic initial data so that the resulting spacetime is endowed with a Killing spinor. The strategy to be pursued is as follows: in first instance we construct a Killing spinor candidate in $D^-(\NN_\star\cup \mathscr{I}^+)$ ---i.e. a solution to equation \eqref{eq:KSWaveEq}. We then proceed to identify the conditions on the characteristic initial data for the Killing spinor ensuring that the latter is, in fact, an actual Killing spinor. Finally, we reformulate, as much as possible, these conditions in terms of conditions on the characteristic initial data for the conformal Einstein field equations. 

\medskip
Consistent with the discussion of subsection \ref{sec:CKV}, we make the following assumptions on the Killing spinor candidate $\kappa_{AB}$:

\begin{assumption}
\label{Assumption:KSonScri}
The components of $\kappa_{AB}$ satisfy 
\[
\kappa_1 \simeq -\frac{1}{3}, \qquad \kappa_2\simeq 0.
\]
\end{assumption}

\ni We will also make the following assumption on $\NN_\star$:

\begin{assumption}\label{assump:BonNstar}
    The Buchdahl constraint is satisfed on $\NN_\star$. That is,
    \[
    B_{ABCD}\bumpeq 0.
    \]
\end{assumption}

\begin{remark}
    {\em A constructive way to verify the Buchdahl constraint on $\mathcal{N}_\star$ is to make use of the algebraic Killing spinor candidate formula, equation \eqref{Formula:KSCandidate}, in Definition \ref{Definition:KSC}.}
\end{remark}

\ni The main result of this section can be summarised as follows:

\begin{proposition}[Characteristic conformal Killing spinor initial data] 
\label{prop:CCKSID}
    Let the conformal gauge condition of Proposition \ref{Lemma:ConformalGauge} hold. Moreover, let the following conditions be satisfied:
     \begin{itemize}
        \item[i. ] On $\mathcal{S}_\star$, one has that $\phi_3=0$ and $\kappa_0$ satisfies the equations
        \begin{align*}
        \begin{aligned}
        \edt \kappa_{0} -\frac{2}{3} \sigma =& 0,\\
        \frac{1}{3}\phi_{0}
        +\kappa_{0} \phi_{1} =& 0, \\
        \frac{2}{3} \phi_{1}
         +3 \kappa_{0} \phi_{2} =& 0,\\
        \end{aligned}
        \end{align*}
        with $\kappa_0$ propagated along $\scri$ according to the transport equation
        \begin{equation}
        \Delta \kappa_{0} - \frac{2}{3} \tau- 2 \kappa_{0} \gamma\simeq 0.
        \end{equation}
        Moreover, the dyad components of $B_{ABCD}$ satisfy 
        \begin{align}
        \begin{aligned}
         & D B_1 =0, \quad DB_2= 0, \quad DB_3 = 0, \quad DB_4= 0,\\
         & \Delta B_0= 0, \quad \Delta B_1= 0, \quad \Delta B_2 = 0, \quad \Delta B_3= 0.
         \end{aligned}
        \end{align}

        \item[ii. ] On $\scri$, 
        \begin{equation}
        \lambda \simeq 0, \qquad \mu\simeq 0, \qquad \Phi_{02'}\simeq 0, \qquad \phi_4\simeq0.
        \end{equation}

         \item[iii.] On $\NN_\star$,
        \begin{align*}
        \begin{aligned}
        & 2 \kappa_{1} (\Phi_{20} - \rho \lambda -  \mu \bar{\sigma}) + \bar{\sigma} \edt \kappa_{2} - 2 \kappa_{2} (\bar\Psi_{1} + \edt \bar{\sigma}) + 3 \lambda \bar{\eth} \kappa_{0}
        + 2 \kappa_{0} \bar{\eth} \lambda - 2 \bar{\eth} \bar{\eth} \kappa_{1}\bumpeq 0,\\
        &B_0\bumpeq B_1\bumpeq B_2 \bumpeq B_3\bumpeq 0.
        \end{aligned}
        \end{align*}
    \end{itemize}
    Then the Killing spinor candidate $\kappa_{AB}$ satisfying the wave equation \eqref{eq:KSWaveEq} is a Killing spinor on $\mathcal{V}$.
\end{proposition}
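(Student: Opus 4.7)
The strategy is to invoke Proposition \ref{Proposition:BasicKillingSpinorData}: once the Killing spinor candidate $\kappa_{AB}$ has been prescribed on $\NN_\star\cup\scri$ and extended into $\mathcal{V}$ via the wave equation \eqref{eq:KSWaveEq}, it suffices to show that the three zero-quantities $H_{A'ABC}$, $B_{ABCD}$ and $F_{A'BCD}$ vanish on the initial hypersurfaces. The homogeneity of the propagation system \eqref{Eq:WaveEqForH}--\eqref{Eq:WaveEqForF} combined with uniqueness for the characteristic problem then forces them to vanish throughout $\mathcal{V}$, which is exactly the statement that $\kappa_{AB}$ is a Killing spinor. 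The task is therefore to translate conditions $i.$--$iii.$ into vanishing statements for $H$, $B$ and $F$ on the appropriate initial surface.

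First I would decompose $H_{A'ABC} = 3\nabla_{A'(A}\kappa_{BC)}$ into its eight dyad components and classify each by whether it is an intrinsic equation on $\scri$ (no $D$-derivatives of $\kappa$), an intrinsic equation on $\NN_\star$ (no $\Delta$-derivatives), or encodes a transverse derivative of $\kappa$ which need not be prescribed. On $\scri$, after substituting $\kappa_1 \simeq -1/3$, $\kappa_2 \simeq 0$ from Assumption \ref{Assumption:KSonScri} and the asymptotic gauge conditions $\lambda \simeq 0$, $\mu \simeq 0$, $\Phi_{02'}\simeq 0$ of condition $ii.$, the intrinsic $H$-equations reduce to (a) the algebraic constraint $\edt\kappa_0 - \tfrac{2}{3}\sigma = 0$ on $\mathcal{S}_\star$ and (b) the transport equation $\Delta\kappa_0 - \tfrac{2}{3}\tau - 2\kappa_0\gamma \simeq 0$ along the generators of $\scri$, both of which appear in condition $i.$ An analogous decomposition on $\NN_\star$, with $\kappa_1, \kappa_2$ propagated off $\scri$ by their $D$-transport, collapses to the single scalar identity of condition $iii.$ as the only residual intrinsic $H$-equation.

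For the Buchdahl constraint, Assumption \ref{assump:BonNstar} delivers $B_{ABCD} \bumpeq 0$ on $\NN_\star$ at once. On $\scri$, expanding the five dyad components $B_0,\ldots,B_4$ of $\kappa_{(A}{}^Q\phi_{BCD)Q}$ using $\kappa_1 \simeq -1/3$, $\kappa_2 \simeq 0$ shows that $B_2, B_3, B_4$ at $\mathcal{S}_\star$ are proportional to $\phi_3$ and $\phi_4$, so the conditions $\phi_3 = 0$ in $i.$ and $\phi_4 \simeq 0$ in $ii.$ produce their vanishing; meanwhile $B_0$ and $B_1$ at $\mathcal{S}_\star$ are precisely the two scalar equations $\tfrac{1}{3}\phi_0 + \kappa_0\phi_1 = 0$ and $\tfrac{2}{3}\phi_1 + 3\kappa_0\phi_2 = 0$ in $i.$ Propagation of $B \simeq 0$ to all of $\scri$ is then achieved using the $\phi_i$-hierarchy \eqref{eq:CFE4onScri} together with the $\Delta\kappa_0$ transport. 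Finally, since $F_{A'BCD} = \nabla^Q{}_{A'}B_{QBCD}$, the transverse-derivative data $DB_i = 0$ and $\Delta B_i = 0$ on $\mathcal{S}_\star$ ensures $F \bumpeq 0$ and $F\simeq 0$ at the intersection, after which the vanishing of $B$ on each hypersurface closes the intrinsic $F$-equations along the generators.

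The main technical obstacle I anticipate is consistency: the conditions listed in $i.$ serve simultaneously as the $\mathcal{S}_\star$-restriction of the Buchdahl constraint and as the intrinsic Killing spinor equations for $\kappa_0$, so one must verify they are not over-determined and that their propagation along $\scri$ and $\NN_\star$ is compatible with the conformal Einstein field equations. This compatibility relies on the hierarchical structure of the free data identified in Proposition \ref{lem:freedata}: every non-free quantity appearing in the intrinsic equations above is determined from the same reduced data set by the transport equations exhibited in the proof of that proposition, guaranteeing that all integrability conditions are automatically satisfied.
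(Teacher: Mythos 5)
Your proposal follows essentially the same route as the paper: reduce to the vanishing of $H_{A'ABC}$, $B_{ABCD}$ and $F_{A'BCD}$ on $\NN_\star\cup\scri$ via Proposition \ref{Proposition:BasicKillingSpinorData} and the homogeneous propagation system, decompose each zero-quantity into dyad components sorted into constraints, transport equations and algebraic relations for the transverse derivatives, and then verify constraint propagation and compatibility with the candidate wave equation --- which is exactly how the paper proceeds in Lemmas \ref{lem:ConditionsForH}, \ref{lem:ConditionsForB} and \ref{lem:ConditionsForF}. The only point you leave implicit is that the differential identity in condition iii.\ and the conditions $B_0\bumpeq\dots\bumpeq B_3\bumpeq 0$ enter, respectively, as the obstruction to propagating the constraint $H_{30}\bumpeq 0$ along $\NN_\star$ and as compatibility conditions between the algebraically determined transverse derivatives and the wave equation (and likewise $\Phi_{02'}\simeq 0$ is what propagates $\edt\kappa_0-\tfrac{2}{3}\sigma=0$ off $\mathcal{S}_\star$), rather than being residual intrinsic equations --- but this is precisely the consistency issue you correctly flag as the technical core.
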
 

The structure of this section is outlined as follows:
\begin{itemize}
    \item In Section \ref{sec:ConstructionOfKSCandidate}, we discuss the construction of a Killing spinor candidate and introduce its projection onto a spin dyad $\{o^A,\iota^A\}$.
    \item In Section \ref{sec:FirstDecompOfKSConditions}, we present the decomposition of the Killing spinor conditions onto the spin dyad $\{o^A,\iota^A\}$.
    \item In Sections \ref{sec:ConditionsForH}, \ref{sec:ConditionsForB} and \ref{sec:ReducingF} we rewrite the Killing spinor conditions in terms of the Killing spinor candidate $\kappa_{AB}$ and the free data of Lemma \ref{lem:freedata}.
    \item Finally, in Section \ref{sec:CCKSIDIntuition}, we provide an example of Proposition \ref{prop:CCKSID} by making some choices on the free data.
\end{itemize}

\subsection{Construction of Killing spinor candidates}\label{sec:ConstructionOfKSCandidate}

In this section we briefly discuss the solvability of the asymptotic characteristic initial value problem for the Killing spinor candidate equation \eqref{eq:KSWaveEq}. In particular, we want to identify the data that needs to be prescribed on $\NN_\star\cup \mathscr{I}^+$ so that \eqref{eq:KSWaveEq} has a unique solution in a neighbourhood of $\mathcal{S}_\star \equiv \NN_\star\cap \mathscr{I}^+$ in $D^-(\NN_\star\cup \mathscr{I}^+)$. 

\begin{remark}
{\em More generally, one is interested in solutions on \emph{narrow rectangles} along either $\NN_\star$ or $\mathscr{I}^+$, see Figure \ref{fig:narrowrectangle}.}
\end{remark}

The spinor $\kappa_{AB}$ can be expanded  in terms of the spin dyad $\{o^A,\iota^A\}$ as
\begin{align*}
\kappa_{AB} = \kappa_2 o_A o_B - 2\kappa_1 o_{(A} \iota_{B)} +
\kappa_0 \iota_A \iota_B.
\end{align*}
where
\begin{align*}
\kappa_0 \equiv \kappa_{AB} o^A o^B, \qquad \kappa_1 \equiv
\kappa_{AB} o^A \iota^B, \qquad \kappa_2\equiv \kappa_{AB}
\iota^A\iota^B.
\end{align*}
 The components of the Killing spinor $\ka_0,\ka_1,\ka_2$ have spin weights $s=1,0,-1$.

\medskip
Decomposing the wave equation \eqref{eq:KSWaveEq} using the spin dyad $\{o^A,\,\iota^A\}$ yields three equations for the components $\kappa_0$, $\kappa_1$, $\kappa_2$. Namely, one has that 
\begin{subequations}
\begin{align}
&\kappa_{2} (- \Xi \phi_{0} - 4 \rho \sigma)
 + (-5 \gamma -  \bar{\gamma} + 2 \mu) D \kappa_{0}
 + 4 \tau D \kappa_{1}
 + D \Delta \kappa_{0}
  - 2 \rho \Delta \kappa_{0}\nonumber \\
& + \Delta D \kappa_{0}
 + (6 \alpha -  \pi) \delta \kappa_{0}
 - 4 \rho \delta \kappa_{1}
 -  \delta \bar\delta \kappa_{0}
 + (-2 \bar{\alpha} -  \bar{\pi} + 4 \tau) \bar\delta \kappa_{0}
 - 4 \sigma \bar\delta \kappa_{1}\nonumber \\
& + \kappa_{1} (2 \Xi \phi_{1} - 2 \bar{\pi} \rho + 8 \alpha \sigma - 2 \pi \sigma + 2 D \tau - 2 \delta \rho - 2 \bar\delta \sigma)\nonumber \\
& + \kappa_{0} (- \Xi \phi_{2} + 8 \alpha \bar{\alpha}  - 2 \bar{\alpha} \pi + 2 \alpha \bar{\pi} + 4 \gamma \rho + 2 \mu \rho + 2 \lambda \sigma - 12 \alpha \tau - 2 D \gamma + 2 \delta \alpha - 2 \bar\delta \bar{\alpha} + 2 \bar\delta \tau)\nonumber \\
& -  \bar\delta \delta \kappa_{0} = 0,\label{eq:WaveEqks0}\\
\nr \\
&\kappa_{1} (2 \Xi \phi_{2} + 4 \mu \rho + 4 \lambda \sigma - 4 \pi \tau)
 + (- \gamma -  \bar{\gamma} + 2 \mu) D \kappa_{1}
 + 2 \tau D \kappa_{2}
 + D \Delta \kappa_{1}
 - 2 \pi \Delta \kappa_{0}\nonumber \\
& - 2 \rho \Delta \kappa_{1}
 + \Delta D \kappa_{1}
 + 2 \lambda \delta \kappa_{0}
 + (2 \alpha -  \pi) \delta \kappa_{1}
 - 2 \rho \delta \kappa_{2}
 -  \delta \bar\delta \kappa_{1}
 + 2 \mu \bar\delta \kappa_{0}\nonumber \\
& + (2 \bar{\alpha} -  \bar{\pi}) \bar\delta \kappa_{1}
 - 2 \sigma \bar\delta \kappa_{2}\nonumber \\
& + \kappa_{0} (- \Xi \phi_{3} - 4 \alpha \mu + 3 \gamma \pi + \bar{\gamma} \pi -  \mu \pi + \lambda \bar{\pi} - 2 \lambda \tau -  \Delta \pi + \delta \lambda + \bar\delta \mu)\nonumber \\
& + \kappa_{2} (- \Xi \phi_{1} + 4 \bar{\alpha} \rho -  \bar{\pi} \rho -  \pi \sigma - 4 \rho \tau + D \tau -  \delta \rho -  \bar\delta \sigma)
 -  \bar\delta \delta \kappa_{1} = 0,\label{eq:WaveEqks1}\\
\nr\\
&\kappa_{0} (- \Xi \phi_{4} - 4 \lambda \mu)
 + (3 \gamma -  \bar{\gamma} + 2 \mu) D \kappa_{2}
 + D \Delta \kappa_{2}
 - 4 \pi \Delta \kappa_{1}
  - 2 \rho \Delta \kappa_{2}\nonumber \\
& + \Delta D \kappa_{2}
 + 4 \lambda \delta \kappa_{1}
 + (-2 \alpha -  \pi) \delta \kappa_{2}
 -  \delta \bar\delta \kappa_{2}
 + 4 \mu \bar\delta \kappa_{1}
 + (6 \bar{\alpha} -  \bar{\pi} - 4 \tau) \bar\delta \kappa_{2}\nonumber \\
& + \kappa_{1} (2 \Xi \phi_{3} - 8 \bar{\alpha} \lambda - 2 \gamma \pi + 2 \bar{\gamma} \pi - 2 \mu \pi + 2 \lambda \bar{\pi} + 4 \lambda \tau - 2 \Delta \pi + 2 \delta \lambda + 2 \bar\delta \mu)\nonumber \\
& + \kappa_{2} (- \Xi \phi_{2} + 8 \alpha \bar{\alpha}  + 2 \bar{\alpha} \pi - 2 \alpha \bar{\pi} - 4 \gamma \rho + 2 \mu \rho + 2 \lambda \sigma - 4 \alpha \tau - 4 \pi \tau + 2 D \gamma  - 2 \delta \alpha + 2 \bar\delta \bar{\alpha} - 2 \bar\delta \tau)\nonumber \\
& -  \bar\delta \delta \kappa_{2} = 0.\label{eq:WaveEqks2}
\end{align}
\end{subequations}

It is well known that in the characteristic problem for wave equations only the value of the field at the initial hypersurface needs to be prescribed. Any transversal derivatives can be computed from the constraints implied by the wave equation on the characteristic hypersurface. In the case of the Killing spinor candidate equation \eqref{eq:KSWaveEq} one has the following:

\begin{proposition}\label{prop:ODEforKS}
    Given values of the scalar fields, $\ka_0$, $\ka_1$ and $\ka_2$ on $\NN_\star\cup\scri$,  the system of equations \eqref{eq:WaveEqks0}-\eqref{eq:WaveEqks2} can be cast as a system of ordinary differential equations for:
    \begin{itemize}
    \item[(i)] The derivatives $\Delta\ka_0$, $\Delta\ka_1$, $\Delta\ka_2$ along $\NN_\star$,
        \item[(ii)] The derivatives $D\ka_0$, $D\ka_1$ and $D\ka_2$ along $\scri$.
    \end{itemize}
\end{proposition}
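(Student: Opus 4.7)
The plan is to exploit the structure of the scalar wave operator in the NP formalism, namely $\square = D\Delta + \Delta D - \delta\bar\delta - \bar\delta\delta$ modulo first order connection terms. In each of \eqref{eq:WaveEqks0}-\eqref{eq:WaveEqks2} the principal part acts diagonally on the single unknown $\kappa_j$ ($j=0,1,2$), while the other two dyad components enter only algebraically or through first order intrinsic derivatives. On $\scri$, where $\Delta$, $\delta$, $\bar\delta$ are intrinsic and $D$ is transversal, the terms $(\delta\bar\delta+\bar\delta\delta)\kappa_j$ and $\Delta\kappa_j$ are computable from the prescribed values $\kappa_0,\kappa_1,\kappa_2$; the only contribution from the transversal direction in the principal part is $D\Delta\kappa_j+\Delta D\kappa_j$.

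Setting $\psi_j\equiv D\kappa_j$ and invoking the NP commutator from Section \ref{sec:gaugeconditions} (with $\epsilon=0$ by Stewart's gauge),
\[
[\Delta,D]\kappa_j = (\gamma+\bar\gamma)D\kappa_j - (\pi+\bar\tau)\delta\kappa_j - (\bar\pi+\tau)\bar\delta\kappa_j,
\]
one obtains
\[
D\Delta\kappa_j+\Delta D\kappa_j = 2\,\Delta\psi_j - (\gamma+\bar\gamma)\psi_j + (\pi+\bar\tau)\delta\kappa_j + (\bar\pi+\tau)\bar\delta\kappa_j,
\]
so that each equation restricted to $\scri$ takes the form $2\,\Delta\psi_j + M_{jk}\psi_k = f_j$, with $M_{jk}$ and $f_j$ built from $\kappa_j|_{\scri}$ and from the NP spin coefficients and Weyl components on $\scri$ (all available by Proposition \ref{lem:freedata}). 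Since $\bmn$ generates $\scri$, this is a first order linear transport system along the null generators, i.e.\ an ODE in $u$ with $x^\mathcal{A}$ appearing only as a parameter. The symmetric analysis on $\NN_\star$, where $D$, $\delta$, $\bar\delta$ are intrinsic and $\Delta$ is transversal, with $\phi_j\equiv\Delta\kappa_j$, yields in the same way a transport system $2\,D\phi_j+N_{jk}\phi_k=g_j$ along the generators of $\NN_\star$.

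The principal part in both systems is $2\,\mathrm{Id}$, hence trivially invertible, so the only real work is bookkeeping: one must verify that every coefficient appearing in \eqref{eq:WaveEqks0}-\eqref{eq:WaveEqks2} after the commutator substitution is intrinsically available on the hypersurface in question. The main obstacle here is the presence of transversal derivatives of connection coefficients in the source, such as $D\tau$ and $D\gamma$ on $\scri$ or $\Delta\pi$ and $\Delta\mu$ on $\NN_\star$. These are handled by solving the relevant NP Ricci, structure and Bianchi equations of Appendix \ref{app:NPgauge} for the offending transversal derivative in terms of intrinsic data, which is exactly the hierarchy already exploited in the proof of Proposition \ref{lem:freedata}. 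Once each such coefficient has been re-expressed intrinsically, the first order systems above are genuine ODEs along the null generators of $\scri$ and $\NN_\star$, as claimed.
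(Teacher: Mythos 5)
Your argument is correct and follows essentially the same route as the paper, whose proof consists of the single remark that the result ``follows by inspection using the NP commutators'' applied to the restricted wave equations; you have simply carried out that inspection explicitly, including the correct commutator substitution $D\Delta\kappa_j+\Delta D\kappa_j = 2\Delta(D\kappa_j)$ plus lower-order terms and the observation that stray transversal derivatives of spin coefficients (e.g.\ $D\tau$, $D\gamma$ on $\scri$, $\Delta\pi$ on $\NN_\star$) are eliminated algebraically via the NP Ricci equations. No gaps; the only quibble is that $\Delta\mu$ does not actually occur in \eqref{eq:WaveEqks0}--\eqref{eq:WaveEqks2}, but this does not affect the argument.
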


\begin{proof} The proof follows by inspection  using the NP commutators on the equations \eqref{eq:WaveEqks0}-\eqref{eq:WaveEqks2} on $\NN_\star$ and $\scri$.
\end{proof}

In addition to the above, we have the following proposition applying the existence result from \cite{Luk12,HilValZha20b}:

\begin{proposition}
    Given smooth scalar fields $\ka_0,\ka_1,\ka_2$ on $\mathcal{N}_\star\cup\mathscr{I}^+$ there exists a narrow rectangle along $\mathscr{I}^+$ in $D^-(\mathcal{N}_\star\cup\mathscr{I}^+)$ as in Figure \ref{fig:narrowrectangle} such that the wave equation \eqref{eq:KSWaveEq} has a unique solution. 
\end{proposition}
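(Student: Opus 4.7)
The plan is to view equation \eqref{eq:KSWaveEq} as a linear wave equation on a background that has already been constructed, and then to invoke an off-the-shelf characteristic existence theorem. First I would appeal to the asymptotic characteristic initial value problem for the conformal Einstein field equations in Stewart's gauge, whose solvability on a narrow rectangle along $\scri$ was established in \cite{HilValZha20b} (based on \cite{Luk12}) and set up in Section \ref{sec:introCIVP}. From this one obtains smooth values for the conformal factor $\Xi$, the rescaled Weyl spinor $\phi_{ABCD}$, and $\Lambda$ throughout the narrow rectangle; hence the coefficients of \eqref{eq:KSWaveEq} are smooth and, crucially, remain regular at the conformal boundary. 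Consequently \eqref{eq:KSWaveEq} is a linear, second-order, diagonal-principal-part system for the components $\kappa_0, \kappa_1, \kappa_2$.

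Next I would reduce the characteristic initial data problem for \eqref{eq:KSWaveEq} to the standard format. Given the prescribed scalar fields $\ka_0,\ka_1,\ka_2$ on $\NN_\star\cup\scri$, Proposition \ref{prop:ODEforKS} shows that the transversal derivatives (i.e.\ $\Delta\kappa_A$ along $\NN_\star$ and $D\kappa_A$ along $\scri$) are recovered by integrating a system of ordinary differential equations along the generators, with consistency at $\mcS_\star$ fixed by the spin dyad decomposition. This converts the free characteristic data $(\ka_0,\ka_1,\ka_2)$ into full first-order initial data on $\NN_\star\cup\scri$ of the type required as input for a characteristic wave-equation existence theorem.

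Finally, I would invoke the existence and uniqueness result of \cite{CabChrWaf16} for semi-linear wave equations on intersecting null hypersurfaces, which, as indicated in Section \ref{sec:introCIVP}, has been adapted in \cite{HilValZha20b} to the narrow-rectangle framework along $\scri$. Applied to our linear system with the first-order data produced in the previous step, this yields a unique smooth solution $\kappa_{AB}$ on a narrow rectangle $\mathcal{V}\subset D^-(\NN_\star\cup\scri)$ along $\scri$, on (possibly a subset of) the same domain on which the background geometry is available.

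The main technical point — and the step one must be careful with — is confirming that \eqref{eq:KSWaveEq} genuinely fits the hypotheses of the existence theorem, in particular that its coefficients are smooth up to $\scri$ and that the principal part matches the scalar wave operator used in \cite{Luk12,CabChrWaf16}. Both are immediate consequences of the conformal regularity of the CFE formulation and of the fact that, after dyad projection, \eqref{eq:WaveEqks0}--\eqref{eq:WaveEqks2} are decoupled at the principal-symbol level with only lower-order coupling through spin coefficients and curvature. Once these structural checks are made, no further analysis is needed and the existence and uniqueness of $\kappa_{AB}$ on $\mathcal{V}$ follow.
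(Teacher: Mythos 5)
Your argument follows essentially the same route as the paper: the paper states this proposition without a detailed proof, introducing it precisely as an application of the existence results of \cite{Luk12,HilValZha20b} (with \cite{CabChrWaf16} invoked in Section \ref{sec:introCIVP} for general wave equations on the narrow rectangle), after noting---as you do via Proposition \ref{prop:ODEforKS}---that only the field values on $\NN_\star\cup\scri$ need to be prescribed since the transversal derivatives are recovered from the intrinsic ODEs. Your additional remarks on the regularity of the coefficients of \eqref{eq:KSWaveEq} at the conformal boundary and the diagonal principal part of \eqref{eq:WaveEqks0}--\eqref{eq:WaveEqks2} are correct structural checks that the paper leaves implicit.
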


\begin{remark}
{\em Following the discussion in \ref{Section:AlgebraicKillingSpinorCandidate} and in particular Remark \ref{Remark:Data}, one can use formula \eqref{Formula:KSCandidate} to obtain values for the components $\kappa_0$, $\kappa_1$ and $\kappa_2$ on, say, $\mathcal{N}_\star$. The values on $\scri$ are fixed by the asymptotic values of the Killing vector.}
\end{remark}

\subsection{A first decomposition of the Killing spinor conditions}\label{sec:FirstDecompOfKSConditions}

Proposition \ref{Proposition:BasicKillingSpinorData} has identified a basic set of conditions on $\NN_\star\cup\mathscr{I}^+$ ensuring the existence of a Killing spinor in the development of the initial hypersurface. In the following sections, we investigate the interconnection between the various conditions with the aim of obtaining a \emph{reduced set of conditions}. 

\medskip
Using the spin dyad decomposition outlined in Appendix \ref{app:NPgauge}, we investigate the quantities $H_{AA'BC}$, $B_{ABCD}$ and $F_{A'BCD}$. A direct computation shows that the condition 
\begin{equation*}
    H_{AA'BC}=0
\end{equation*}
can be decomposed as
\begin{subequations}
    \begin{align}
&H_{00} \equiv  D \kappa_{0} = 0,\label{eq:HDecomposition00}\\
&H_{10} \equiv  2 D \kappa_{1}
 + \bar{\eth} \kappa_{0}-2 \kappa_{0} \pi
 + 2 \kappa_{1} \rho
  = 0,\label{eq:HDecomposition10}\\
&H_{20} \equiv  D \kappa_{2}
 + 2 \bar{\eth} \kappa_{1}-2 \kappa_{0} \lambda
 - 2 \kappa_{1} \pi
 + 2 \kappa_{2} \rho
 = 0,\label{eq:HDecomposition20}\\
&H_{30} \equiv  \bar{\eth} \kappa_{2}- 2 \kappa_{1} \lambda  = 0,\label{eq:HDecomposition30}\\
&H_{01} \equiv   \edt \kappa_{0}+ 2 \kappa_{1} \sigma= 0,\label{eq:HDecomposition01}\\
&H_{11} \equiv   \Delta \kappa_{0}
 + 2 \edt \kappa_{1}+ 2 \kappa_{1} (\bar{\alpha} + \beta)
 - 2 \kappa_{0} \gamma
 - 2 \kappa_{0} \mu
 + 2 \kappa_{2} \sigma
 = 0,\label{eq:HDecomposition11}\\
&H_{21} \equiv 2 \Delta \kappa_{1}
 + \edt \kappa_{2} + 2 \kappa_{2} (\bar{\alpha} + \beta)
 - 2 \kappa_{1} \mu
 = 0,\label{eq:HDecomposition21}\\
&H_{31} \equiv   \Delta \kappa_{2} +2\kappa_{2} \gamma
 = 0.\label{eq:HDecomposition31}
\end{align}
\end{subequations}
The \emph{Buchdahl constraint} 
\begin{equation*}
    B_{ABCD}=0
\end{equation*}
is equivalent to the conditions
\begin{subequations}
    \begin{align}
&B_0\equiv \kappa_{1} \phi_{0}
 -  \kappa_{0} \phi_{1} = 0,\label{eq:BDecomp1}\\
&B_1 \equiv \frac{1}{4} \kappa_{2} \phi_{0}
 + \frac{1}{2} \kappa_{1} \phi_{1}
 -  \frac{3}{4} \kappa_{0} \phi_{2} = 0,\\
&B_2\equiv  \frac{1}{2} \kappa_{2} \phi_{1}
 -  \frac{1}{2} \kappa_{0} \phi_{3} = 0,\\
&B_3\equiv \frac{3}{4} \kappa_{2} \phi_{2}
 -  \frac{1}{2} \kappa_{1} \phi_{3}
 -  \frac{1}{4} \kappa_{0} \phi_{4} = 0,\\
&B_4 \equiv \kappa_{2} \phi_{3}
 -  \kappa_{1} \phi_{4} = 0.\label{eq:BDecomp2}
\end{align}
\end{subequations}
Recall here Assumption \ref{assump:BonNstar} in that we assume that these equations hold on $\NN_\star$.

\medskip
Finally, the equation 
\begin{equation*}
    F_{A'BCD}=0,
\end{equation*}
by definition, is equivalent to 
\begin{equation*}
    \nabla^Q{}_{A'}B_{QBCD}=0.
\end{equation*}
This equation is formally identical to the spin-2 massless equation. This observation allows us to obtain a reduction on the number of Killing spinor conditions. In the following sections, we will analyse each of these conditions to prove Proposition \ref{prop:CCKSID}.

\subsection{Deriving conditions on $\NN_\star\cup\scri$ equivalent to $H_{AA'BC}=0$}\label{sec:ConditionsForH}

In this section, we analyse the equations \eqref{eq:HDecomposition00}-\eqref{eq:HDecomposition31} to prove the following Lemma:

\begin{lemma}[Conditions on $\NN_\star\cup\scri$ equivalent to $H_{AA'BC}=0$]
\label{lem:ConditionsForH}
    Assume that the conformal gauge conditions of Proposition \ref{Lemma:ConformalGauge} are satisfied, Assumption \ref{Assumption:KSonScri} holds and equations \eqref{eq:HDecomposition00}, \eqref{eq:HDecomposition10} and \eqref{eq:HDecomposition20} hold along $\NN_\star$. If the following holds:
    \begin{itemize}
        \item[i. ] on $\scri$, 
        \begin{equation}
        \lambda \simeq 0, \qquad \mu\simeq 0, \qquad \Phi_{02'}\simeq 0;
        \end{equation}
        
        \item[ii. ] on $\mathcal{S}_\star$, $\kappa_0$ satisfies the constraint equation
        \begin{equation}
        \edt \kappa_{0} -\frac{2}{3} \sigma = 0
        \end{equation}
        and $\kappa_0$ is propagated along $\scri$ according to the transport equation
        \begin{equation}
        \Delta \kappa_{0} - \frac{2}{3} \tau- 2 \kappa_{0} \gamma\simeq 0;
        \end{equation}

         \item[iii.] on $\NN_\star$,
        \begin{align*}
        \begin{aligned}
        & 2 \kappa_{1} (\Phi_{20} - \rho \lambda -  \mu \bar{\sigma}) + \bar{\sigma} \edt \kappa_{2} - 2 \kappa_{2} (\bar\Psi_{1} + \edt \bar{\sigma}) + 3 \lambda \bar{\eth} \kappa_{0}
        + 2 \kappa_{0} \bar{\eth} \lambda - 2 \bar{\eth} \bar{\eth} \kappa_{1}\bumpeq 0,\\
        &B_0\bumpeq B_1\bumpeq B_2 \bumpeq B_3\bumpeq 0;
        \end{aligned}
        \end{align*}
    \end{itemize}

then 
\[
H_{AA'BC}=0 \qquad on \quad \NN_\star\cup\scri.
\]
\end{lemma}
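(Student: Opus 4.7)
The plan is to analyse each of the eight dyad components of $H_{AA'BC}$ given in \eqref{eq:HDecomposition00}--\eqref{eq:HDecomposition31} and establish that they vanish on $\NN_\star \cup \scri$. Since $D$ is intrinsic to $\NN_\star$ while $\Delta$ is intrinsic to $\scri$, the split between ``intrinsic'' and ``transversal'' components differs on each characteristic hypersurface, so the analysis naturally separates into work on $\scri$ and on $\NN_\star$.

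First I would handle $\scri$. Under Assumption \ref{Assumption:KSonScri} the scalars $\kappa_1 \simeq -\tfrac{1}{3}$ and $\kappa_2 \simeq 0$ are constant along the generators, so $\Delta\kappa_1, \Delta\kappa_2, \eth\kappa_1, \eth\kappa_2, \bar\eth\kappa_1, \bar\eth\kappa_2$ all vanish on $\scri$. Combining this with the conditions $\lambda \simeq 0$, $\mu \simeq 0$ of hypothesis \emph{i.}, direct substitution into \eqref{eq:HDecomposition20}--\eqref{eq:HDecomposition31} shows $H_{30} \simeq H_{21} \simeq H_{31} \simeq 0$. For $H_{11}$, using $\bar\alpha+\beta = \tau$ from Lemma \ref{lem:SpinCoeffCondition} together with the vanishing of $\kappa_2$, $\mu$ and $\eth\kappa_1$, equation \eqref{eq:HDecomposition11} reduces exactly to the transport equation $\Delta\kappa_0 - \tfrac{2}{3}\tau - 2\kappa_0\gamma \simeq 0$ of hypothesis \emph{ii.}, so $H_{11}\simeq 0$. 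The remaining $\scri$-intrinsic component $H_{01} \simeq \eth\kappa_0 - \tfrac{2}{3}\sigma$ vanishes on $\mcS_\star$ by hypothesis \emph{ii.}; I would propagate it along $\scri$ by computing $\Delta H_{01}$ via the commutator $[\Delta,\eth]$ (which on $\scri$ reduces thanks to $\lambda=\mu=0$), the structure equation for $\Delta\sigma$, and the transport equation for $\kappa_0$. The condition $\Phi_{02'}\simeq 0$ is precisely what is needed for the resulting equation to close into a homogeneous ODE in $H_{01}$ along the generators, yielding $H_{01}\simeq 0$ on all of $\scri$.

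Next I would turn to $\NN_\star$. The components $H_{00}, H_{10}, H_{20}$ hold there by hypothesis, and $H_{01}$ and $H_{30}$ vanish at $\mcS_\star$ (the former from the constraint in \emph{ii.}, the latter because $\kappa_2\simeq 0$ and $\lambda \simeq 0$ at the intersection). I would propagate $H_{01}$ and $H_{30}$ along $\NN_\star$ by computing $DH_{01}$ and $DH_{30}$ with the commutators $[D,\eth]$, $[D,\bar\eth]$, together with the $D$-structure equations for $\sigma, \rho, \lambda, \pi$ listed in the proof of Proposition \ref{lem:freedata} and the Ricci identity for $\Box_{AB}$. The assumed vanishing of $H_{00}, H_{10}, H_{20}$ together with the Buchdahl components $B_0=B_1=B_2=B_3\bumpeq 0$ allows the non-homogeneous pieces of these propagation equations to be eliminated; in particular, the complicated identity listed in \emph{iii.} should be recognised as exactly the residual content of $DH_{30}\bumpeq 0$ once $\bar\eth$ has been commuted past $D$ and the algebraic Buchdahl relations have been used to express $\bar\eth\bar\eth\kappa_1$ and $\bar\eth\kappa_2$ in terms of curvature components. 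The transversal components $H_{11}, H_{21}, H_{31}$ on $\NN_\star$ are then recovered from the dyad projections of the wave equation \eqref{eq:KSWaveEq} for $\kappa_{AB}$: once the intrinsic components $H_{00},\ldots,H_{30}, H_{01}$ and their $D$-derivatives are known to vanish, the wave equation reduces to the statement that $H_{11}=H_{21}=H_{31}=0$ on $\NN_\star$.

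The main obstacle is the bookkeeping: identifying the correct order in which the intrinsic components are propagated and which NP commutators, structure equations and Buchdahl relations are required to close each propagation equation into a homogeneous ODE. In particular, verifying that the third identity in \emph{iii.} is genuinely equivalent to $DH_{30}\bumpeq 0$ modulo the other vanishing conditions demands a careful GHP computation involving $\bar\eth\bar\eth\kappa_1$ and the commutator of $\bar\eth$ with $D$ acting on a spin-weighted quantity, and it is at this step that the assumption \ref{assump:BonNstar} is essentially used.
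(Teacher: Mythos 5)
Your overall strategy coincides with the paper's: decompose $H_{AA'BC}$ into its eight dyad components, sort them on each hypersurface into geometric restrictions, constraints, transport equations and algebraic relations for the transverse derivatives, propagate the constraints with the NP structure equations, and invoke the wave equation \eqref{eq:KSWaveEq} for the transversal pieces. However, there is one genuine omission. On $\scri$ you never address the components $H_{00}$, $H_{10}$ and $H_{20}$, which under Assumption \ref{Assumption:KSonScri} reduce to $D\kappa_0\simeq 0$, $2D\kappa_1+\bar\eth\kappa_0-2\kappa_0\pi-\tfrac{2}{3}\rho\simeq 0$ and $D\kappa_2+\tfrac{2}{3}\pi\simeq 0$. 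The transverse derivatives $D\kappa_0$, $D\kappa_1$, $D\kappa_2$ along $\scri$ are not free: they are already fixed by the constraint part of the wave equation \eqref{eq:KSWaveEq} (Proposition \ref{prop:ODEforKS}). The lemma therefore requires a consistency check that these wave-equation values coincide with those demanded by $H_{00}=H_{10}=H_{20}\simeq 0$; the paper carries this out (after a lengthy computation the two prescriptions agree identically), but your proposal leaves a third of the components on $\scri$ unaccounted for.

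A second, lesser point is a misattribution of where the Buchdahl components enter on $\NN_\star$. In the paper's computation only $B_0$ is needed to propagate the constraint $H_{01}$, via $D(\eth\kappa_0+2\kappa_1\sigma)\bumpeq\rho(\eth\kappa_0+2\kappa_1\sigma)+2\Xi B_0$, and the long displayed identity in hypothesis \emph{iii.} is precisely the inhomogeneity in $D(\bar\eth\kappa_2-2\kappa_1\lambda)$ --- it is imposed as a separate condition rather than derived from Assumption \ref{assump:BonNstar}. The conditions $B_1\bumpeq B_2\bumpeq B_3\bumpeq 0$ arise instead at the step you describe as automatic: they are exactly the compatibility conditions between the algebraic expressions for $\Delta\kappa_0$, $\Delta\kappa_1$, $\Delta\kappa_2$ given by \eqref{eq:HDecomposition11}--\eqref{eq:HDecomposition31} and the transport equations implied by \eqref{eq:KSWaveEq}. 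So your final step on $\NN_\star$ does not come for free from the vanishing of the intrinsic components; it closes only because these Buchdahl components vanish, which is why they appear explicitly in hypothesis \emph{iii.}
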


\begin{remark}
    {\em  It is of interest to observe that under the assumptions of Lemma \ref{lem:ConditionsForH}, the components of the conformal Killing vector $\xi_{AA'}$ on $\scri$ are given by:
    \begin{subequations}
    \begin{eqnarray}
     && \xi_{11'}\simeq \Delta \kappa_1 \simeq 0, \label{CKVScri11}\\
     && \xi_{10'}\simeq D\kappa_2 +\frac{2}{3}\pi \simeq 0, \label{CKVScri10}\\
     && \xi_{01'}\simeq -\Delta \kappa_0 + 2\gamma \kappa_0 + \frac{2}{3}\tau \simeq 0, \label{CKVScri01}\\
     && \xi_{00'} \simeq -D\kappa_1 + \bar\eth\kappa_0 +\kappa_0\pi -\frac{2}{3}\rho  +1 \simeq 1. \label{CKVScri00}
    \end{eqnarray}
\end{subequations}
The relation \eqref{CKVScri11} is automatically satisfied by the requirement $\kappa_1\simeq -\frac{1}{3}$. Relation \eqref{CKVScri10} is the same as \eqref{eq:HDecomposition20NI} while \eqref{CKVScri01} is the transport equation for  $\kappa_0$ ---namely, equation \eqref{eq:HDecomposition11NI}. Finally, equations \eqref{CKVScri00} and \eqref{eq:HDecomposition10NI} satisfy the additional constraint 
\[
\bar \eth \kappa_0 \simeq \frac{2}{3}\rho.
\]
}
\end{remark}

\ni To prove Lemma \ref{lem:ConditionsForH}, we will analyse the equation $H_{AA'BC}=0$ on $\scri$ in Section \ref{sec:HonScri} and on $\NN_\star$ in Section \ref{sec:HonNstar}.

\subsubsection{Analysis of $H_{AA'BC}=0$ on $\scri$}\label{sec:HonScri}

In this section, we investigate the conditions that give rise to 
\[
H_{AA'BC}\simeq 0.
\]
Assumption \ref{Assumption:KSonScri} together with the decomposition of the Killing spinor equation as given by equations \eqref{eq:HDecomposition00}-\eqref{eq:HDecomposition31} implies for the condition $H_{AA'BC}\simeq0$ that
\begin{subequations}
    \begin{align}
& D \kappa_{0} \simeq 0,\label{eq:HDecomposition00NI}\\
&2 D \kappa_{1}
 + \bar{\eth} \kappa_{0}-2 \kappa_{0} \pi
 - \frac{2}{3} \rho
  \simeq  0,\label{eq:HDecomposition10NI}\\
&  D \kappa_{2}
 + \frac{2}{3} \pi
 \simeq 0,\label{eq:HDecomposition20NI}\\
&\lambda  \simeq 0,\label{eq:HDecomposition30NI}\\
& \edt \kappa_{0} -\frac{2}{3} \sigma\simeq  0,\label{eq:HDecomposition01NI}\\
& \Delta \kappa_{0}
 - \frac{2}{3} \tau
 - 2 \kappa_{0} \gamma
 \simeq 0,\label{eq:HDecomposition11NI}\\
&  \mu \simeq  0. \label{eq:HDecomposition21NI}
\end{align}
\end{subequations}

\begin{remark}
{\em Observe that the conditions in Assumption \ref{Assumption:KSonScri} imply $\Delta \kappa_2 \simeq \eth \kappa_2 \simeq \bar\eth \kappa_2\simeq 0$ and $\Delta \kappa_1 \simeq \eth \kappa_1 \simeq \bar\eth \kappa_1\simeq 0$. However, they do not specify the transverse derivatives $D\kappa_2$ and $D\kappa_1$. }    
\end{remark}

Given a fixed background geometry, a natural interpretation of the various components of $H_{AA'BC}\simeq 0$ is as follows: 
\begin{itemize}
\item[(i)] equations \eqref{eq:HDecomposition30NI} and \eqref{eq:HDecomposition21NI} are restrictions on the background geometry. 
\item[(ii)] equation \eqref{eq:HDecomposition01NI} is a \emph{constraint equation} for the components $\kappa_0$ which can be solved at, say, $\mathcal{S}_\star$.
\item[(iii)] Equation \eqref{eq:HDecomposition11NI} is a \emph{transport equation} for the component of $\kappa_0$, $\kappa_1$ and $\kappa_2$ along the null generators of $\scri$. Given initial data for $\kappa_0$, $\kappa_1$ and $\kappa_2$ on $\mathcal{S}_\star$, these equations determine their value along $\scri$. Naturally, one would like to make use of the initial data obtained from solving the constraint  \eqref{eq:HDecomposition01NI}.
\item[(iv)] Finally, once the values of the components of $\kappa_{AB}$ on $\scri$ are known, equations \eqref{eq:HDecomposition00NI}-\eqref{eq:HDecomposition20NI} become \emph{algebraic conditions} which allow to determine the transverse derivatives $D\kappa_0$, $D\kappa_1$ and $D\kappa_2$. 
\end{itemize}

In order for the above scheme to be consistent one needs to verify a number of solvability and consistency conditions. More precisely, one needs to verify that:

\begin{itemize}
    \item[(a)] the \emph{constraint} \eqref{eq:HDecomposition01NI} can be solved for a given choice of $\sigma$. One also needs to examine the conditions for uniqueness of the solution.

    \item[(b)] One needs to understand the conditions upon which equation \eqref{eq:HDecomposition01NI} is propagated by virtue of the transport equation \eqref{eq:HDecomposition11NI}. In other words, one would like to solve the constraints only at $\mathcal{S}_\star$.

    \item[(c)] One needs to understand to what extent the value of the transverse derivatives $D\kappa_0$, $D\kappa_1$ and $D\kappa_2$ as given by equations \eqref{eq:HDecomposition00NI}-\eqref{eq:HDecomposition20NI} is consistent with the values given by the Killing spinor candidate equation \eqref{eq:KSWaveEq} ---see Proposition \ref{prop:ODEforKS}. 
\end{itemize}

In the remaining of this section we ellaborate on the points (a)-(c) above. 

\medskip
Regarding point (a) it is observed that the formal adjoint of $\eth$ acting on an object of spin-weight $1$ is the operator $\bar\eth$ acting on operators of spin-weight $2$. The latter has trivial Kernel ---see e.g. \cite{Ste91}, Lemma 3.9.4. Thus, making use of the Fredholm alternative it follows that equation \eqref{eq:HDecomposition01NI} can be solved for a given choice of the spin-connection coefficient $\sigma$. This solution, however, is not unique as the Kernel of $\eth$ acting on objects of spin-weight $1$ is non-trivial ---it is spanned by the spin-weighted harmonics ${}_1 Y_{1m}$. 

\medskip
For point (b) above a calculation shows that 
the left hand side of \eqref{eq:HDecomposition01NI} satisfies the equation
\[
    \Delta(\edt \kappa_{0}- \frac{2}{3} \sigma)\simeq ( 3 \gamma -  \bar{\gamma} )(\edt \kappa_{0}-\frac{2}{3}\si)- \frac{2}{3} \Phi_{02'}-2 \kappa_{0} \Phi_{12'}.
\]
The above equation ensures the propagation of the constraint if and only if the non-homogeneous term vanishes. That is, we require
\begin{equation}
\Phi_{02'}+3 \kappa_{0} \Phi_{12'}\simeq 0.
\label{Alignment:Ricci}
\end{equation}
This is an alignment condition between the tracefree Ricci tensor and the Killing spinor. 

\medskip
Finally, assume that $\lambda\simeq 0$ and $\mu\simeq 0$. Then the Ricci identities imply that 
\[
\Phi_{12'}\simeq 0.
\]
Thus, the alignment condition \eqref{Alignment:Ricci} takes the form
\[
\Phi_{02'} \simeq 0.
\]

\medskip
Finally, it remains to consider the equations for the transversal derivatives \eqref{eq:HDecomposition00NI}-\eqref{eq:HDecomposition20NI}.  Substituting these into the transport equations along $\scri$ implied by the Killing spinor candidate equation \eqref{eq:KSWaveEq} one finds, after some lengthy computations that these are trivially satisfied. Thus, the expressions for $D\kappa_0$, $D\kappa_1$ and $D\kappa_2$ obtained from solving the algebraic equations \eqref{eq:HDecomposition00NI}-\eqref{eq:HDecomposition20NI} are equivalent to that of solving the transport equations. This concludes our analysis of $H_{AA'BC}=0$ on $\scri$.

\subsubsection{Analysis of $H_{AA'BC}=0$ On $\NN_\star$}\label{sec:HonNstar}

An analysis analogous to the one on $\mathscr{I}^+$ follows, \emph{mutatis mutandi}, for the condition $H_{AA'BC}=0$ on $\NN_\star$ ($H_{AA'BC}\bumpeq 0$ for short). This analysis is, however, less clean than the one on $\mathscr{I}^+$ as the null hypersurface $\NN_\star$ has less structure. The interpretation of the decomposition of $H_{AA'BC}\bumpeq 0$ is as follows:
\begin{itemize}
\item[(ii')] equations \eqref{eq:HDecomposition00}, \eqref{eq:HDecomposition10} and \eqref{eq:HDecomposition20} constitute transport equations along $\NN_\star$ propagating the values of $\kappa_0$, $\kappa_1$ and $\kappa_2$ on $\mathcal{S}_\star$ along $\NN_\star$.
\item[(iii')] as in the case of $\scri$, the value of $\kappa_0$ and $\kappa_2$ is obtained, for prescribed $\kappa_1$, from solving the \emph{constraint equations} \eqref{eq:HDecomposition30} and \eqref{eq:HDecomposition01}. 
\item[(iv')] Finally, once the value of $\kappa_0$, $\kappa_1$, $\kappa_2$ along $\NN_\star$ is known, equations \eqref{eq:HDecomposition11}-\eqref{eq:HDecomposition31} become \emph{algebraic relations} giving the value of the transverse derivatives $\Delta\kappa_0$, $\Delta\kappa_1$ and $\Delta\kappa_2$. 
\end{itemize}

The solvability of the constraints \eqref{eq:HDecomposition30}-\eqref{eq:HDecomposition01} has already been discussed in the previous subsection. On still needs to consider:
\begin{itemize}
    \item[(b')] the propagation of the constraints \eqref{eq:HDecomposition30}-\eqref{eq:HDecomposition01} along $\NN_\star$ by virtue of the transport equations \eqref{eq:HDecomposition00}-\eqref{eq:HDecomposition20}.
    \item[(c')] The consistency of between the value of the transverse derivatives provided by equations \eqref{eq:HDecomposition11}-\eqref{eq:HDecomposition31} and those given by the Killing candidate equation \eqref{eq:KSWaveEq}.
\end{itemize}

In order to address point (b') above, we consider the derivatives of the constraints along $\mathcal{N}_\star$. A computation yields the equations
\begin{align*}
\begin{aligned}
      D(\edt \kappa_{0}+ 2 \kappa_{1} \sigma)&\bumpeq  \rho (\edt \kappa_{0} + 2 \kappa_{1} \sigma )+2 \Xi( \kappa_{0} \phi_{1}-\kappa_{1} \phi_{0})\\
      &\bumpeq \rho (\edt \kappa_{0} + 2 \kappa_{1} \sigma ) + 2 \Xi B_0.
    \end{aligned}
\end{align*}
Thus, the above equation ensures propagation of the constraint $\edt \kappa_{0}+ 2 \kappa_{1} \sigma$ if the component $B_0$ of the Buchdahl constraint vanishes along $\mathcal{N}_\star$. The equation for the constraint $\bar{\eth} \kappa_{2}
 - 2 \kappa_{1} \lambda$ is more involved. In this case one obtains
\begin{align*}
\begin{aligned}
   D(\bar{\eth} \kappa_{2}
 - 2 \kappa_{1} \lambda) \bumpeq &- \rho( \bar{\eth} \kappa_{2}-2 \kappa_{1}\la)+2 \kappa_{1} (\Phi_{20} - \rho \lambda -  \mu \bar{\sigma}) + \bar{\sigma} \edt \kappa_{2} - 2 \kappa_{2} (\bar\Psi_{1} + \edt \bar{\sigma})\\& \hspace{1cm} + 3 \lambda \bar{\eth} \kappa_{0}
  + 2 \kappa_{0} \bar{\eth} \lambda - 2 \bar{\eth} \bar{\eth} \kappa_{1}.
 \end{aligned}
\end{align*}
where we have used the transport equations \eqref{eq:HDecomposition10} and \eqref{eq:HDecomposition20}.
In order to address point (c') we start by substituting \eqref{eq:HDecomposition31} into the transport equations implied by the Killing spinor candidate equation \eqref{eq:KSWaveEq}.  After applying the NP commutators, the transport equations \eqref{eq:HDecomposition21}, \eqref{eq:HDecomposition20}, \eqref{eq:HDecomposition30} and the NP Ricci equations one finds the compatibility condition
\[
    3 \kappa_{2} \phi_{2} - 2 \kappa_{1} \phi_{3} -  \kappa_{0} \phi_{4} \bumpeq 0.
\]
Following a similar procedure with equation \eqref{eq:HDecomposition21} one finds, after using the NP commutators, the transport equations \eqref{eq:HDecomposition11}, \eqref{eq:HDecomposition10}, \eqref{eq:HDecomposition20}, \eqref{eq:HDecomposition30} and the NP Ricci equations the compatibility condition
\[
    \kappa_{2} \phi_{1} -  \kappa_{0} \phi_{3} \bumpeq 0.
\]
Finally, for \eqref{eq:HDecomposition11} using the NP commutators, the transport equations \eqref{eq:HDecomposition00},  \eqref{eq:HDecomposition01},  \eqref{eq:HDecomposition10},  \eqref{eq:HDecomposition20} and the NP Ricci equations one concludes that 
\begin{equation}
    \kappa_{2} \phi_{0} + 2 \kappa_{1} \phi_{1} - 3 \kappa_{0} \phi_{2} \bumpeq 0.
\end{equation}

\begin{remark}
{\em These compatibility conditions are precisely $B_1\bumpeq 0$, $B_2\bumpeq 0$ and $B_3\bumpeq0$ respectively.}
\end{remark}

\ni This completes the proof of Lemma \ref{lem:ConditionsForH}.

\subsection{Deriving conditions on $\scri$ equivalent to $B_{ABCD}=0$}\label{sec:ConditionsForB}

In this section, we analyse the equations \eqref{eq:BDecomp1}-\eqref{eq:BDecomp2} to prove the following lemma:

\begin{lemma}[Conditions on $\scri$ equivalent to $B_{ABCD}=0$]\label{lem:ConditionsForB}
Assume that the conformal gauge conditions of Proposition \ref{Lemma:ConformalGauge} are satisfied and Assumption \ref{Assumption:KSonScri} holds. If on $\mathcal{S}_\star$, $\kappa_0$, $\phi_0$, $\phi_1$ and $\phi_2$ can be chosen to satisfy
\begin{align*}
 \begin{aligned}
\frac{1}{3}\phi_{0}
 +\kappa_{0} \phi_{1} =& 0, \\
 \frac{2}{3} \phi_{1}
 +3 \kappa_{0} \phi_{2} =& 0,\\
 \end{aligned}
 \end{align*}
 with $\phi_3=0$, and on $\scri$, $\phi_4\simeq 0$, then
 \[
 B_{ABCD}\simeq0
 \]
\end{lemma}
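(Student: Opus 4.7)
The plan is to project the Buchdahl zero-quantity $B_{ABCD}$ onto the dyad and to reduce each of the five dyad components $B_0,\dots,B_4$ of \eqref{eq:BDecomp1}--\eqref{eq:BDecomp2} by substituting the values $\kappa_1\simeq-\tfrac{1}{3}$, $\kappa_2\simeq 0$ provided by Assumption \ref{Assumption:KSonScri}. After this reduction each $B_i$ becomes an explicit algebraic expression in $\kappa_0$ and the rescaled-Weyl components $\phi_0,\dots,\phi_4$ on $\scri$. The problem is then to show that the data assumed at $\mcS_\star$ together with the single hypothesis $\phi_4\simeq 0$ force these algebraic combinations to vanish along all null generators of $\scri$.

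First, $B_4\simeq -\kappa_1\phi_4\simeq\tfrac{1}{3}\phi_4$, so $B_4\simeq 0$ is immediate. To get $\phi_3\simeq 0$ on all of $\scri$, note that the last equation in \eqref{eq:CFE4onScri} reads
\[
\Delta\phi_3-\delta\phi_4 = 4\phi_4\beta-2\phi_3\gamma-4\phi_3\mu-\phi_4\tau;
\]
with $\phi_4\simeq 0$ this is a homogeneous linear transport equation for $\phi_3$ along the generators of $\scri$. Since $\phi_3=0$ at $\mcS_\star$ by hypothesis, ODE uniqueness then yields $\phi_3\simeq 0$ throughout $\scri$. Substitution gives $B_2\simeq -\tfrac{1}{2}\kappa_0\phi_3\simeq 0$ and $B_3\simeq\tfrac{1}{6}\phi_3-\tfrac{1}{4}\kappa_0\phi_4\simeq 0$.

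For the remaining components, the reductions yield
\[
B_0\simeq -\tfrac{1}{3}\bigl(\phi_0+3\kappa_0\phi_1\bigr),\qquad B_1\simeq -\tfrac{1}{6}\bigl(\phi_1+\tfrac{9}{2}\kappa_0\phi_2\bigr),
\]
both of which vanish at $\mcS_\star$ by the two assumed initial relations. I would propagate them by applying $\Delta$ to each combination and substituting: on the one hand, the now-simplified Bianchi transport equations of \eqref{eq:CFE4onScri} for $\Delta\phi_0$, $\Delta\phi_1$, $\Delta\phi_2$ (in which $\phi_3$ and $\phi_4$ are already known to be $\simeq 0$); and on the other hand, the transport equation for $\Delta\kappa_0$ obtained from the Killing spinor candidate wave equation \eqref{eq:KSWaveEq} on $\scri$ via Proposition \ref{prop:ODEforKS}. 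The aim is to exhibit $\Delta B_0$ and $\Delta B_1$ as a closed homogeneous linear combination of $B_0$ and $B_1$ alone (modulo multiples of the already vanishing $B_2,B_3,B_4$); then uniqueness for the resulting first-order ODE system along each null generator of $\scri$, combined with the vanishing initial data at $\mcS_\star$, forces $B_0\simeq B_1\simeq 0$ on all of $\scri$.

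The main obstacle will be step three, namely the algebraic bookkeeping that shows $\Delta B_0$ and $\Delta B_1$ close homogeneously in $B_0,B_1$. The Bianchi transport for $\phi_0$ contains a $\delta\phi_1$ term and the transport for $\phi_1$ contains a $\delta\phi_2$ term, so closing the system requires commuting the operator $\delta$ through the algebraic identities $\phi_0+3\kappa_0\phi_1\simeq 0$ and $\phi_1+\tfrac{9}{2}\kappa_0\phi_2\simeq 0$ and invoking a GHP identity for $\eth\kappa_0$. One expects the $\kappa_0$-evolution extracted from \eqref{eq:KSWaveEq} to coincide with the transport relation $\Delta\kappa_0\simeq \tfrac{2}{3}\tau+2\kappa_0\gamma$ that already appears in Lemma \ref{lem:ConditionsForH}; this is precisely the form that produces the required cancellations and closes the system, completing the proof.
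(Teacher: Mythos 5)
Your reduction of the dyad components under Assumption \ref{Assumption:KSonScri}, the observation that $B_4\propto\phi_4$, the homogeneous transport of $\phi_3$ along the generators of $\scri$ once $\phi_4\simeq 0$, and the resulting vanishing of $B_2$ and $B_3$ reproduce the paper's argument exactly, down to the coefficients in $B_0\simeq-\tfrac{1}{3}(\phi_0+3\kappa_0\phi_1)$ and $B_1\simeq-\tfrac{1}{6}(\phi_1+\tfrac{9}{2}\kappa_0\phi_2)$.

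The treatment of $B_0$ and $B_1$ is where the proposal stops being a proof. The closure computation --- exhibiting $\Delta B_0$ and $\Delta B_1$ as a homogeneous linear system in the $B_i$ --- is only announced, and you yourself flag it as the main obstacle; nothing is actually verified. Worse, the ingredients you propose to feed into it are not available under the hypotheses of the lemma: the transport law $\Delta\kappa_0\simeq\tfrac{2}{3}\tau+2\kappa_0\gamma$ and the constraint $\eth\kappa_0\simeq\tfrac{2}{3}\sigma$ are equations \eqref{eq:HDecomposition11NI} and \eqref{eq:HDecomposition01NI}, i.e.\ separate conditions imposed in Lemma \ref{lem:ConditionsForH} and Proposition \ref{prop:CCKSID}, not consequences of Assumption \ref{Assumption:KSonScri}; and Proposition \ref{prop:ODEforKS} determines only the \emph{transversal} derivatives $D\kappa_i$ on $\scri$ from the wave equation \eqref{eq:KSWaveEq} --- the intrinsic derivative $\Delta\kappa_0$ is fixed by whatever characteristic data one prescribes for $\kappa_0$ along $\scri$, so it cannot be ``extracted from \eqref{eq:KSWaveEq}''. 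The paper's own proof takes a different, non-dynamical route at this point: with $\phi_3\simeq\phi_4\simeq 0$ the Bianchi system \eqref{eq:CFE4onScriphi4To0} closes hierarchically and determines $\phi_0,\phi_1,\phi_2$ on $\scri$ from their values on $\mcS_\star$, and the two relations $\tfrac{1}{3}\phi_0+\kappa_0\phi_1=0$, $\tfrac{2}{3}\phi_1+3\kappa_0\phi_2=0$ are read as algebraic conditions on the freely specifiable data $\phi_0$, $\phi_1$, $\phi_2+\bar\phi_2$ of Proposition \ref{lem:freedata} at $\mcS_\star$, with $\mathrm{Im}\,\phi_2$ and $\phi_3$ supplied by \eqref{eq:CFE3k} and \eqref{eq:CFE3l}; no transport equations for $B_0$, $B_1$ are derived there. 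To complete your version you would have to either carry out the $\Delta B_0$, $\Delta B_1$ computation with an explicitly stated propagation law for $\kappa_0$ added to the hypotheses, or fall back on the paper's bookkeeping of the free data.
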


\begin{remark}
\emph{By Assumption \ref{assump:BonNstar}, we do not need to consider conditions for $B_{ABCD}\bumpeq 0$.} 
\end{remark}

\begin{proof}

Under Assumption \ref{Assumption:KSonScri}, the dyad componeonts of $B_{ABCD}$ \eqref{eq:BDecomp1}-\eqref{eq:BDecomp2} reduce to
\begin{subequations}
    \begin{align}
&B_0\equiv -\frac{1}{3}\phi_{0}
 -  \kappa_{0} \phi_{1} = 0, \label{eq:B0onScri}\\
&B_1 \equiv -\frac{2}{3} \phi_{1}
 -  3 \kappa_{0} \phi_{2} = 0,\label{eq:B1onScri}\\
&B_2\equiv - \kappa_{0} \phi_{3} = 0,\\
&B_3\equiv  \frac{2}{3}\phi_{3}
 -   \kappa_{0} \phi_{4} = 0,\\
&B_4 \equiv \frac{1}{3} \phi_{4} = 0.
\end{align}
\end{subequations}
Inspecting the components $B_2$ and $B_4$ readily implies that $\phi_4=\phi_3=0$ is equivalent to $B_2=B_3=B_4=0$ on $\scri$. Now, the fourth conformal field equation \eqref{eq:CFE4a}-\eqref{eq:CFE4b} are a set of transport equations along $\scri$ and $\NN_\star$ for the components of the rescaled Weyl spinor. Consider setting $\phi_4\simeq0$ on $\scri$, then the transport equations for the components of the Weyl spinor along $\scri$ take the form ---see the system of equations \eqref{eq:CFE4onScri}:
 \begin{align}
 \begin{aligned}\label{eq:CFE4onScriphi4To0}
 \Delta \phi_{0} - \delta \phi_{1}={}&-2 \phi_{1} \beta
 + 4 \phi_{0} \gamma
 -  \phi_{0} \mu
 + 3 \phi_{2} \sigma
 - 4 \phi_{1} \tau,\\
\Delta \phi_{1}- \delta \phi_{2}={}&2 \phi_{1} \gamma
 - 2 \phi_{1} \mu
 + 2 \phi_{3} \sigma
 - 3 \phi_{2} \tau,\\
\Delta \phi_{2}-\delta \phi_{3}={}&2 \phi_{3} \beta
 - 3 \phi_{2} \mu
 - 2 \phi_{3} \tau,\\
\Delta \phi_{3}={}&
 - 2 \phi_{3} \gamma
 - 4 \phi_{3} \mu
 \end{aligned}
 \end{align}
 The last of these equations is a homogeneous transport equation for $\phi_3$. Hence, setting $\phi_3=0$ on $\mathcal{S}_\star$ means that $\phi_3\simeq0$. It is important to note that the Weyl component $\phi_3$ does not constitute part of our free data on $\mathcal{S}_\star$. It is specified by the equation \eqref{eq:CFE3l} on $\mathcal{S}_\star$
\begin{equation*}
\delta \Phi_{11}
 - \bar\delta \Phi_{02}={}\bar{\phi}_{3} - 2 \Phi_{02} \alpha
 + 2 \Phi_{02} \bar{\beta},
\end{equation*}
 where $\Phi_{02}$ is part of the free data and the quantities $\alpha$, $\beta$ and $\phi_{11}$ are calculated from the specification of the differential operator $\delta$, the fact that $\tau=0$ on $\mathcal{S}_\star$ and the equation \eqref{eq:Riccij} on $\mathcal{S}_\star$
\begin{equation*}
\bar\delta \beta - \delta \alpha=
 + \Phi_{11}
 -  \alpha \bar{\alpha}
 + 2 \alpha \beta
 -  \beta \bar{\beta}.
\end{equation*}

  The first three equations of \eqref{eq:CFE4onScriphi4To0} with $\phi_3=0$ are
 \begin{align*}
 \begin{aligned}
 \Delta \phi_{0} - \delta \phi_{1}={}&-2 \phi_{1} \beta
 + 4 \phi_{0} \gamma
 -  \phi_{0} \mu
 + 3 \phi_{2} \sigma
 - 4 \phi_{1} \tau,\\
\Delta \phi_{1}- \delta \phi_{2}={}&2 \phi_{1} \gamma
 - 2 \phi_{1} \mu
 - 3 \phi_{2} \tau,\\
\Delta \phi_{2}-\delta \phi_{3}={}&
 - 3 \phi_{2} \mu.
 \end{aligned}
 \end{align*}
This is a closed system of equations for $\phi_0$, $\phi_1$ and $\phi_2$ and therefore, given values on $\mathcal{S}_*$, this system propagates $\phi_0$, $\phi_1$ and $\phi_2$ onto $\scri$. Moreover, the equations \eqref{eq:B0onScri} and \eqref{eq:B1onScri} are a pair of simultaneous equations on $\mathcal{S}_\star$. The components of the Weyl spinor $\phi_0$, $\phi_1$ and the real part of $\phi_2$ constitute parts of the free data of Lemma \ref{lem:freedata}. The complex part of $\phi_2$ is evaluated on $\mathcal{S}_\star$ from the third conformal field equation \eqref{eq:CFE3k}. Thus if $\phi_0$, $\phi_1$ and $\phi_2$ can be chosen to satisfy the equations 
\begin{align*}
 \begin{aligned}
\frac{1}{3}\phi_{0}
 +\kappa_{0} \phi_{1} =& 0, \\
 \frac{2}{3} \phi_{1}
 +3 \kappa_{0} \phi_{2} =& 0,\\
 \end{aligned}
 \end{align*}
 on $\mathcal{S}_\star$, $\phi_3=0$ on $\mathcal{S}_\star$ and $\phi_4=0$ on $\scri$, then
\[
B_{ABCD}\simeq 0.
\]
\end{proof}

\subsection{A reduction of the equation $F_{A'BCD}=0$}\label{sec:ReducingF}

Finally, in this section, we proof the following lemma:
\begin{lemma}[Conditions on $\NN_\star\cup\scri$ equivalent to $F_{A'BCD}=0$]\label{lem:ConditionsForF}
    Assume that the statement of Lemma \ref{lem:ConditionsForB} holds, that is $B_{ABCD}=0$ on $\NN_\star\cup\scri$. If on $\mathcal{S}_\star$ the equations
    \begin{align}
    \begin{aligned}
     & D B_1 =0, \quad DB_2= 0, \quad DB_3 = 0, \quad DB_4= 0,\\
     & \Delta B_0= 0, \quad \Delta B_1= 0, \quad \Delta B_2 = 0, \quad \Delta B_3= 0.
     \end{aligned}
    \end{align}
    hold, then 
    \[
    F_{A'BCD}=0 \qquad on \quad \NN_\star\cup\scri
    \]
\end{lemma}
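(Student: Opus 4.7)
The plan is to exploit the structural analogy, noted below \eqref{eq:BDecomp2}, between $F_{A'BCD} = \nabla^Q{}_{A'}B_{QBCD} = 0$ and the spin-2 massless equation. First, decomposing $F_{A'BCD} = 0$ in the spin dyad yields eight scalar equations. Splitting by the primed index, these organise into four ``$D$-type'' relations, schematically $D B_i - \bar\eth B_{i-1} + (\text{conn.}) \cdot B_j = 0$ for $i = 1, 2, 3, 4$, and four ``$\Delta$-type'' relations $\Delta B_i - \eth B_{i+1} + (\text{conn.}) \cdot B_j = 0$ for $i = 0, 1, 2, 3$, in direct analogy with the Bianchi-like equations \eqref{eq:CFE4onScri} for the rescaled Weyl spinor.

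On $\scri$ the hypothesis $B_{ABCD} \simeq 0$ (from Lemma \ref{lem:ConditionsForB}) forces all tangential derivatives $\Delta B_i$, $\eth B_i$, $\bar\eth B_i$, and every connection term proportional to $B_j$, to vanish on $\scri$. The four ``$\Delta$-type'' components of $F$ are then satisfied identically, while the four ``$D$-type'' components collapse to the transverse conditions $D B_i \simeq 0$ for $i = 1, 2, 3, 4$. A symmetric analysis on $\NN_\star$, using Assumption \ref{assump:BonNstar}, shows that $F_{A'BCD} \bumpeq 0$ is equivalent to $\Delta B_i \bumpeq 0$ for $i = 0, 1, 2, 3$. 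The remaining content of the lemma is therefore to propagate these transverse-derivative conditions along the generators of each hypersurface from the values prescribed at $\mathcal{S}_\star$.

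For the propagation along $\scri$, I apply $\Delta$ to $D B_i$ and use the NP commutator, which on $\scri$ gives $\Delta(DB_i) = D(\Delta B_i) + (\gamma + \bar\gamma) D B_i$ up to terms proportional to the vanishing tangential derivatives of $B_i$. The residual second-order term $D(\Delta B_i)$ is extracted from the wave equation \eqref{Eq:WaveEqForB}: with $\Lambda = 0$ in the present conformal gauge and $B_{ABCD} \simeq 0$, that equation reduces to
\[
(D\Delta + \Delta D - \eth \bar\eth - \bar\eth \eth) B_i \simeq 2 \big( \nabla_{(A|A'|} F^{A'}{}_{|BCD)} \big) \big|_{\scri},
\]
the spin-weighted connection contributions being absorbed because they come with a factor of $B_i$. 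On the right the $F^{1'}$-components vanish on $\scri$ by the argument above, so only their transverse $D$-derivatives contribute, while the $F^{0'}$-components coincide with the unknowns $DB_j$ up to connection and enter through tangential derivatives only. Expanding the divergence in dyad and re-substituting the commutator identity, all pieces rearrange into a closed homogeneous linear first-order system $\Delta(D B_i) \simeq \sum_{j} c_{ij} \, D B_j$ along the generators of $\scri$, with coefficients determined by the background geometry. ODE uniqueness together with $D B_i = 0$ at $\mathcal{S}_\star$ then yields $D B_i \simeq 0$ on all of $\scri$, and hence $F_{A'BCD} \simeq 0$. The entirely parallel argument on $\NN_\star$, with $D$ and $\Delta$ interchanged and propagation along the $\bml$-generators, gives $F_{A'BCD} \bumpeq 0$.

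The main obstacle is establishing that the transport system in the third step is genuinely closed and homogeneous. The divergence $\nabla F$ on the right-hand side of the wave equation naturally reintroduces mixed second-order derivatives $D \Delta B_j$ through the $D$-derivatives of the $F^{1'}$-components, and these must cancel against the corresponding contributions on the left. Showing that the surviving terms are exactly linear in the four unknowns $D B_1, \dots, D B_4$ on $\scri$ (respectively $\Delta B_0, \dots, \Delta B_3$ on $\NN_\star$) is a careful NP-formalism bookkeeping exercise involving the commutators, the spin-weighted structure, and the coupling to the background Weyl spinor, and constitutes the main technical content of the proof.
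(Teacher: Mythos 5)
Your proposal is correct and follows essentially the same route as the paper: decompose $F_{A'BCD}=0$ as a spin-2 massless system so that, with $B_{ABCD}$ and all its tangential derivatives vanishing on each hypersurface, the condition reduces to $DB_1,\dots,DB_4\simeq 0$ on $\scri$ and $\Delta B_0,\dots,\Delta B_3\bumpeq 0$ on $\NN_\star$, and then use the wave equation \eqref{Eq:WaveEqForB} together with the NP commutators to propagate these transverse-derivative conditions homogeneously along the generators from their zero values at $\mathcal{S}_\star$. You are in fact somewhat more explicit than the paper about the mechanism of the propagation step (the cancellation of the mixed second derivatives between $\square B$ and $\nabla F$), which the paper compresses into the assertion that the wave equation is trivially satisfied and produces no consistency conditions.
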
   

\begin{proof}
Recall that the equation 
\begin{equation*}
    F_{A'BCD}=0
\end{equation*}
is, by definition, given by
\begin{equation*}
    \nabla^Q{}_{A'}B_{QBCD}=0,
\end{equation*}
which, in turn, is formally identical to the spin-2 massless equation. Thus, we can obtain a reduction on the number of Killing spinor conditions. Expanding the massless spin-2 equation for $B_{ABCD}$ in terms of the components $B_0,\ldots,B_4$ on $\mathscr{I}^+$ and using that from Lemma \ref{lem:ConditionsForB}, $B_{ABCD}\simeq 0$  (and therefore also $\Delta B_{ABCD}\simeq 0$, $\eth B_{ABCD}\simeq 0$ and $\bar{\eth} B_{ABCD}\simeq 0$), we obtain that
\begin{equation}
D B_1\simeq 0, \quad DB_2\simeq 0, \quad DB_3 \simeq 0, \quad DB_4\simeq 0.
\label{ConditionFScriReduced}
\end{equation}
That is, on $\scri$, the equations \eqref{ConditionFScriReduced} are equivalent to $F_{A'BCD}\simeq0$. We note that there is no condition on $DB_0$. Recall that $B_{ABCD}$ also satisfies the wave equation \eqref{Eq:WaveEqForB}. Using that $B_{ABCD}\simeq 0$ on $\mathscr{I}^+$ and the NP commutators, the wave equation for $B_{ABCD}$ is trivially satisfied on $\scri$. There are no consistency conditions arising from combining the conditions \eqref{ConditionFScriReduced} and the the wave equation \eqref{Eq:WaveEqForB}. Thus, the equations \eqref{ConditionFScriReduced} hold $\scri$ if they hold on $\mathcal{S}_\star$. 

Analogously, on $\mathcal{N}_\star$, using that $B_{ABCD}\bumpeq0$ (and therefore  $D B_{ABCD}\bumpeq 0$, $\eth B_{ABCD}\bumpeq 0$ and $\bar{\eth} B_{ABCD}\bumpeq 0$), we obtain that
\begin{equation}\label{eq:ConditionsFNstarReduced}
\Delta B_0\bumpeq 0, \quad \Delta B_1\bumpeq 0, \quad \Delta B_2 \bumpeq 0, \quad \Delta B_3\bumpeq 0.
\end{equation} 
The component $\Delta B_4$ is unconstrained. Again, using that $B_{ABCD}\bumpeq 0$ on $\NN_\star$, the wave equation for $B_{ABCD}$ using the NP commutators is trivially satisfied and therefore can be propagated given that the equations \eqref{eq:ConditionsFNstarReduced} hold.
\end{proof}

Combining Lemmas \ref{lem:ConditionsForH}, \ref{lem:ConditionsForB} and \ref{lem:ConditionsForF} concludes the proof of Proposition \ref{prop:CCKSID}. In the next section, we provide some intuition for Proposition \ref{prop:CCKSID}.

\subsection{Some intuition: non-expanding horizons}\label{sec:CCKSIDIntuition}

In this subsection we provide some intuition into the conditions obtained in Propositions \ref{prop:CCKSID} by considering specific choices of the free specifiable data. It turns out that this choice has an analogue to the case where the null hypersurfaces are non-expanding horizons\\

As discussed in Proposition \ref{lem:freedata}, the component $\phi_0$ of the rescaled Weyl tensor encodes the freely specifiable data on $\NN_\star$. Accordingly, a particular interesting case to obtain intuition is to set $\phi_0\bumpeq 0$.

\medskip
If $\phi_0\bumpeq 0$, then $\kappa_0 \phi_1\bumpeq 0$. \emph{Choosing $\phi_1\bumpeq 0$ implies that $\ka_0\bumpeq 0$ and vice versa.} To see this, assume $\phi_1\bumpeq 0$ and notice that with $\phi_0\bumpeq 0$, $B_1\bumpeq 0$ implies that $\ka_0\phi_2\bumpeq 0$ and $B_2 \bumpeq 0$ implies that $\ka_0\phi_3 \bumpeq 0$. If $\ka_0\neq  0$ then $\phi_2=\phi_3\bumpeq 0$ and $B_3\bumpeq 0$ implies that $\phi_4\bumpeq 0$. If $\ka_0\bumpeq 0$ this does not constraint $\phi_2$ or $\phi_3$. On the other hand, if we assume $\kappa_0\bumpeq 0$ then with $\phi_0\bumpeq 0$, the condition $B_1\bumpeq 0$ implies that $\ka_1\phi_1\bumpeq 0$. If we assume $\ka_1 \bumpeq 0$ then condition $B_2\bumpeq 0$ implies that $\ka_2 \bumpeq 0$. Since we assume that the Killing spinor does not vanish, we choose $\phi_1 \bumpeq 0$. 

\smallskip
Now, substituting the conditions $\phi_1\bumpeq 0$ and $\ka_0\bumpeq 0$ into the condition in Proposition \ref{prop:CCKSID} one obtains 
\begin{equation*}
    2 \kappa_{1} (\Phi_{20} - \rho \lambda)- 2 \bar{\eth} \bar{\eth} \kappa_{1}\bumpeq 0.
\end{equation*}
Recall that under the current assumptions the conformal field equations for the tracefree Ricci tensor imply the transport equation 
\[
    D \Phi_{02}-\Phi_{02} \rho\bumpeq 0.
\]
Moreover, we also have the following transport equation for $\rho$:
\[
 D\rho-\rho^2\bumpeq 0.
\]
From Lemma \ref{Lemma:ConformalGauge} we have that $\rho=0$ on $\mcS_\star$. Thus, using the last transport equation one can conclude that $\rho\bumpeq 0$. The latter, in turn, implies that $\Phi_{02'}\bumpeq 0$. Accordingly, the differential condition on $\NN_\star$ in Proposition \ref{prop:CCKSID} reduces to  
\begin{equation*}
 \bar{\eth} \bar{\eth} \kappa_{1}\bumpeq 0.
\end{equation*}

\begin{remark}
    {\em This condition has an analogue in the analysis of \cite{ColValRac18}. Observe that in that case, the null hypersurface was assumed to be a non-expanding horizon.}
\end{remark}

\section{The asymptotic characterisation of Kerr}\label{sec:AsympKerr}

In this section we conclude our analysis by combining the characteristic conformal Killing spinor initial data, and develop the asymptotic characterisation of Kerr using Theorem \ref{Theorem:SpinorialCharacterisationKerr}. In section \ref{sec:confrepKerr}, we begin by looking at a simple conformal representation of the Kerr spacetime. This will allow us to conclude the analysis started in Section \ref{Section:KillingSpinorsGeneral} ---see, in particular, Theorem \ref{Theorem:SpinorialCharacterisationKerr} and Lemma \ref{Lemma:AsymptoticSimplicity}. 

\subsection{A conformal representation of the Kerr spacetime}\label{sec:confrepKerr} 

In the following, let $(\tilde{\mathcal{M}}_{\mathrm{Kerr}},\tilde{\bmg}_{\mathrm{Kerr}})$ denote portion of a member of the Kerr family of spacetimes described by the standard \emph{Boyer-Lindquist coordinates} $(t,r,\theta,\varphi)$ ---see equation \eqref{eq:KerrBL}. There exists a spin dyad $\{ \tilde{o}^A,\tilde{\iota}^A \}$ aligned with the principal directions of the Weyl spinor such that 
\begin{eqnarray*}
&& \Psi_{ABCD} = -\frac{m}{6\tilde\varkappa^3} \tilde{o}_{(A}\tilde{o}_{B}\tilde{\iota}_C\tilde{\iota}_{D)}, \\
&& \tilde\kappa_{AB}=\frac{2}{3}\tilde\varkappa \tilde{o}_{(A}\tilde{\iota}_{B)},
\end{eqnarray*}
where, for convenience, we have set
\[
\tilde\varkappa \equiv r-\mathrm{i}a \cos\theta,
\]
and $m$, $a$ denote, respectively, the mass and angular parameters of the Kerr family ---see. e.g. \cite{AndBaeBlu16}. 

\medskip
A simple conformal rescaling giving rise to a portion of $\mathscr{I}^+$ is obtained by introducing the conformal factor 
\[
\Omega \equiv \frac{1}{r}. 
\]
Defining an \emph{unphysical spin dyad} $\{o^A,\iota^A \}$ via the conditions
\[
o_A= \tilde{o}_A, \quad o^A=\Omega^{-1}\tilde{o}^A, \quad \iota_A =\Omega \tilde{\iota}_A, \quad \iota^A= \tilde{\iota}^A,
\]
one readily finds that 
\begin{eqnarray*}
&& \phi_{ABCD}= - \frac{m}{6 \varkappa^3}\iota_{(A}\iota_B o_C o_{D)},\\
&& \kappa_{AB}= \frac{2}{3} \varkappa o_{(A}\iota_{B)},
\end{eqnarray*}
where
\[
\varkappa \equiv 1-\mathrm{i}a \Omega \cos\theta.
\]
Observe that $\varkappa \simeq 1$. 

\begin{remark}
{\em It is important to stress that the dyad $\{o^A,\,\iota^A \}$ is different to the one used in the construction of the \emph{Stewart gauge} in Section \ref{Section:StewartGauge}.}  
\end{remark}

It is now observed that 
\[
o_{(A}o_B\iota_C\iota_{D)}o^C\iota^D= -\frac{1}{3}o_{(A}\iota_{B)}, \qquad o_{(A}\iota_{B)} o^A\iota^B =-\frac{1}{2}, \qquad o_{(A}o_B\iota_C\iota_{D)}o^A\iota^B o^C\iota^D=\frac{1}{6}. 
\]
It follows from the above identities that
\begin{eqnarray*}
&& \mathcal{H}^2 = -\frac{m^2}{324 \varkappa^4 }, \\
&& \chi = -\frac{m}{18 \varkappa}. 
\end{eqnarray*}
Given that 
\begin{equation}
    \mathcal{H}^2 = -\mathfrak{l}\chi^4,
    \label{UnphysicalErnstPotentialCondition}
\end{equation}
it follows then that
\[
\mathfrak{l}= \frac{648}{m^2}>0.
\]

\begin{remark}
    {\em The main take away from the previous computation is that $\mathfrak{l}$ is a real positive number.}
\end{remark}

\begin{remark}
    {\em A quick computation readily shows that condition \eqref{UnphysicalErnstPotentialCondition} is conformally invariant ---that is, it is independent of the particular choice of conformal factor used to construct the conformal extension in a neighbourhood of the future null infinity of the Kerr spacetime.}
\end{remark}

The previous discussion can be summarised in the following:

\begin{proposition}\label{prop:Kerr}
    Let $(\tilde{\mathcal{M}},\tilde{\bmg})$ satisfy  Assumption \ref{Assumption:AsymptoticFlatness} and the assumptions of Theorem \ref{Theorem:SpinorialCharacterisationKerr} with $\kappa_{AB}=O(1)$ near $\mathscr{I}^+$. Then one has that 
\begin{equation}
\label{eq:EqsForl}
H= 6/\chi, \qquad \mathcal{H}^2 = -\mathfrak{l} \chi^4.
\end{equation}
Moreover, $(\tilde{\mathcal{M}},\tilde{\bmg})$ is locally isometric to a portion of the asymptotic region of a member of the Kerr family of spacetimes with non-vanishing mass if and only if $\mathfrak{l}$ is a real positive number. 
\end{proposition}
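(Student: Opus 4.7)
The plan is to dispatch the two claims in Proposition~\ref{prop:Kerr} separately. For the equations \eqref{eq:EqsForl}, I would simply chain together results already assembled in the excerpt. Theorem~\ref{Theorem:SpinorialCharacterisationKerr}, applied in the physical spacetime, gives complex constants $\mathfrak{c}$, $\mathfrak{l}$ with $\tilde H=6/(\mathfrak{c}-\tilde\chi)$ and $\tilde{\mathcal{H}}^2=-\mathfrak{l}(\mathfrak{c}-\tilde\chi)^4$. Under Assumption~\ref{Assumption:AsymptoticFlatness} with $\kappa_{AB}=O(1)$, Lemma~\ref{Lemma:AsymptoticSimplicity} forces $\mathfrak{c}=0$. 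The conformal transformation rules $H=\Xi\tilde H$, $\chi=\Xi^{-1}\tilde\chi$, $\mathcal{H}^2=\Xi^{-4}\tilde{\mathcal{H}}^2$ introduced in Section~3.7 then yield \eqref{eq:EqsForl} by direct substitution (absorbing the sign from $(-\tilde\chi)^4=\tilde\chi^4$).

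For the forward direction of the iff (Kerr $\Rightarrow$ $\mathfrak{l}\in\mathbb{R}_{>0}$), my plan is to cite the explicit computation carried out in Section~5.1 above the proposition: in the conformal representation of $(\tilde{\mathcal{M}}_{\mathrm{Kerr}},\tilde\bmg_{\mathrm{Kerr}})$ with $\Omega=1/r$ one finds
\[
\mathcal{H}^2=-\frac{m^2}{324\varkappa^4},\qquad \chi=-\frac{m}{18\varkappa},
\]
and comparison with $\mathcal{H}^2=-\mathfrak{l}\chi^4$ gives $\mathfrak{l}=648/m^2>0$ whenever $m\neq 0$. I would also note (as remarked in the text) that the condition is conformally invariant, so it suffices to verify it in any convenient conformal representation.

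For the reverse direction ($\mathfrak{l}\in\mathbb{R}_{>0}$ $\Rightarrow$ local isometry to Kerr), the plan is to translate the unphysical conditions \eqref{eq:EqsForl} back to the physical spacetime using the same conformal rules in reverse, recovering $\tilde H=-6/\tilde\chi$ and $\tilde{\mathcal{H}}^2=-\mathfrak{l}\tilde\chi^4$ with the same $\mathfrak{l}>0$. Now the hypotheses of the physical Mars--Cole--Valiente Kroon characterisation of Kerr (vacuum, Killing spinor with non-vanishing square, Hermitian associated Killing vector, and the pair of Mars-type relations with $\mathfrak{l}$ real positive) are all in place: vacuum and the Killing spinor properties come from Theorem~\ref{Theorem:SpinorialCharacterisationKerr}, the existence of an asymptotically flat end is supplied by Assumption~\ref{Assumption:AsymptoticFlatness}, and the value of $\mathfrak{l}$ selects the Kerr branch rather than other members of the Kerr--NUT family (see the remark following Theorem~\ref{Theorem:SpinorialCharacterisationKerr}). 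The conclusion is that a neighbourhood of the asymptotic region is locally isometric to Kerr with mass $m=\sqrt{648/\mathfrak{l}}$.

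The genuinely delicate step is the reverse direction: one must be careful that the Mars characterisation is available in the \emph{asymptotic} region covered by the narrow rectangle alone, rather than requiring information on a full stationary exterior. Here the key point is that the Mars--Simon tensor vanishes identically in the physical development wherever the Killing spinor and the two scalar identities $\tilde H=-6/\tilde\chi$, $\tilde{\mathcal{H}}^2=-\mathfrak{l}\tilde\chi^4$ hold, so the local isometry is produced in any open set where these objects are well-defined; asymptotic simplicity and the normalisation $\kappa_1\simeq-\tfrac13$, $\kappa_2\simeq 0$ chosen in Section~3.3 ensure that $\tilde\chi\neq 0$ in a sufficiently small neighbourhood of $\scri$, which is all that is needed to make the Mars identification go through.
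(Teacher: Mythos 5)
Your proposal is correct and follows essentially the same route as the paper: Lemma \ref{Lemma:AsymptoticSimplicity} together with the conformal transformation rules gives \eqref{eq:EqsForl}, and the explicit computation in the conformal representation of Kerr in Section \ref{sec:confrepKerr} (yielding $\mathfrak{l}=648/m^2>0$), combined with conformal invariance and the Mars--Cole--Valiente Kroon characterisation of Theorem \ref{Theorem:SpinorialCharacterisationKerr}, settles the equivalence. If anything, you supply more detail on the reverse implication than the paper, whose proof is simply the summary of the preceding discussion.
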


\subsection{Characterising the Kerr spacetime in Stewart's gauge}\label{sec:freedataKerr}

In this section, we calculate $H$, $\mathcal{H}$ and $\chi$ and derive conditions on the free data of Lemma \ref{lem:freedata} so that equation \eqref{eq:EqsForl} of Proposition \ref{prop:Kerr} holds and that ensure $\mathfrak{l}$ is a real positive number. To this end, we prove the following lemma:

\begin{lemma}\label{lem:Kerrstewart}
    Let $(\tilde{\mathcal{M}},\tilde{\bmg})$ satisfy  Assumption \ref{Assumption:AsymptoticFlatness} and the assumptions of Theorem \ref{Theorem:SpinorialCharacterisationKerr} with $\kappa_{AB}=O(1)$ near $\mathscr{I}^+$. Then $(\tilde{\mathcal{M}},\tilde{\bmg})$ is locally isometric to a portion of the asymptotic region of a member of the Kerr family of spacetimes with non-vanishing mass if and only if 
    \begin{equation}
    \label{eq:Kerrcond}
    \mathrm{Im}(\edt\edt \bar{\sigma})=\mathrm{Im}(\Phi_{02} \bar{\sigma}) \qquad \text{on } \mathcal{S}_\star. 
    \end{equation}
\end{lemma}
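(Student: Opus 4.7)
The plan is to invoke Proposition \ref{prop:Kerr}: under the hypotheses of the lemma, the constant $\mathfrak{l}=-\mathcal{H}^2/\chi^4$ is a well-defined real number, and $(\tilde{\mathcal{M}},\tilde\bmg)$ is locally isometric to Kerr if and only if $\mathfrak{l}>0$. The strategy is therefore to (i) evaluate $\mathfrak{l}$ on $\mathcal{S}_\star$ in terms of Stewart-gauge data, and (ii) recast the resulting reality condition in the claimed form using the GHP edth operator.

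For task (i), I would expand $\chi=\tfrac{9}{2}\phi_{ABCD}\kappa^{AB}\kappa^{CD}$ and $\mathcal{H}^2=8\eta_{AB}\eta^{AB}$ with $\eta_{AB}=-\tfrac{3}{4}\phi_{ABCD}\kappa^{CD}$ in the spin dyad on $\mathcal{S}_\star$. Assumption \ref{Assumption:KSonScri} fixes $\kappa_1=-\tfrac{1}{3}$, $\kappa_2=0$, while the consequences of Proposition \ref{prop:CCKSID} on $\mathcal{S}_\star$ supply $\phi_3=0$, $\phi_4=0$, and the Buchdahl-induced relations $\phi_1=-\tfrac{9}{2}\kappa_0\phi_2$, $\phi_0=\tfrac{27}{2}\kappa_0^2\phi_2$. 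A direct computation using these inputs produces the clean identities $\chi=2\phi_2$ and $\mathcal{H}^2=-4\phi_2^2$ on $\mathcal{S}_\star$, independent of the otherwise-free function $\kappa_0$. Consequently $\mathfrak{l}=1/(4\phi_2^2)$, which is real and positive if and only if $\phi_2\in\mathbb{R}$, i.e.\ $\phi_2-\bar\phi_2=0$ on $\mathcal{S}_\star$.

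For task (ii), $\phi_2-\bar\phi_2$ is not itself among the free data of Lemma \ref{lem:freedata}; from the analysis in the proof of that lemma, it is determined by the intrinsic conformal field equation
\[
\phi_2-\bar\phi_2 = \delta\Phi_{10}-\bar\delta\Phi_{01}+2\Phi_{01}\alpha-2\Phi_{10}\bar\alpha+\Phi_{20}\sigma-\Phi_{02}\bar\sigma.
\]
The intrinsic Ricci identity $\bar\delta\sigma=\Phi_{01}+3\alpha\sigma-\bar\beta\sigma$ combined with $\tau=0$ (hence $\bar\beta=-\alpha$) gives $\Phi_{01}=\bar\delta\sigma-4\alpha\sigma$ and, by conjugation, $\Phi_{10}=\delta\bar\sigma-4\bar\alpha\bar\sigma$. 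Substituting these, observing that $\Phi_{20}\sigma-\Phi_{02}\bar\sigma=-2i\,\mathrm{Im}(\Phi_{02}\bar\sigma)$, and taking imaginary parts recasts $\phi_2-\bar\phi_2=0$ as an identity of the schematic form $\mathrm{Im}(\delta\delta\bar\sigma)+(\alpha\text{-corrections})=\mathrm{Im}(\Phi_{02}\bar\sigma)$. Expanding $\edt\edt\bar\sigma$ via the paper's formula $\edt\eta=\delta\eta+(q\bar\alpha-p\beta)\eta$ with the appropriate weight for $\bar\sigma$, and applying $\beta=-\bar\alpha$ on $\mathcal{S}_\star$, generates precisely $\delta\delta\bar\sigma$ plus the requisite $\bar\alpha$-corrections; matching imaginary parts then identifies the left-hand side as $\mathrm{Im}(\edt\edt\bar\sigma)$, producing the claimed condition.

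The principal obstacle is this final identification. One must verify that the $\alpha$-dependent combinations generated by the double GHP derivative $\edt\edt\bar\sigma$ match, term by term after taking imaginary parts, those produced by $\delta\Phi_{10}-\bar\delta\Phi_{01}+2(\Phi_{01}\alpha-\Phi_{10}\bar\alpha)$ once $\Phi_{01},\Phi_{10}$ are substituted. This is a mechanical but delicate GHP computation, depending on the correct GHP weights of $\sigma$ and $\bar\sigma$, on all intrinsic gauge conditions $\tau=\rho=\mu=\lambda=0$ on $\mathcal{S}_\star$, and possibly on the further Ricci identity $\bar\delta\beta-\delta\alpha=\Phi_{11}-\alpha\bar\alpha+2\alpha\beta-\beta\bar\beta$ in order to absorb the cross-terms arising from the second derivative of $\bar\sigma$.
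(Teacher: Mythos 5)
Your proposal follows essentially the same route as the paper's proof: evaluate $\chi=2\phi_2$ and $\mathcal{H}^2=-4\phi_2^2$ on $\mathcal{S}_\star$ using Assumption \ref{Assumption:KSonScri} and the Buchdahl relations, reduce positivity of $\mathfrak{l}$ to the reality of $\phi_2$, and then convert $\mathrm{Im}(\phi_2)=0$ into the stated condition via the intrinsic component \eqref{eq:CFE3k} of the third conformal field equation together with the Ricci identity \eqref{eq:Riccieqq} giving $\Phi_{01}$, $\Phi_{10}$ in terms of $\bar\eth\sigma$, $\eth\bar\sigma$. The final GHP bookkeeping you flag as the "principal obstacle" is exactly the step the paper carries out by writing $\delta\Phi_{10}=2\Phi_{10}\bar\alpha+\eth\Phi_{10}$ and its conjugate, so no new idea is missing.
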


\begin{remark}
{\em The GHP weight of the right-hand side of equation \eqref{eq:Kerrcond} is a weighted scalar of type $\{1,1\}$, see Chapter 3.8.2 of \cite{ODo03}. Thus it has spin weight $s=0$. Then the equation 
\begin{equation*}
\edt\edt \bar{\sigma}=\Phi_{02} \bar{\sigma}
\end{equation*}
has a unique solution for $\edt \bar{\sigma}$ on $\mathcal{S}_\star$ if and only if the $j=0$ part of $\Phi_{02} \bar{\sigma}$ vanishes, see the discussion below the table (4.15.60) in \cite{PenRin84}.}

\end{remark}

\begin{proof}
 We calculate $\mathcal{H}^2$ and $\chi$ on the initial cut $\mathcal{S}_\star$. In Stewart's gauge, we find that 
\begin{equation}\label{eq:SpecifyKerrChi}
    \mathcal{H}^2 = -4\phi_2^2, \qquad \chi = 2\phi_2,
\end{equation}
where we have used Assumption \ref{Assumption:KSonScri} and its implications. Thus, $\mathfrak{l}$ will be real and positive if $\phi_2^2$ is real and positive. Now, we investigate the conditions we may place on the free data of Lemma \ref{lem:freedata}. The condition that $\phi_2^2$ be real is equivalent to the condition 
\begin{equation*}
(\phi_2+\bar{\phi_2})(\phi_2-\bar{\phi_2})=0
\end{equation*}
We cannot simply set $(\phi_2+\bar{\phi_2})=0$ as this would mean that $\phi_2^2$ would be negative. Thus, we consider the condition $(\phi_2-\bar{\phi_2})=0$. We can write this in terms of the free data by considering the equation \eqref{eq:CFE3k}. Evaluating equation \eqref{eq:CFE3k} on $\mathcal{S}_\star$ yields
\begin{equation*}
\phi_{2} -  \bar{\phi}_{2}= 2 \Phi_{01} \alpha
 - 2 \Phi_{10} \bar{\alpha}
 + \Phi_{20} \sigma
 -  \Phi_{02} \bar{\sigma}
 + \delta \Phi_{10}
 -  \bar\delta \Phi_{01}
\end{equation*}
Then, by definition we have that
\begin{align*}
\delta \Phi_{10}={}&2 \Phi_{10'} \bar{\alpha}
 + \edt \Phi_{10},\\
\bar\delta \Phi_{01}={}&2 \Phi_{01} \alpha
 + \bar{\edt} \Phi_{01},
\end{align*}
together with the Ricci equation \eqref{eq:Riccieqq}
\begin{align*}
\begin{aligned}
\bar{\edt} \sigma ={}&\Phi_{01},\\
\edt \bar{\sigma}={}&\Phi_{10}.
\end{aligned}
\end{align*}
implies that 
\[
\phi_{2} -  \bar{\phi}_{2}= 
  \Phi_{20} \sigma
 -  \Phi_{02} \bar{\sigma}
 + \edt\edt \bar{\sigma}
 -  \bar\edt\bar{\edt} \sigma.  
\]
Thus, 
\begin{align*}
\begin{aligned}
2\mathrm{Im}(\phi_2)=&\phi_{2} -  \bar{\phi}_{2}\\
=& \Phi_{20} \sigma
 -  \Phi_{02} \bar{\sigma}
 + \edt\edt \bar{\sigma}
 -  \bar\edt\bar{\edt} \sigma\\
 =&2\mathrm{Im}\left(\edt\edt \bar{\sigma} -  \Phi_{02} \bar{\sigma}\right)
\end{aligned}
\end{align*}
where we note that the quantities in the final line constitute part of the free data on $\mathcal{S}_\star$ from Lemma \ref{lem:freedata}. Therefore, ensuring that $\sigma$ and $\Phi_{02}$ satisfy 
\[
\mathrm{Im}(\edt\edt \bar{\sigma})=\mathrm{Im}(\Phi_{02} \bar{\sigma})
\]
on $\mathcal{S}_\star$ ensures that $\phi_2^2$ is real and positive. 

\medskip
For $H$, we investigate the alignment condition \eqref{eq:unphysicalalignmentcondition}. Using Assumption \ref{Assumption:KSonScri}, the components of this equation are 
\begin{align*}
\phi_{0}={}&\frac{3}{2} H \kappa_{0} \phi_{1} \phi_{2}
 + \frac{1}{2} H (\phi_{1})^2
 + \frac{9}{8} H (\kappa_{0})^2 (\phi_{2})^2,\\
\phi_{1}={}&\frac{1}{2} H \phi_{1} \phi_{2}
 + \tfrac{3}{4} H \kappa_{0} (\phi_{2})^2,\\
\phi_{2}={}&\frac{1}{3} H (\phi_{2})^2.
\end{align*}
The final equation implies that 
\begin{equation*}
    H=\frac{3}{\phi_2}
\end{equation*}
which from \eqref{eq:SpecifyKerrChi} is precisely the condition that 
\begin{equation*}
    H=\frac{6}{\chi}.
\end{equation*}
\end{proof}
\ni Combining Lemma \ref{lem:Kerrstewart} with Proposition \ref{prop:CCKSID} completes the proof of Theorem \ref{thm:CharacterisationOfKerr}.

\section{Precise formulation of the main theorem and concluding remarks}\label{sec:mainthm}

\begin{figure}[t]
\begin{center}
\def\svgwidth{18pc}
\begingroup%
  \makeatletter%
  \providecommand\color[2][]{%
    \errmessage{(Inkscape) Color is used for the text in Inkscape, but the package 'color.sty' is not loaded}%
    \renewcommand\color[2][]{}%
  }%
  \providecommand\transparent[1]{%
    \errmessage{(Inkscape) Transparency is used (non-zero) for the text in Inkscape, but the package 'transparent.sty' is not loaded}%
    \renewcommand\transparent[1]{}%
  }%
  \providecommand\rotatebox[2]{#2}%
  \newcommand*\fsize{\dimexpr\f@size pt\relax}%
  \newcommand*\lineheight[1]{\fontsize{\fsize}{#1\fsize}\selectfont}%
  \ifx\svgwidth\undefined%
    \setlength{\unitlength}{81.46271334bp}%
    \ifx\svgscale\undefined%
      \relax%
    \else%
      \setlength{\unitlength}{\unitlength * \real{\svgscale}}%
    \fi%
  \else%
    \setlength{\unitlength}{\svgwidth}%
  \fi%
  \global\let\svgwidth\undefined%
  \global\let\svgscale\undefined%
  \makeatother%
  \begin{picture}(1,0.66615617)%
    \lineheight{1}%
    \setlength\tabcolsep{0pt}%
    \put(0,0){\includegraphics[width=\unitlength,page=1]{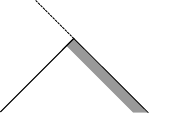}}%
    \put(0.10253134,0.26110836){\color[rgb]{0,0,0}\makebox(0,0)[lt]{\lineheight{1.25}\smash{\begin{tabular}[t]{l}$\mathcal{N}_\star$\end{tabular}}}}%
    \put(0.64508988,0.26023692){\color[rgb]{0,0,0}\makebox(0,0)[lt]{\lineheight{1.25}\smash{\begin{tabular}[t]{l}$\mathscr{I}^+$\end{tabular}}}}%
    \put(0.4290571,0.47272926){\color[rgb]{0,0,0}\makebox(0,0)[lt]{\lineheight{1.25}\smash{\begin{tabular}[t]{l}$\mcS_\star$\end{tabular}}}}%
    \put(0.63363672,0.06857336){\color[rgb]{0,0,0}\makebox(0,0)[lt]{\lineheight{1.25}\smash{\begin{tabular}[t]{l}$\mathcal{V}$\end{tabular}}}}%
  \end{picture}%
\endgroup%

\end{center}
\caption{The existence domain, $\mathcal{V}$ of the past oriented characteristic asymptotic initial value problem.}
\label{fig:narrowrectangle}
\end{figure}

In this section, we use the formalism and notation of the Sections \ref{sec:CIVP} and \ref{Section:KillingSpinorsGeneral} to formulate the precise version of Theorem \ref{thm:mainthmrough}. More precisely, one has that:

\begin{theorem}\label{thm:CharacterisationOfKerr}
Let $\NN_\star$ denote an outgoing null hypersurface in the conformal extension $(\mathcal{M},\bmg,\Xi)$ of a vacuum spacetime $(\tilde{\mathcal{M}},\tilde{\bmg})$ with $\NN_\star\cap\scri\cong\mathcal{S}_\star$. On $\NN_\star\cup\scri$, let initial data for the conformal Einstein field equations in the gauge of Lemmas \ref{lem:SpinCoeffCondition} and \ref{Lemma:ConformalGauge} be posed, as in Figure \ref{fig:narrowrectangle}. The development $\mathcal{V}$ of $\NN_\star\cup\scri$ will possess a Killing spinor if:

 \begin{itemize}
        \item[i. ] On $\mathcal{S}_\star$, $\phi_3=0$ and $\kappa_0$ satisfies the equations
        \begin{align*}
        \begin{aligned}
        \edt \kappa_{0} -\frac{2}{3} \sigma =& 0,\\
        \frac{1}{3}\phi_{0}
        +\kappa_{0} \phi_{1} =& 0, \\
        \frac{2}{3} \phi_{1}
         +3 \kappa_{0} \phi_{2} =& 0,\\
        \end{aligned}
        \end{align*}
        $\kappa_0$ propagated along $\scri$ according to the transport equation
        \begin{equation}
        \Delta \kappa_{0} - \frac{2}{3} \tau- 2 \kappa_{0} \gamma\simeq 0.
        \end{equation}
        Moreover, the dyad components of $B_{ABCD}$ satisfy
        \begin{align}
        \begin{aligned}
         & D B_1 =0, \quad DB_2= 0, \quad DB_3 = 0, \quad DB_4= 0,\\
         & \Delta B_0= 0, \quad \Delta B_1= 0, \quad \Delta B_2 = 0, \quad \Delta B_3= 0.
         \end{aligned}
        \end{align}

        \item[ii. ] On $\scri$ one has that 
        \begin{equation}
        \lambda \simeq 0, \qquad \mu\simeq 0, \qquad \Phi_{02'}\simeq 0, \qquad \phi_4\simeq0;
        \end{equation}
        and the components of $\kappa_{AB}$ satisfy
        \[
        \kappa_1 \simeq -\frac{1}{3}, \qquad \kappa_2\simeq 0.
        \]

         \item[iii.] On $\NN_\star$ one has that
        \begin{align*}
        \begin{aligned}
        & 2 \kappa_{1} (\Phi_{20} - \rho \lambda -  \mu \bar{\sigma}) + \bar{\sigma} \edt \kappa_{2} - 2 \kappa_{2} (\bar\Psi_{1} + \edt \bar{\sigma}) + 3 \lambda \bar{\eth} \kappa_{0}
        + 2 \kappa_{0} \bar{\eth} \lambda - 2 \bar{\eth} \bar{\eth} \kappa_{1}\bumpeq 0,\\
        &B_{ABCD}\bumpeq0.
        \end{aligned}
        \end{align*}
    \end{itemize}
Additionally, $(\tilde{\mathcal{M}},\tilde{\bmg})$ is locally isometric to a portion of a member of the Kerr family of spacetimes with non-vanishing mass if and only if
\begin{equation}\label{eq:KerrCondthm}
\mathrm{Im}\left(\edt\edt\bar{\sigma}\right)=\mathrm{Im}\left(\Phi_{02}\bar{\sigma}\right), \qquad \mbox{on} \quad \mathcal{S}_\star.
\end{equation}
\end{theorem}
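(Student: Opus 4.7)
The plan is straightforward: the theorem is obtained by stitching together Proposition \ref{prop:CCKSID} with Lemma \ref{lem:Kerrstewart}, supplemented by the existence theory for characteristic wave equations of \cite{HilValZha20b,CabChrWaf16}. For the first claim, I would invoke that existence theory applied to the Killing spinor candidate equation \eqref{eq:KSWaveEq} to produce a symmetric spinor $\kappa_{AB}$ on a narrow rectangle $\mathcal{V}$ along $\scri$, with initial values on $\NN_\star\cup\scri$ chosen consistent with Assumption \ref{Assumption:KSonScri} and the prescriptions in (i) and (iii). Conditions (i), (ii) and (iii) of the theorem statement match verbatim the hypotheses of Proposition \ref{prop:CCKSID}, so that proposition immediately upgrades the candidate to an actual Killing spinor on $\mathcal{V}$, finishing the first claim.

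For the characterisation statement I would verify that the Killing spinor just obtained satisfies the assumptions of Theorem \ref{Theorem:SpinorialCharacterisationKerr}: the nondegeneracy $\tilde\kappa_{AB}\tilde\kappa^{AB}\neq 0$ holds in a neighbourhood of $\scri$ because $\kappa_1\simeq -\tfrac{1}{3}\neq 0$ and the invariant is continuous, while the Hermiticity and asymptotic alignment of the associated physical Killing vector with the generators of $\scri$ follows from Assumption \ref{Assumption:KSonScri} together with the discussion in Section \ref{sec:CKV}. Theorem \ref{Theorem:SpinorialCharacterisationKerr} then produces constants $\mathfrak{l}$, $\mathfrak{c}$ satisfying $\tilde{\mathcal{H}}^2=-\mathfrak{l}(\mathfrak{c}-\tilde\chi)^4$, and Lemma \ref{Lemma:AsymptoticSimplicity} forces $\mathfrak{c}=0$ under our asymptotic hypothesis. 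Proposition \ref{prop:Kerr} therefore reduces the local isometry to a portion of Kerr with non-vanishing mass to the single requirement $\mathfrak{l}\in\mathbb{R}_{>0}$, and Lemma \ref{lem:Kerrstewart} identifies this with the reality condition \eqref{eq:KerrCondthm} on the free data at $\mathcal{S}_\star$.

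The principal obstacle, which is already handled in the preceding sections but deserves emphasis, is the internal consistency of the prescribed data: one must check that the algebraic Killing spinor candidate formula \eqref{Formula:KSCandidate} of Section \ref{Section:AlgebraicKillingSpinorCandidate} can be used on $\NN_\star$ to supply values of $\kappa_0$, $\kappa_1$, $\kappa_2$ compatible both with Assumption \ref{assump:BonNstar} and with the transport equations implicit in condition (iii); and that the sign bookkeeping in going from $\phi_2^2\in\mathbb{R}_{>0}$ back to a real $\phi_2$ (as opposed to $\mathrm{Re}(\phi_2)=0$, which would give $\mathfrak{l}<0$) is correctly resolved by the non-vanishing-mass assumption, anchored by the explicit Kerr prototype calculation $\mathfrak{l}=648/m^2>0$ of Section \ref{sec:confrepKerr}. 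Once these points are in hand, the theorem follows by direct citation of Proposition \ref{prop:CCKSID} and Lemma \ref{lem:Kerrstewart}.
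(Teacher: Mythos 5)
Your proposal is correct and follows exactly the route the paper takes: the paper's own proof is the single observation that Theorem \ref{thm:CharacterisationOfKerr} follows by combining Proposition \ref{prop:CCKSID} (existence of the Killing spinor from conditions (i)--(iii)) with Lemma \ref{lem:Kerrstewart} (the reality condition \eqref{eq:KerrCondthm} characterising Kerr), the latter resting in turn on Proposition \ref{prop:Kerr}, Theorem \ref{Theorem:SpinorialCharacterisationKerr} and Lemma \ref{Lemma:AsymptoticSimplicity} just as you describe. Your additional remarks on the existence theory and the consistency of the data via the algebraic Killing spinor candidate are sound and, if anything, more explicit than the paper's one-line proof.
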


\begin{remark}
{\em We make the following remarks on Theorem \ref{thm:CharacterisationOfKerr}:
\begin{itemize}
  \item[a.] Importantly, if any one of the conditions in Theorem \ref{thm:CharacterisationOfKerr} is not satisfied then the portion of $\mathcal{V}$ in $(\tilde{\mathcal{M}},\tilde{\bmg})$ is \emph{not} isomorphic to a portion of the Kerr spacetime.
  \item[b.] Proposition 7 of \cite{ColValRac18}, shows that the Killing vector $\tilde{\xi}_{AA'}$ on $(\tilde{\mathcal{M}},\tilde{\bmg})$ constructed from Theorem \ref{thm:CharacterisationOfKerr} is always Hermitian.
\end{itemize}}
\end{remark}

\subsection*{Concluding remarks}
In this article, we have obtained a characteristaion of the Kerr spacetime which is based on the asymptotic characteristic initial value problem for the conformal Einstein field equations. The characterisation involves conditions 
for a spinor $\kappa_{AB}$ on $\mathcal{N}_\star \cup \scri$. Despite this, the characterisation is constructive. Making use of the algebraic Killing spinor candidate ---see Subsection \ref{Section:AlgebraicKillingSpinorCandidate}--- all the conditions in Theorem \ref{thm:CharacterisationOfKerr} can be verified. Observe that from this perspective, all the conditions on the components of the spinor $\kappa_{AB}$ are, in fact, conditions on the components of the Weyl curvature. 

\medskip
A possible alternative formulation of the characterisation, not considered in this article, is to work in the setting of an asymptotic characteristic problem on a cone. The vertex of this cone would correspond to timelike infinity. While the Kerr spacetime does not have this type of asymptotic structure, there exist, nevertheless, a large class of spacetimes which are Kerr in a neighbourhood of spatial. These spacetimes can be constructed using the Corvino-Schoen gluing method for initial data sets \cite{CorSch03} ---see also \cite{ChrDel02}. This alternative type of asymptotic characterisation of spacetimes with a Kerrian portion of null infinity will be discussed alsewhere. 

\section*{Acknowledgements}
The extensive computations in this article have been carried out using the suite  {\tt xAct} for tensorial manipulations in the Wolfram programming language \cite{xAct}. In particular, we have substantial use of the package {\tt SpinFrames}.

\appendix
\section{The NP  spin coefficient formalism}\label{app:NPgauge}

Consider the spinorial expression 
\begin{equation}
    \nabla_{AA'}\ze^B \label{eq:derivze}
\end{equation}
and the dyad basis
\begin{equation*}
    \ep_{\bf A}{}^A = \{o^A,\iota^A\}
\end{equation*}
normalised to $o_A\io^A=1$, where ${}_{\bf A}$ represents the dyad indices $0,1$. We can write the dyad components of the spinor $\ze$ as $\ze^{\bf A} = \ze^A \ep_{A}{}^{\bmA}$. We want to project \eqref{eq:derivze} in this dyad basis. To this end, let 
\begin{equation*}
     \ep_{\bf A}{}^A \ep_{\bf A'}{}^{A'} \ep_{B}{}^\bmB \nabla_{AA'}\ze^B = \nabla_{\bmA\bmA'}\ze^\bmB+\ze^\bmC\Gamma_{\bmA\bmA'\bmC}{}^\bmB
\end{equation*}
where we define the \emph{Ricci rotation coefficients}, $\Gamma_{\bmA\bmA'\bmC}{}^\bmB\equiv \ep_{B}{}^\bmB\nabla_{\bmA\bmA'}\ep_\bmC{}^\bmB$. Lowering the $\bmB$ index, we find that the Ricci rotation coefficents are symmetric in the final two indices
\begin{equation*}
    \Gamma_{\bmA\bmA'\bmB\bmC} = \Gamma_{\bmA\bmA'\bmC\bmB}.
\end{equation*}

As it is usual, we assign lower case Greek letters to each component of the Ricci rotation coefficients following the NP convention:

\begin{align*}
    \begin{aligned}
        \al \equiv \Gamma_{10'10},&\qquad \be\equiv \Gamma_{01'10,}\\
        \ga \equiv  \Gamma_{11'10},&\qquad\ep\equiv \Gamma_{00'10}, \\
        \pi \equiv  \Gamma_{00'11},&\qquad \tau\equiv \Gamma_{11'00},\\
        \ka \equiv \Gamma_{00'00},&\qquad \nu \equiv \Gamma_{11'11},\\
        \la \equiv  \Gamma_{10'11},&\qquad \si \equiv  \Gamma_{01'00},\\
        \rho \equiv \Gamma_{10'00},&\qquad\mu \equiv \Gamma_{01'11} .
    \end{aligned}
\end{align*}
Moreover, we denote the components of the derivative of $\Xi$ by
\begin{equation}
\Sigma_1\equiv D\Xi, \quad \Sigma_2\equiv \Delta\Xi, \quad \Sigma_3\equiv\delta\Xi,\quad \Sigma_4\equiv\bar{\delta}\Xi.
\end{equation}
Finally, the components of the Weyl spinor $\Psi_{ABCD}$ are 
\[
\{\Psi_0,\Psi_1,\Psi_2,\Psi_3,\Psi_4\}\] and those of the trace-free Ricci spinor $\Phi_{AA'BB'}$ are 
\[\{\Phi_{00},\Phi_{01},\Phi_{02},\Phi_{11},\Phi_{12},\Phi_{22},\Lambda\}.
\] 

\subsection{Ricci equations}

Under the conditions from Lemma \ref{lem:SpinCoeffCondition} the Ricci equations take the form
\begin{subequations}
\begin{align}
D \gamma =& \Lambda- \Phi_{11'}- \Psi_{2}+ \beta \pi+  \alpha \bar{\pi}+  \alpha \tau + \pi \tau+  \beta \bar{\tau} ,\\
 D \tau=& -\Phi_{01'}- \Psi_{1}+ \bar{\pi} \rho+  \pi \sigma+  \rho \tau+ \sigma \bar{\tau},\\
\Delta \pi =& \Phi_{21'}
 + \Psi_{3}
 -  \gamma \pi
 + \bar{\gamma} \pi
 -  \mu \pi
 -  \lambda \bar{\pi}
 -  \lambda \tau
 -  \mu \bar{\tau},\\
\delta \gamma -\Delta \beta=&
 - \Phi_{12'}
 -  \bar{\alpha} \gamma
 - 2 \beta \gamma
 + \beta \bar{\gamma}
 + \alpha \bar{\lambda}
 + \beta \mu
 + \gamma \tau
 + \mu \tau ,\\
 D \beta=&- \Psi_{1}
 +  \beta \rho
 +  \alpha \sigma
 +  \pi \sigma ,\\
D \sigma  =& -\Psi_{0}
 +2 \rho \sigma,\\
 \Delta \mu=&
 \Phi_{22'}
 - \lambda \bar{\lambda}
 - \gamma \mu
 - \bar{\gamma} \mu
 - \mu^2 ,\\
\delta \pi -D \mu= &2 \Lambda
 + \Psi_{2}
 + \bar{\alpha} \pi
 -  \beta \pi
 -  \pi \bar{\pi}
 -  \mu \rho
 -  \lambda \sigma ,\\
\delta \tau-\Delta \sigma =&
 - \Phi_{02'}
 + \bar{\lambda} \rho
 - 3 \gamma \sigma
 + \bar{\gamma} \sigma
 + \mu \sigma
 -  \bar{\alpha} \tau
 + \beta \tau
 + \tau^2 ,\\
\bar\delta \beta - \delta \alpha= &\Lambda
 + \Phi_{11'}
 -  \Psi_{2}
 -  \alpha \bar{\alpha}
 + 2 \alpha \beta
 -  \beta \bar{\beta}
 -  \mu \rho
 + \lambda \sigma ,\label{eq:Riccij}\\
 \bar\delta \gamma- \Delta \alpha  =&
 - \Psi_{3}
 -  \bar{\beta} \gamma
 -  \alpha \bar{\gamma}
 + \beta \lambda
 + \alpha \mu
 + \lambda \tau
 + \gamma \bar{\tau},\\
D \alpha =& -\Phi_{10'}
 +  \alpha \rho
 +  \pi \rho
 +  \beta \bar{\sigma} ,\\
 D \rho  =&- \Phi_{00'}
 + \rho^2
 + \sigma \bar{\sigma},\\
\bar\delta \mu- \delta \lambda =& \Phi_{21'}
 -  \Psi_{3}
 -  \bar{\alpha} \lambda
 + 3 \beta \lambda
 -  \alpha \mu
 -  \bar{\beta} \mu ,\\
 \Delta \lambda  =&
 \Psi_{4}
 - 3 \gamma \lambda
 + \bar{\gamma} \lambda
 - 2 \lambda \mu,\\
\bar\delta \pi-  D \lambda =& \Phi_{20'}
 + 4 \epsilon \lambda
 -  \alpha \pi
 + \bar{\beta} \pi
 -  \pi^2
 -  \lambda \rho
 -  \mu \bar{\sigma}
 ,\\
\bar\delta \sigma-\delta \rho =& \Phi_{01'}
 -  \Psi_{1}
 -  \bar{\alpha} \rho
 -  \beta \rho
 + 3 \alpha \sigma
 -  \bar{\beta} \sigma \label{eq:Riccieqq},\\
\bar\delta \tau- \Delta \rho =& -2 \Lambda
 -  \Psi_{2}
 -  \gamma \rho
 -  \bar{\gamma} \rho
 + \mu \rho
 + \lambda \sigma
 + \alpha \tau
 -  \bar{\beta} \tau
 + \tau \bar{\tau}.
\end{align}\label{eq:ricciidentities}
\end{subequations}

\subsection{The conformal field equations}

In the formalism of Newman-Penrose and the gauge conditions of Lemma \ref{lem:SpinCoeffCondition} the conformal field equations outlined in Section \ref{sec:CFE} have the following form.\\

\subsubsection*{The first conformal field equation (the equation for the conformal factor)}

\begin{subequations}
\begin{align}
\Delta \Sigma_1- \Sigma_1( \gamma +\bar{\gamma} )+ \Sigma_4 \tau + \Sigma_3 \bar{\tau} + \Delta \Sigma_1={}&s
 -  \Xi \Lambda+ \Xi \Phi_{11},\\
 \Delta \Sigma_2+ \Sigma_2 (\gamma +\bar{\gamma})={}&\Xi \Phi_{22},\\
+ \Delta \Sigma_3- \Sigma_3( \gamma - \bar{\gamma}) + \Sigma_2 \tau ={}&\Xi \Phi_{12},\\
D \Sigma_1={}&\Xi \Phi_{00},\\
 D \Sigma_2- \Sigma_3 \pi -  \Sigma_4 \bar{\pi} ={}&s
 -  \Xi \Lambda
 + \Xi \Phi_{11},\\
D \Sigma_3 -  \Sigma_1 \bar{\pi}={}&\Xi \Phi_{01},\\
   \delta \Sigma_1- \Sigma_1 (\bar{\alpha} +  \beta )+ \Sigma_3 \rho + \Sigma_4 \sigma={}&\Xi \Phi_{01},\\
 \delta \Sigma_2+ \Sigma_2( \bar{\alpha} + \beta )-  \Sigma_4 \bar{\lambda} -  \Sigma_3 \mu ={}&\Xi \Phi_{12},\\
 \delta \Sigma_3+\Sigma_3 (\bar{\alpha} -  \beta) -  \Sigma_1 \bar{\lambda} + \Sigma_2 \sigma ={}&\Xi \Phi_{02},\\
\delta \Sigma_4- \Sigma_4 (\bar{\alpha} - \beta) -  \Sigma_1 \mu + \Sigma_2 \rho ={}&- s
 + \Xi \Lambda
 + \Xi \Phi_{11}.
\end{align}
\end{subequations}

\subsubsection*{The second conformal field equation (the equation for the Friedrich scalar)}

\begin{subequations}
\begin{align}
\Delta s={}&\Phi_{22} \Sigma_1
 -  \Phi_{21} \Sigma_3
 -  \Phi_{12} \Sigma_4
 -  \Lambda \Sigma_2
 + \Phi_{11} \Sigma_2,\\
D s={}&- \Lambda \Sigma_1
 + \Phi_{11} \Sigma_1
 -  \Phi_{10} \Sigma_3
 -  \Phi_{01} \Sigma_4
 + \Phi_{00} \Sigma_2,\\
\delta s={}&\Phi_{12} \Sigma_1
 -  \Lambda \Sigma_3
 -  \Phi_{11} \Sigma_3
 -  \Phi_{02} \Sigma_4
 + \Phi_{01} \Sigma_2.
\end{align}
\end{subequations}

\subsubsection*{The third conformal field equation (the Cotton equation)}

\begin{subequations}
\begin{align}
\Delta \Phi_{00} + 2 D \Lambda
 - \delta \Phi_{10}={}&- \bar{\phi}_{2} \Sigma_1
 + \bar{\phi}_{1} \Sigma_3
 - 2 \Phi_{01} \alpha
 - 4 \Phi_{10} \bar{\alpha}
 - 2 \Phi_{10} \beta
 - 2 \Phi_{01} \bar{\beta}
 + 2 \Phi_{00} \gamma\nonumber\\
& + 2 \Phi_{00} \bar{\gamma}
 -  \Phi_{00} \bar{\mu}
 + 2 \Phi_{11} \bar{\rho}
 + \Phi_{20} \sigma,\\
 \Delta \Phi_{01}+ \delta \Lambda
 - \delta \Phi_{11}={}&- \bar{\phi}_{3} \Sigma_1
 + \bar{\phi}_{2} \Sigma_3
 -  \Phi_{02} \alpha
 - 2 \Phi_{11} \bar{\alpha}
 - 2 \Phi_{11} \beta
 -  \Phi_{02} \bar{\beta}
 + 2 \Phi_{01} \gamma\nonumber\\
& -  \Phi_{10} \bar{\lambda}
 -  \Phi_{01} \mu
 + \Phi_{12} \rho
 + \Phi_{21} \sigma,\\
 \Delta \Phi_{02}- \delta \Phi_{12}={}&- \bar{\phi}_{4'} \Sigma_1
 + \bar{\phi}_{3} \Sigma_3
 - 2 \Phi_{12} \beta
 + 2 \Phi_{02} \gamma
 - 2 \Phi_{02} \bar{\gamma}
 - 2 \Phi_{11} \bar{\lambda}
 -  \Phi_{02} \mu\nonumber\\
& + \Phi_{22} \sigma,\\
\Delta \Phi_{11}+ \Delta \Lambda
 - \delta \Phi_{21}={}&- \bar{\phi}_{3} \Sigma_4
 + \bar{\phi}_{2} \Sigma_2
 -  \Phi_{12} \alpha
 -  \Phi_{21} \bar{\alpha}
 + \Phi_{21} \beta
 -  \Phi_{12} \bar{\beta}
 -  \Phi_{20} \bar{\lambda}\nonumber\\
& - 2 \Phi_{11} \mu
 + \Phi_{22} \rho\\
 \Delta \Phi_{12}- \delta \Phi_{22}={}&- \bar{\phi}_{4'} \Sigma_4
 + \bar{\phi}_{3} \Sigma_2
 + \Phi_{22} \bar{\alpha}
 + \Phi_{22} \beta
 - 2 \Phi_{12} \bar{\gamma}
 - 2 \Phi_{21} \bar{\lambda}
 - 2 \Phi_{12} \mu,\\
 D \Phi_{10} - \bar\delta \Phi_{00}={}&\bar{\phi}_{1} \Sigma_1
 -  \bar{\phi}_{0} \Sigma_3
 - 2 \Phi_{00} \alpha
 - 2 \Phi_{00} \bar{\beta}
 + \Phi_{00} \pi
 + 2 \Phi_{10} \rho\nonumber\\
& + 2 \Phi_{01} \bar{\sigma},\\
D \Phi_{02}-\delta \Phi_{01}={}&\phi_{1} \Sigma_3
 -  \phi_{0} \Sigma_2
 - 2 \Phi_{01} \beta
 -  \Phi_{00} \bar{\lambda}
 + 2 \Phi_{01} \bar{\pi}
 + \Phi_{02} \rho\nonumber\\
& + 2 \Phi_{11} \sigma,\\
D \Phi_{11}+ D \Lambda
 - \bar\delta \Phi_{01}={}&\bar{\phi}_{2} \Sigma_1
 -  \bar{\phi}_{1} \Sigma_3
 - 2 \Phi_{01} \alpha
 -  \Phi_{00} \mu
 + \Phi_{01} \pi
 + \Phi_{10} \bar{\pi}
 + 2 \Phi_{11} \rho
 + \Phi_{02} \bar{\sigma},\\
 D \Phi_{12} +\delta \Lambda
 - \delta \Phi_{11}={}&\phi_{2} \Sigma_3
 -  \phi_{1} \Sigma_2
 -  \Phi_{10} \bar{\lambda}
 -  \Phi_{01} \mu
 + \Phi_{02} \pi
 + 2 \Phi_{11} \bar{\pi}
 + \Phi_{12} \rho\nonumber\\
& + \Phi_{21} \sigma,\\
D \Phi_{22}+ 2 \Delta \Lambda
 - \delta \Phi_{21}={}&\phi_{3} \Sigma_3
 -  \phi_{2} \Sigma_2
 + 2 \Phi_{21} \beta
 -  \Phi_{20} \bar{\lambda}
 - 2 \Phi_{11} \mu
 + 2 \Phi_{12} \pi
 + 2 \Phi_{21} \bar{\pi}\nonumber\\
& + \Phi_{22} \rho,\\
\delta \Phi_{10}- \bar\delta \Phi_{01}={}&- \phi_{2} \Sigma_1
 + \bar{\phi}_{2} \Sigma_1
 -  \bar{\phi}_{1} \Sigma_3
 + \phi_{1} \Sigma_4
 - 2 \Phi_{01} \alpha
 + 2 \Phi_{10} \bar{\alpha}
 -  \Phi_{20} \sigma\nonumber\\
& + \Phi_{02} \bar{\sigma},\label{eq:CFE3k}\\
\delta \Phi_{11}+  \delta \Lambda
 - \bar\delta \Phi_{02}={}&\bar{\phi}_{3} \Sigma_1
 -  \phi_{2} \Sigma_3
 -  \bar{\phi}_{2} \Sigma_3
 + \phi_{1} \Sigma_2
 - 2 \Phi_{02} \alpha
 + 2 \Phi_{02} \bar{\beta}
 + \Phi_{10} \bar{\lambda}\nonumber\\
& -  \Phi_{01} \mu
 + \Phi_{12} \rho
 -  \Phi_{21} \sigma,\label{eq:CFE3l}\\
\delta \Phi_{21}- \bar\delta \Phi_{12}={}&- \phi_{3} \Sigma_3
 + \bar{\phi}_{3} \Sigma_4
 + \phi_{2} \Sigma_2
 -  \bar{\phi}_{2} \Sigma_2
 - 2 \Phi_{21} \beta
 + 2 \Phi_{12} \bar{\beta}
 -  \Phi_{02} \lambda\nonumber\\
& + \Phi_{20} \bar{\lambda}.
\end{align}
\end{subequations}

\subsubsection*{The fourth conformal field equation (the Bianchi identity)}

\begin{subequations}
\begin{align}
\Delta \phi_{0} - \delta \phi_{1}={}&-2 \phi_{1} \beta
 + 4 \phi_{0} \gamma
 -  \phi_{0} \mu
 + 3 \phi_{2} \sigma
 - 4 \phi_{1} \tau, \label{eq:CFE4a}\\ 
\Delta \phi_{1}- \delta \phi_{2}={}&2 \phi_{1} \gamma
 - 2 \phi_{1} \mu
 + 2 \phi_{3} \sigma
 - 3 \phi_{2} \tau,\\
\Delta \phi_{2}-\delta \phi_{3}={}&2 \phi_{3} \beta
 - 3 \phi_{2} \mu
 + \phi_{4} \sigma
 - 2 \phi_{3} \tau,\\
\Delta \phi_{3}- \delta \phi_{4}={}&4 \phi_{4} \beta
 - 2 \phi_{3} \gamma
 - 4 \phi_{3} \mu
 -  \phi_{4} \tau,\\
D \phi_{1}- \bar\delta \phi_{0}={}&-4 \phi_{0} \alpha
 + \phi_{0} \pi
 + 4 \phi_{1} \rho,\\
D \phi_{2}- \bar\delta \phi_{1}={}&-2 \phi_{1} \alpha
 -  \phi_{0} \lambda
 + 2 \phi_{1} \pi
 + 3 \phi_{2} \rho,\\
D \phi_{3}- \bar\delta \phi_{2}={}&
 - 2 \phi_{1} \lambda
 + 3 \phi_{2} \pi
 + 2 \phi_{3} \rho,\\
D \phi_{4}- \bar\delta \phi_{3}={}&2 \phi_{3} \alpha
 - 3 \phi_{2} \lambda
 + 4 \phi_{3} \pi
 + \phi_{4} \rho.\label{eq:CFE4b}
\end{align}
\end{subequations}

\bibliographystyle{reporthack}

\end{document}